\newtheorem{theorem}{Theorem}[section]
\newtheorem{corollary}[theorem]{Corollary}
\newtheorem{lemma}[theorem]{Lemma}
\newtheorem{proposition}[theorem]{Proposition}
\theoremstyle{definition}
\newtheorem{definition}[theorem]{Definition}
\newtheorem{remark}[theorem]{Remark}
\newcommand{\D}{\mathrm{d}}
\newcommand{\E}{\mathrm{e}}
\newcommand{\ds}{\mathrm{d}s}
\newcommand{\dt}{\mathrm{d}t}
\newcommand{\dr}{\mathrm{d}r}
\newcommand{\dx}{\mathrm{d}x}
\newcommand{\dy}{\mathrm{d}y}
\newcommand{\half}{\frac{1}{2}}
\newcommand{\norm}[1]{\left\|#1\right\|}
\newcommand{\notthis}[1]{}
\newcommand{\DD}{\mathbb{D}}
\newcommand{\EE}{\mathbb{E}}
\newcommand{\NN}{\mathbb{N}}
\newcommand{\PP}{\mathbb{P}}
\newcommand{\RR}{\mathbb{R}}
\newcommand{\TT}{\mathbb{T}}
\newcommand{\mf}{\mathfrak{m}}
\newcommand{\Mf}{\mathfrak{M}}
\newcommand{\ut}{\mathfrak{u}}
\newcommand{\fp}{\mathfrak{p}}
\newcommand{\fq}{\mathfrak{q}}
\newcommand{\Kf}{\mathfrak{K}}
\newcommand{\Gf}{\mathfrak{G}}
\newcommand{\Hf}{\mathfrak{H}}
\newcommand{\Cc}{\mathcal{C}}
\newcommand{\Ee}{\mathcal{E}}
\newcommand{\Ll}{\mathcal{L}}
\newcommand{\Nn}{\mathcal{N}}
\newcommand{\Ff}{\mathcal{F}}
\newcommand{\Ii}{\mathcal{I}}
\newcommand{\Ss}{\mathcal{S}}
\newcommand{\Df}{\mathrm{D}}
\newcommand{\xx}{\mathbf{x}}
\newcommand{\mb}{\mathbf{m}}
\newcommand{\BS}{\mathrm{BS}}
\newcommand{\BSB}{\overleftarrow{\mathrm{BS}}}
\newcommand{\BST}{\widetilde{\mathrm{BS}}}
\newcommand{\VIX}{\mathrm{VIX}}
\newcommand{\phib}{\boldsymbol\phi}
\newcommand{\Jb}{\boldsymbol J}
\newcommand{\rhob}{\boldsymbol\rho}
\newcommand{\Wb}{\mathbf{W}}
\newcommand{\Thb}{\boldsymbol\Theta}
\newcommand{\BSh}{\widehat{\BS}}
\newcommand{\Ph}{\widehat{\mathrm{P}}}
\newcommand{\lef}{\leftarrow}
\newcommand{\at}{\widetilde\alpha}
\newcommand{\bt}{\widetilde\beta}
\newcommand{\Hm}{H_-}
\newcommand{\Hp}{H_+}
\newcommand{\wf}{\mathfrak{w}}
\newcommand{\chib}{\overline{\chi}}
\newcommand{\rrho}{\overline{\rho}}
\newcommand{\ep}{\varepsilon}
\newcommand{\dto}{\downarrow}
\newcommand{\epst}{\ep_{1}}
\newcommand{\epsz}{\ep_{0}}
\newcommand{\ind}{1\hspace{-2.1mm}{1}}
\newcommand{\pd}{\partial}
\newcommand{\pdx}{\pd_{x}}
\newcommand{\pdxx}{\pd_{xx}}
\newcommand{\pds}{\pd_{s}}
\newcommand{\pdt}{\pd_{t}}
\newcommand{\pdu}{\pd_{u}}
\newcommand{\pdsi}{\pd_{\sigma}}
\newcommand{\pdy}{\pd_{y}}
\newcommand{\pdxy}{\pd_{xy}}
\newcommand{\pdk}{\pd_{k}}
\newcommand{\pdxk}{\pd_{xk}}
\newcommand{\dxf}{\pdx f}
\newcommand{\dxxf}{\pdxx f}
\newcommand{\HBarlg}{$\boldsymbol{(\overline{\mathrm{H}}}^{\lambda\gamma}\boldsymbol{)}$}
\newcommand{\HBarl}{$\boldsymbol{(\overline{\mathrm{H}}}^{\lambda}\boldsymbol{)}$}
\newcommand{\HAll}{$\boldsymbol{(\mathrm{H}_{12345})}$}
\newcommand{\HOT}{$\boldsymbol{(\mathrm{H}_{123})}$}
\newcommand{\Hl}{$\boldsymbol{(\mathrm{H}}_{\boldsymbol{67}}^{\lambda}\boldsymbol{)}$}
\newcommand{\Hg}{$\boldsymbol{(\mathrm{H}}_{\boldsymbol{89}}^{\gamma}\boldsymbol{)}$}
\newcommand{\HOne}{$\boldsymbol{(\mathrm{H}_{1})}$}
\newcommand{\HTwo}{$\boldsymbol{(\mathrm{H}_{2})}$}
\newcommand{\HThree}{$\boldsymbol{(\mathrm{H}_{3})}$}
\newcommand{\HFour}{$\boldsymbol{(\mathrm{H}_{4})}$}
\newcommand{\HFive}{$\boldsymbol{(\mathrm{H}}_{\boldsymbol{5}}\boldsymbol{)}$}
\newcommand{\HSix}{$\boldsymbol{(\mathrm{H}}_{\boldsymbol{6}}^{\lambda}\boldsymbol{)}$}
\newcommand{\HSeven}{$\boldsymbol{(\mathrm{H}}_{\boldsymbol{7}}^{\lambda}\boldsymbol{)}$}
\newcommand{\HEight}{$\boldsymbol{(\mathrm{H}}_{\boldsymbol{8}}^{\gamma}\boldsymbol{)}$}
\newcommand{\HNine}{$\boldsymbol{(\mathrm{H}}_{\boldsymbol{9}}^{\gamma}\boldsymbol{)}$}
\newcommand{\COne}{$\boldsymbol{(\mathrm{C}_{1})}$}
\newcommand{\CTwo}{$\boldsymbol{(\mathrm{C}_{2})}$}
\newcommand{\CThree}{$\boldsymbol{(\mathrm{C}_{3})}$}
\newcommand{\Ffour}{$\boldsymbol{(\mathrm{C}_{4})}$}
\newcommand{\CBar}{$\boldsymbol{(\overline{\mathrm{C}}}\boldsymbol{)}$}
\begin{document}
\title{Rough multifactor volatility for SPX and VIX options}

\author{Antoine Jacquier}
\address{Department of Mathematics, Imperial College London and the Alan Turing Institute}
\email{a.jacquier@imperial.ac.uk}
\author{Aitor Muguruza}
\address{Department of Mathematics, Imperial College London and Kaiju Capital Management}
\email{aitor.muguruza-gonzalez15@imperial.ac.uk}
\author{Alexandre Pannier}
\address{Department of Mathematics, Imperial College London}
\email{a.pannier17@imperial.ac.uk}

\date{\today}
\subjclass[2010]{60G15, 60G22, 60H07, 91G20}

\keywords{Rough volatility, multi-factor, asymptotics, VIX, Malliavin calculus}
\thanks{AP acknowledges financial support from the EPSRC CDT in Financial Computing and Analytics.
AP and AJ are supported by an EPSRC EP/T032146/1 grant.}

\begin{abstract}
We provide explicit small-time formulae for the at-the-money implied volatility, skew and curvature 
in a large class of models, including rough volatility models and their multi-factor versions. 
Our general setup encompasses both European options on a stock and VIX options, 
thereby providing new insights on their joint calibration.
The tools used are essentially based on Malliavin calculus for Gaussian processes.
We develop a detailed theoretical and numerical analysis of the two-factor rough Bergomi model
and provide insights on the interplay between the different parameters for joint SPX-VIX smile calibration.
\end{abstract}

\maketitle

\tableofcontents

\section{Introduction}

Exposure to the uncertain dynamics of volatility is a desirable feature of most trading strategies and has naturally generated wide interest in volatility derivatives. 
From a theoretical viewpoint, an adequate financial model should reproduce the volatility dynamics accurately and consistently with those of the asset price; any dicrepancy may otherwise lead to arbitrage opportunity. 
Despite extensive research, implied volatility surfaces from options on the VIX and the S\&P 500 index 
still display discrepancies, betraying the lack of a proper modelling framework. 
This issue is well-known as a the \emph{SPX-VIX joint calibration problem} 
and has motivated a number of creative modelling innovations in the past fifteen years. 
Reconciling both implied volatilities requires additional factors to enrich the variance curve dynamics, 
as argued by Bergomi~\cite{Bergomi05,Bergomi08} 
where he proposed the multi-factor model
\begin{equation}\label{eq:Bergomi}
\frac{\D \xi_t(T)}{\xi_t(T)}=  \sum_{i=1}^{N}c_i \E^{-\kappa_i(T-t)} \D W_t^i,
\qquad 0\leq t\leq T,
\end{equation}
for the forward variance, with~$W^1,\ldots, W^N$ correlated Brownian motions
and~$c_1,\kappa_1,\ldots, c_N, \kappa_N>0$.
Gatheral~\cite{Gatheral08} recognised the importance of the additional factor 
to disentangle different aspects of the implied volatility 
and to allow humps in the variance curve, 
and introduced a mean-reverting version--double CEV model--where the instantaneous mean of the variance follows a CEV model itself. 
Although promising, these attempts fell short of reproducing jointly the short-time behaviour of the SPX and VIX implied volatilities.
A variety of new models were suggested to tackle this issue, both with continuous paths~\cite{BNP18,FS18,GIP17} and with jumps~\cite{BB14,CM14,CK13,KS15,PPR18,PS14}, incorporating novel ideas and increased complexity such as regime switching volatility dynamics.
Model-free bounds were also obtained in~\cite{DMHL15,GMN17,Guyon20,Papa14}, shedding light on the links between VIX and SPX and the difficulty of capturing them both simultaneously.

Getting rid of the restraining Markovian assumption that burdens classical stochastic volatility models has permitted the emergence of rough volatility models, 
which consistently agree with stylised facts under both the historical and the pricing measures~\cite{ALV07,BFG16,BLP16, Fukasawa11,GJR14}. 
A large portion of the toolbox developed for Markovian diffusion models is not available any longer
and asymptotic methods thus play a prominent role in understanding the theoretical properties of these models~\cite{GJRS18,HJL19,JPS18,JP20}.
Since the fit of the spot implied volatility skew is extremely accurate under this class of models~\cite{GJR14}, 
it seems reasonable to expect good results when calibrating VIX options. 
Moreover, the newly established hedging formula by Viens and Zhang~\cite{VZ18} shows that a rough volatility market is complete if it also contains a proxy of the volatility of the asset; this acts as an additional motivation for our work.
Still,~\cite{JMM17} showed that the rough Bergomi model is too close to lognormal to jointly calibrate both markets. Its younger sister~\cite{HJT18} added a stochastic volatility of volatility component, 
generating a smile sandwiched between the bid-ask prices when calibrating VIX, but the joint calibration is not provided. By incorporating a Zumbach effect, the quadratic rough Heston model~\cite{GJR20} achieves good results for the joint calibration at one given date. Further numerical methods were developed in~\cite{BCJ21,BDM21,RZ21}. However, the lack of analytical tractability of rough volatility models is holding back the progress of theoretical results on the VIX, with the notable exception of large devations results from~\cite{FGS21,LMS19} and the small-time asymptotics of~\cite{AGM18}.

In the latter, $\Ff_T$-measurable random variables (with volatility derivatives in mind) are written in the form of exponential martingales thanks to the Clark-Ocone formula, allowing the application of established asymptotic methods from~\cite{ALV07}. 
An expression for the short-time limit at-the-money (ATM) implied volatility skew is derived, yielding an analytical criterion that a model should satisfy to reproduce the correct short-time behaviour. The proposed mixed rough Bergomi model does meet the requirement of positive skew of the VIX implied volatility, backing its implementation with theoretical evidence. And indeed, the fits are rather satisfying. This model is built by replacing the exponential kernels of the Bergomi model~\eqref{eq:Bergomi} ($t\mapsto \E^{-\kappa t}$) with fractional kernels of the type~$t\mapsto t^{H-\half}$ with $H\in(0,\half)$, but is limited to a single factor, i.e. $W^1=W^2$. As a result, numerical computations under this model induce a linear smile, or equivalently a null curvature, unfortunately inconsistent with market observations.
To remedy this, we incorporate Bergomi's and Gatheral's insights on multi-factor models, 
(integrated by~\cite{DeMarco,LMS19} into rough volatility models) 
and extend~\cite{AGM18} to the multi-factor case;
we also compute the short-time ATM implied volatility curvature, deriving a second criterion for a more accurate model choice. 
In summary, the present paper goes beyond~\cite{AGM18} for three reasons:
\begin{itemize}
\item We consider multifactor models, far more efficient for VIX calibration, which complicate the analysis;
\item We compute the second derivative of the implied volatility to discriminate better between models; this turns out to be highly more technical than the skew;
\item We provide detailed proofs of all of our results at three levels: abstract model, generic rough volatility model for the VIX, and two-factor rough Bergomi model, checking carefully that all the assumptions are satisfied, proving technical lemmas applicable to our setting and exhibiting definite formulas at all three levels of generality.
\end{itemize}

We gather in Section~\ref{3sec:framework} our abstract framework and assumptions. 
The main results, short-time limits of the implied volatility level, skew, and curvature, 
are contained in Section~\ref{3sec:main}. 
Our framework covers a wide range of underlying assets, in particular VIX (Section~\ref{3sec:VIX}) 
and stock options (Section~\ref{3sec:spot}), 
in particular in Propositions~\ref{prop:genmodel} and~\ref{prop:SPXlimit}. 
We provide further a detailed analysis of the two-factor rough Bergomi model~\eqref{eq:Bergomi}.
Closed-form expressions that depend explicitly on the parameters of the model are provided in Proposition~\ref{prop:expomodel} for the VIX and Corollary~\ref{coro:SPXlimit} for the stock. 
They give insights on the interplay between the different parameters, and make the calibration task easier by allowing to fit some stylised facts prior to performing numerical computations. For instance, different combinations of parameters can yield positive or negative curvature.
All the proofs are gathered in the appendices, starting with useful lemmas and then following the order of the sections.

\vspace{0.2cm}
\textbf{Notations.}
For an integer~$N\in\NN$ and 
a vector $\xx\in\RR^N$, we denote $|\xx| := \sum_{i=1}^{N}x_i$ and
$\norm{\xx}^2 := \sum_{i=1}^{N}x_i^2$. 
We fix a finite time horizon~$T>0$ and let $\TT:=[0,T]$. For all~$p\ge1$, $L^p$ stands 
for the space~$L^p(\Omega)$
for some reference sample space~$\Omega$. 
As we consider rough volatility models, the Hurst parameter $H\in (0,\half)$ is a fundamental quantity
and we shall write $\Hp:=H+\half$ and $\Hm:=H-\half$.

\section{Framework}\label{3sec:framework}

We consider a square-integrable strictly positive process~$ (A_t)_{t\in\TT}$, adapted to the natural filtration~$ (\Ff_t)_{t\in\TT}$ 
of an $N$-dimensional Brownian motion $\Wb=(W^1,...,W^N)$ defined on a probability space 
$(\Omega,\Ff,\PP)$. 
We further introduce the true $(\Ff_t)_{t\in\TT}$-martingale conditional expectation process
$$
M_{t}:=\EE_t[A_T]:=\EE[A_T| \Ff_t], \qquad\text{for all  } t\in\TT.
$$

The set $\DD^{1,2}$ will denote the domain of the Malliavin derivative operator $\Df$ with respect to the Brownian motion $\Wb$, while~$\Df^i$ indicates the Malliavin derivative operator with respect to~$W^i$.
It is well known that $\DD^{1,2}$ is a dense subset of 
$L^{2}(\Omega)$ and that $\Df$ is a closed and unbounded operator from 
$L^{2}(\Omega)$ into $L^{2}(\TT\times\Omega)$. 
Analogously we define the sets of Malliavin differentiable processes~$\mathbb{L}^{n,2}:=L^{2}(\TT;\DD^{n,2})$. We refer to~\cite{Nualart06} for more details on Malliavin calculus.
Assuming~$A_T\in\DD^{1,2}$, the Clark-Ocone formula~\cite[Theorem 1.3.14]{Nualart06} reads,
for each $t\in\TT$,
\begin{equation}
\label{eq:MRT}
M_{t}
= \EE[M_{t}] + (\mb\bullet\Wb)_t
:= \EE[M_t] + \sum_{i=1}^N \int_0^t m^i_s \D W^i_s,
\end{equation}
where each component of~$\mb$ is
$m^i_s:=\EE\left[\Df^{i}_s A_T |\Ff_s\right]$.
Since~$M$ is a martingale, we may rewrite~\eqref{eq:MRT} as
\begin{equation}\label{eq:MRT2}
M_{t} = M_{0} + (M\phib\bullet \Wb)_t,
\end{equation}
where 
$\phib_{s} := \mb_s / M_s$ is defined whenever~$M_s\neq0$ almost surely. 
If $\phib=(\phi^1,...,\phi^N)$ belongs to~$\mathbb{L}^{n,2}$, then the following processes are well defined for all~$t<T$:
\begin{align}\label{eq:defprocesses}
Y_t&:=\int_t^T \norm{\phib_r}^2 \dr, \qquad \ut_{t} := \sqrt{Y_t}, \qquad u_{t} := \frac{\ut_t}{\sqrt{T-t}} ;\\
\Theta^i _{t}&:=\left(\displaystyle \int_{t}^{T}\Df^{i}_{t} \norm{\phib_r}^2 \D r\right)\phi^i_{t}, \qquad \mathrm{and} \qquad |\Thb| := \sum_{i=1}^n \Theta^i.
\end{align}
Note that all the processes depend implicitly on~$T$, 
which will be crucial when we study the short-time limit as~$T$ tends to zero.

\subsection{Level, skew and curvature}
Since~$M$ is a strictly positive martingale process, we can use it as an underlying to introduce options. A standard practice is to work with its logarithm~$\Mf :=\log(M)$, so that
$\Mf_T = \log\EE_T[A_T] = \log(A_T)$ and $\Mf_0 = \log\EE[A_T]$.
Under no-arbitrage arguments, the price~$\Pi_t$ at time~$t$ of a European Call option with maturity~$T$ and log-strike~$k\geq 0$ 
is equal to
\begin{equation}\label{eq:PriceVt}
\Pi_t(k) := \EE_t\left[\left(M_T-\E^k\right)^+\right] = \EE_t\left[\left(A_T-\E^k\right)^+\right],
\end{equation}
and the at-the-money value is denoted by $\Pi_t:=\Pi_t(\Mf_0) = \EE_t[(A_T - M_t)^+]$. 
We adapt the usual definitions of at-the-money implied volatility level, skew and curvature for the case where the underlying is a general process 
(later specified for the VIX and the S\&P).
Denote by $\BS(t,x,k,\sigma)$ the Black-Scholes price of a European Call option
at time $t\in\TT$, with maturity~$T$, log-stock~$x$, log-strike~$k$ and volatility~$\sigma$.
Its closed-form expression reads
\begin{equation}\label{eq:BSFormula}
\BS(t,x,k,\sigma )=
\left\{
\begin{array}{ll}
\E^{x}\Nn(d_{+}(x,k,\sigma ))-\E^{k}\Nn(d_{-}(x,k,\sigma )), & \text{if }\sigma\sqrt{T-t}>0,\\
\left(\E^x - \E^k\right)^+, & \text{if }\sigma\sqrt{T-t}=0,
\end{array}
\right.
\end{equation}
with
$d_{\pm }(x,k,\sigma) :=\frac{x-k}{\sigma \sqrt{T-t}}\pm 
\frac{\sigma \sqrt{T-t}}{2}$,
where $\Nn$ denotes the Gaussian cumulative distribution function.

\begin{definition}\label{def:ATMI skew}\ 
\begin{itemize}
\item For any $k\in\RR$, the implied volatility~$\Ii_{T}(k)$ is the unique non-negative solution to
$\Pi_0(k)=\BS\big(0,\Mf_0, k, \Ii_{T}(k)\big)$; 
we omit the $k$-dependence when considering it at-the-money ($k=\Mf_0$).    
\item The at-the-money implied skew~$\Ss$ 
and curvature~$\Cc$ at time zero are defined as
	$$
	\Ss_{T}:=\left|\pd_{k} \Ii_{T}(k)\right|_{k=\Mf_0}
	\qquad\text{and}\qquad
	\Cc_{T}:= \left|\pdk^2 \Ii_{T}(k)\right|_{k=\Mf_0}.
	$$ 
\end{itemize}
\end{definition}

\subsection{Examples}\label{subsec:examples}
The framework~\eqref{eq:MRT2} encompasses a large class of models, 
including stochastic volatility models ubiquitous in quantitative finance.
Consider a stock price process~$(S_t)_{t\in\TT}$, satisfying
$$
\frac{dS_t}{S_t} = \sqrt{v_t}\,\D B_t = \sqrt{v_t}\sum_{i=1}^N \rho_i\,\D W^i_t, 
$$
where~$v$ is a stochastic process adapted to~$(\Ff_t)_{t\in\TT}$, $\rhob:=(\rho_1,\cdots,\rho_N)\in[-1,1]^N$ 
with $\rhob\rhob^\top =1$. 

\subsubsection{Asset price}\label{ex:assetprice}
For $N=2$, the model~\eqref{eq:MRT2} corresponds to a one-dimensional stochastic volatility model by identifying~$A=M=S$, $\phi^1= \rho_1 \sqrt{v}$ and $\phi^2 = \rho_2 \sqrt{v}$, and $v$ is a process driven by $W^1$.
Our analysis generalises~\cite[Equation (2.1)]{ALV07} to the multi-factor case (in the continuous-path case). 
We refer to Section~\ref{3sec:spot} for the details in the multi-factor setting and the analysis of the implied volatility.

\subsubsection{VIX}\label{ex:VIX}
The VIX is defined as
$\VIX_{T}=\sqrt{\frac{1}{\Delta}\int_T^{T+\Delta}\EE_T[v_t ] \dt}$, where~$\Delta$ is one month.
The representation~\eqref{eq:MRT} yields that the underlying is the VIX future
$$
M_{t}^{\VIX}:=\EE_t[\VIX_T]=\EE[\VIX_T]+(\mb\bullet\Wb)_t,
\qquad\text{with}\qquad
m^i_s=\frac{1}{2\Delta}\EE_s\left[ \frac{1}{\VIX_T}\int_T^{T+\Delta}\Df^{i}_s v_{r} \dr\right].
$$
\subsubsection{Asian options}
For Asian options, the process of interest is 
$\mathcal{A}_T:=\frac{1}{T}\int_0^T S_t \dt$.
Using~\eqref{eq:MRT} we find
$$
M^{\mathcal{A}}_{t}:=\EE_t[\mathcal{A}_T]=\EE[\mathcal{A}_T] + (\mb\bullet\Wb)_t,
\qquad\text{with}\qquad
m^i_s=\frac{1}{T} \int_s^T \EE_s[\Df^{i}_s S_{r}] \dr.
$$
\subsubsection{Multi-factor rough Bergomi}\label{ex:multirB}
Rough volatility models can be written as $v_t=f(\Wb^H_t)$, where~$\Wb^H$ 
is an $N$-dimensional fractional Brownian motion with correlated components and~$f:\RR^N\to\RR$. For instance in the two-factor rough Bergomi model,
\[
v_t = v_0 \left( \chi \exp\left\{\nu W^{1,H}_t - \frac{\nu^2}{2} \EE\left[\left(W^{1,H}_t\right)^2\right]\right\}
+(1- \chi) \exp\left\{\eta W^{2,H}_t - \frac{\eta^2}{2} \EE\left[\left(W^{2,H}_t\right)^2\right]\right\}\right),
\]
with $\chi \in (0,1)$, $\nu, \eta, v_0>0$.
In Example~\ref{ex:VIX} we set $A=\VIX$ and hence $N=2$, but in the asset price case we set~$A=S$ and therefore $N=3$ even though the variance only depends on two factors.

\subsection{General assumptions}
We introduce the following broad assumptions, key for the whole analysis, and provide in Section~\ref{3sec:VIX} sufficient conditions to simplify them in the VIX case.
\begin{description}
	\item[\HOne] \label{H1}  $A\in \mathbb{L}^{4,p}$. 
	\item[\HTwo] \label{H2} $\displaystyle\frac{1}{M_t}\in L^p$, for all $p>1$, and all~$t\in\TT$.
	\item[\HThree] \label{H3} The term 
	$\displaystyle \EE_t\left[ \int_{t}^{T}\frac{|\Thb_{s}|}{\ut_{s}^2}\ds\right]$ 
	is well defined for all~$t\in\TT$.
	\item[\HFour] \label{H4} The term~$\displaystyle \frac{1}{\sqrt{T}} \EE\left[ \int_0^T \frac{|\Thb_s|}{\ut_s^2} \ds\right]$ tends to zero as~$T$ tends to zero. 
	
	\item[\HFive] \label{H5} There exists~$p\ge1$ such that~$\sup_{T\in[0,1]}\ut_0^p<\infty$ almost surely and, for all random variables~$Z\in L^p$ and all $i\in\llbracket1,N\rrbracket$, 
	the following terms are well defined and tend to zero as $T$ tends to zero:
	$$
	\int_0^T \EE\left[Z \left( \EE_s \left[ \frac{1}{u_0} \int_0^T \Df^i_s \norm{\phib_r}^2 \dr \right]\right)^2\right] \ds.
	$$
\end{description}
There exists~$\lambda\in(-\half,0]$ such that:
\begin{description}
	\item[\HSix] These expressions converge to zero as~$T$ tends to zero:
	$$
	\frac{1}{T^{\half+\lambda}} \EE\left[\int_0^T \frac{|\Thb_s|\int_s^T|\Thb_r|\dr}{\ut_{s}^{6}}\ds\right] \quad \text{and} \quad
	\frac{1}{T^{\half+\lambda}}\EE\left[\int_0^T \frac{1}{\ut_{s}^{4}} \sum_{k=1}^N \left\{ \phi^k_s  \Df^k_s \left( \int_s^T |\Thb_r| \dr \right) \right\} \ds\right].$$
	\item[\HSeven] The random variable~$\Kf_T:=\displaystyle \frac{\int_0^T |\Thb_s|\ds}{T^{\half+\lambda} \ut_{0}^{3}}$ is such that $\EE[\ut_0^2 \Kf_T]$ tends to zero and~$\EE[\Kf_T]$ has a finite limit as~$T$ tends to zero.
\end{description}
There exists~$\gamma\in(-1,0]$ such that:
\begin{description}
	\item[\HEight]  The following expressions converge to zero as~$T$ tends to zero:
	\begin{align*}
	& \frac{1}{T^{\half+\gamma}}\EE \left[ \int_0^T \ut_{s}^{-10} |\Thb_s|
	\left( \int_s^T |\Thb_r|\left( \int_r^T |\Thb_{y}|\dy \right) \dr \right) \ds
	\right], \\ 
	& \frac{1}{T^{\half+\gamma}} \EE \left[ \int_0^T \ut_{s}^{-8} \sum_{j=1}^N \left( \phi^j_s \Df^j_s \left( \int_s^T |\Thb_r| \left( \int_r^T |\Thb_{y}|\dy \right) \dr \right) \ds \right) \ds \right], \\ 
	& \frac{1}{T^{\half+\gamma}} \EE \left[ \int_0^T \ut_{s}^{-8}|\Thb_s|
	\int_s^T \sum_{j=1}^N \left\{ \phi^j_r  \Df^j_r \left( \int_r^T |\Thb_{y}|\dy \right) \right\} \dr \ds \right], \\  
	& \frac{1}{T^{\half+\gamma}} \EE \left[ \int_0^T \ut_{s}^{-6 } \sum_{k=1}^N \left\{ \phi^k_s \Df^k_s \left( \int_s^T \sum_{j=1}^N \left\{ \phi^j_r \, \Df^j_r \left( \int_r^T |\Thb_{y}| \dy \right) \right\} \dr \right) \right\} \ds \right]. 
	\end{align*}
	\item[\HNine]  The random variables
	$$
	\Hf_T^1 := \frac{1}{T^{\half+\gamma} \ut_{0}^{7}}\int_0^T |\Thb_s| \left(\int_s^T |\Thb_r|\dr\right)\ds
	\quad\text{and}\quad
	\Hf_T^2 := \frac{1}{T^{\half+\gamma} \ut_{0}^{5}} \int_0^T \sum_{j=1}^N \left\{ \phi^j_s \Df^j_s \left( \int_s^T |\Thb_{r}| \dr \right) \right\} \ds,
	$$
	are such that~$\EE\big[(\ut_0^6+\ut_0^4+\ut_0^2) \Hf^1_T + (\ut_0^4+\ut_0^2) \Hf_T^2 \big]$ tend to zero and both~$\EE[\Hf^1_T]$ and~$\EE[\Hf^2_T]$ have a finite limit as~$T$ tends to zero.
\end{description}
\begin{remark}\label{rem:ConditionsH}\
\begin{itemize}
\item  {\HOne} requires~$A$ to be four times Malliavin differentiable. 
This is necessary to prove the curvature formula
using the Clark-Ocone formula~\eqref{eq:MRT} and three times the anticipative It\^o formula.
\item 
When the underlying is the stock price (as in Section~\ref{ex:assetprice}), it satisfies equation~\eqref{eq:MRT2} where~$\phi$ corresponds to its volatility $\sqrt{v}$. One can then directly make assumptions on the variance process, as in~\cite{AL17,ALV07,AS19}. We make this explicit in Proposition~\ref{prop:SPXlimit} for example.
In the case of the VIX (Section~\ref{sec:GenVolModel}), $\phi$ is much more intricate which is why we refrain ourselves from doing the same. Nevertheless, sufficient conditions are given by {\CBar}.
\end{itemize}
\end{remark}


\section{Main results} \label{3sec:main}
We gather here our main asymptotic results for our general framework above,
with the proofs postponed to Appendix~\ref{app:proofsmain} to ease the flow. 
The first theorem states that the small-time limit of the implied volatility is equal to the limit of the forward volatility. 
This is well known for Markovian stochastic volatility models in~\cite{AS19,BBF04} and in a one-factor setting~\cite{AGM18}.
To streamline the call to the assumptions, we shall group them using mixed subscript notations, for example {\HOT} corresponds to {\HOne-\HTwo-\HThree}
and we further write {\HBarl} to mean {\HAll-\Hl} and {\HBarlg} as short for \HAll-\Hl-\Hg.

\begin{theorem}
	\label{thm:level}
	If {\HAll} hold,
	then
	$$
	\lim_{T\downarrow 0} \Big(\Ii_{T} - \EE[u_{0}]\Big)=0.
	$$
\end{theorem}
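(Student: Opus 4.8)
The plan is to identify the small-time limit of the implied volatility $\Ii_T$ with that of the Black-Scholes-normalised option price, and then to show the latter coincides with $\EE[u_0]$ via a Malliavin-calculus expansion of the price $\Pi_0$. First I would recall that, by the definition of implied volatility, $\Pi_0 = \BS(0,\Mf_0,\Mf_0,\Ii_T)$, and that at the money the Black-Scholes price has the well-known expansion $\BS(0,x,x,\sigma) = \E^x\big(2\Nn(\tfrac{\sigma\sqrt T}{2})-1\big) = \E^{\Mf_0}\big(\tfrac{\sigma\sqrt T}{\sqrt{2\pi}} + O(\sigma^3 T^{3/2})\big)$ as $\sigma\sqrt T\to 0$. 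Hence it suffices to show that $\Pi_0 / (M_0\sqrt T)$ converges to $\EE[u_0]/\sqrt{2\pi}$, since inverting the Black-Scholes formula is continuous and monotone in $\sigma$ (the map $\sigma\mapsto\BS$ is strictly increasing), so the convergence of normalised prices transfers to convergence of $\Ii_T$, provided one also checks $\Ii_T\sqrt T\to 0$ a priori — which itself follows from a crude bound $\Pi_0 \le \EE[(A_T-M_0)^+] \le \EE|A_T - M_0|$ tending to zero under the integrability in {\HOne}-{\HTwo}.

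The core step is the price expansion. Applying the Clark-Ocone representation \eqref{eq:MRT2}, the log-price $\Mf_t = \log M_t$ satisfies $\D\Mf_t = -\tfrac12\norm{\phib_t}^2\dt + \sum_i \phi^i_t\,\D W^i_t$, so conditionally the terminal variance is governed by $Y_0 = \int_0^T\norm{\phib_r}^2\dr = \ut_0^2$. The idea, following \cite{ALV07,AGM18}, is to write $\Pi_0 = \EE\big[\BS(0,\Mf_0,\Mf_0,u_0)\big] + (\text{correction})$ by conditioning: if $\phib$ were deterministic, $M_T$ would be lognormal with total variance $\ut_0^2$ and the price would be exactly $\EE[\BS(0,\Mf_0,\Mf_0,u_0)]$; the correction term is controlled by the anticipative Itô formula applied to $\BS(t,\Mf_t,\Mf_0,u_t)$ along $t\in[0,T]$, where the Malliavin derivative $\Df^i_s$ of the running variance produces exactly the terms $\Thb$ appearing in the assumptions. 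The Gamma-Vega-type cancellation in the Black-Scholes PDE kills the leading-order drift, leaving a remainder whose size is governed by $\EE\big[\int_0^T |\Thb_s|/\ut_s^2\,\ds\big]$; after dividing by $\sqrt T$ this is precisely what {\HFour} forces to zero, while {\HThree} guarantees the relevant conditional expectations are well defined and {\HFive} handles the boundary/localisation terms arising when $u_s$ degenerates near $s=T$.

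Finally, from $\Pi_0 = \EE[\BS(0,\Mf_0,\Mf_0,u_0)] + o(\sqrt T)$ and the ATM Black-Scholes expansion, one gets $\Pi_0 = \tfrac{M_0\sqrt T}{\sqrt{2\pi}}\EE[u_0] + o(\sqrt T)$ (using {\HFive}'s uniform bound $\sup_{T\le1}\ut_0^p<\infty$ to justify dominated convergence when passing the expansion through the expectation), and comparing with the same expansion for the left-hand side $\BS(0,\Mf_0,\Mf_0,\Ii_T)$ yields $\Ii_T = \EE[u_0] + o(1)$, i.e. the claim. The main obstacle is the rigorous control of the anticipative Itô remainder: the integrand $\BS$ is only smooth away from $\sigma\sqrt{T-t}=0$, so one must work on $[0,T-\delta]$, bound the contribution on $[T-\delta,T]$ separately using the integrability hypotheses, and then let $\delta\downarrow0$; the second derivatives of $\BS$ in $\sigma$ blow up like $(T-t)^{-1}$ near maturity, which is exactly why the hypotheses are phrased with the weights $\ut_s^{-2}$ and why {\HThree}-{\HFive} take the somewhat technical form they do.
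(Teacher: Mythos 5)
Your treatment of the option price itself coincides with the paper's: the decomposition $\Pi_0=\EE\big[\BS(0,\Mf_0,\Mf_0,u_0)\big]+\tfrac12\EE\big[\int_0^T\pdx G(s,\Mf_s,\Mf_0,u_s)|\Thb_s|\ds\big]$ comes from the anticipative It\^o formula (Proposition~\ref{prop:decomposition}), and the correction term is exactly what {\HFour} kills after normalisation by the Vega. The gap is in the transfer from prices to implied volatilities. You Taylor-expand $\BS(0,x,x,\sigma)=\E^x(2\Nn(\sigma\sqrt T/2)-1)$ to first order on both sides and push the expansion through the expectation; the remainders are of order $\ut_0^3$ and $\Ii_T^3T^{3/2}$, so after dividing by the $\sqrt T$ price scale you need $T\,\EE[u_0^3]\to0$ and $T\,\Ii_T^3\to0$. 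Neither is among {\HAll}: the theorem deliberately does not assume $\EE[u_0]$ stays bounded, and a regime such as $u_0\sim T^{-2/5}$ is compatible with $\ut_0\to0$ and with the domination in {\HFive} while making $T\,\EE[u_0^3]$ diverge. Relatedly, {\HFive} is not a domination hypothesis to be invoked ``for dominated convergence'': it is the precise quantitative condition that controls the second-order term $\tfrac12\int_0^T\BSB''(\Lambda_s)(U^i_s)^2\ds$ when It\^o's formula is applied to $\BSB(\Lambda_r)$ with $\Lambda_r:=\EE_r\big[\BS(0,\Mf_0,\Mf_0,u_0)\big]$ --- the squared conditional expectations $\big(\EE_s\big[\tfrac{1}{u_0}\int_0^T\Df^i_s\norm{\phib_r}^2\dr\big]\big)^2$ appearing there are exactly the objects {\HFive} integrates. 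That Clark--Ocone/It\^o step is how the paper bridges $\BSB\big(\EE[\BS(0,\Mf_0,\Mf_0,u_0)]\big)$ and $\EE\big[\BSB(\BS(0,\Mf_0,\Mf_0,u_0))\big]=\EE[u_0]$ without ever expanding the Gaussian distribution function; the remaining term is handled by the fundamental theorem of calculus applied to $t\mapsto\BSB(\Gamma_t)$ along the deterministic path $\Gamma_t$, again with no Taylor remainder in $u_0$. To close your argument you would have to either add a third-moment condition on $u_0$ (not assumed), or replace the expansion step by the inverse-map argument.

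A minor further point: your a priori bound $\Pi_0\le\EE|A_T-M_0|\to0$ does not follow from {\HOne}--{\HTwo} alone, since $A_T$ itself changes with $T$; it is obtained from $\EE[(M_T-M_0)^2]=\EE\big[\int_0^TM_s^2\norm{\phib_s}^2\ds\big]$ combined with the almost sure domination of $\ut_0$ provided by {\HFive}.
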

Note that we did not assume the limit of~$\EE[u_0]$ to be finite.
The proof, in Appendix~\ref{sec:prooflevel}, builds on arguments from ~\cite[Proposition 3.1]{AS19}. 
We then turn our attention to the ATM skew, defined in~\ref{def:ATMI skew}. This short-time asymptotic is reminiscent of~\cite[Proposition 6.2]{ALV07} and~\cite[Theorem 8]{AGM18}. 
\begin{theorem} \label{thm:skew}
	If there exists~$\lambda\in(-\half,0]$ such that {\HBarl} are satisfied, then 
	\begin{equation}\label{eq:skewmain}
	\lim_{T\downarrow 0} \frac{\Ss_T}{T^{\lambda}}
	= \half\lim_{T\downarrow 0 } \EE\left[\frac{1}{T^{\half+\lambda}}\frac{ \int_0^T|\Thb_s|\ds}{\ut_{0}^3}\right].
	\end{equation}
\end{theorem}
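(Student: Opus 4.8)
The plan is to combine an Alòs-type decomposition of the option price with implicit differentiation of the Black--Scholes pricing equation, and then extract the leading small-$T$ term from the at-the-money asymptotics of the BS Greeks.

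\emph{Step 1: decomposition.} Write $\BST(x,k,w)$ for the BS price as a function of log-moneyness and total variance $w$, so that $\BS(s,\Mf_s,k,u_s)=\BST(\Mf_s,k,Y_s)$ and $\partial_w\BST=\half(\pdxx-\pdx)\BST$. Applying the anticipating Itô formula to $s\mapsto\BST(\Mf_s,k,Y_s)$ on $[0,T]$, with $\D\Mf_s=\phib_s\,\D\Wb_s-\half\norm{\phib_s}^2\ds$, $\frac{\D}{\ds}Y_s=-\norm{\phib_s}^2$, Malliavin trace $\Df^i_sY_s=\int_s^T\Df^i_s\norm{\phib_r}^2\dr$ and $\sum_i\phi^i_s\Df^i_sY_s=|\Thb_s|$: the heat relation makes every drift term cancel except the cross term $\partial_{xw}\BST\,|\Thb_s|=\half\pdx(\pdxx-\pdx)\BST\,|\Thb_s|$, and taking expectations (the stochastic integral being a true martingale by {\HOne}--{\HTwo}) gives
$$
\Pi_0(k)=\EE\big[\BS(0,\Mf_0,k,u_0)\big]+R_T(k),
\qquad
R_T(k):=\half\,\EE\left[\int_0^T\pdx(\pdxx-\pdx)\BST(\Mf_s,k,Y_s)\,|\Thb_s|\,\ds\right].
$$
The integrability underlying this step (and Step 3) is what requires $A\in\mathbb{L}^{4,p}$; {\HThree} guarantees the intermediate integrals are well defined.

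\emph{Step 2: implicit differentiation.} Differentiating $\Pi_0(k)=\BS(0,\Mf_0,k,\Ii_T(k))$ in $k$ and evaluating at $k=\Mf_0$ yields $\Ss_T=|\pdk\Pi_0(\Mf_0)-\pdk\BS(0,\Mf_0,\Mf_0,\Ii_T)|/\pdsi\BS(0,\Mf_0,\Mf_0,\Ii_T)$. Inserting the decomposition, replacing each $\pdk\BS$ by $\BS-\pdx\BS$ via the translation identity $(\pdx+\pdk)\BS=\BS$, and using the pricing equation to cancel the level contributions, the numerator becomes $\pdk R_T(\Mf_0)-R_T(\Mf_0)-\E^{\Mf_0}\big(\EE[\Nn(u_0\sqrt T/2)]-\Nn(\Ii_T\sqrt T/2)\big)$. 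Since $\pdsi\BS(0,\Mf_0,\Mf_0,\sigma)=\tfrac{\E^{\Mf_0}\sqrt T}{\sqrt{2\pi}}\E^{-\sigma^2T/8}$, a Taylor expansion of $\Nn$ together with Theorem~\ref{thm:level} ($\EE[u_0]-\Ii_T\to0$) and $\lambda\le0$ show the last term is $o(T^\lambda\,\pdsi\BS)$; and $R_T(\Mf_0)$ is negligible next to $\pdk R_T(\Mf_0)$ (smaller by a factor $\ut_s\le\ut_0\to0$ inside the integral, controlled after division by $T^\lambda$ via {\HFour}--{\HFive}). Hence $\lim_{T\dto0}\Ss_T/T^\lambda=\lim_{T\dto0}|\pdk R_T(\Mf_0)|/(T^\lambda\,\pdsi\BS)$.

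\emph{Step 3: freezing and BS asymptotics.} Repeated use of $(\pdx+\pdk)\BST=\BST$ gives $\pdk[\pdx(\pdxx-\pdx)\BST]=-(\pdxx-\pdx)^2\BST=-4\,\partial_{ww}\BST$, so $\pdk R_T(\Mf_0)=-2\,\EE[\int_0^T\partial_{ww}\BST(\Mf_s,\Mf_0,Y_s)\,|\Thb_s|\,\ds]$. I then \emph{freeze} the Greek at time $0$: a second anticipating Itô expansion of $s\mapsto\partial_{ww}\BST(\Mf_s,\Mf_0,Y_s)$ (again with all drift terms killed by the heat relation) replaces it by $\partial_{ww}\BST(\Mf_0,\Mf_0,Y_0)$ up to correction terms that are precisely the quantities assumed to vanish in {\HSix} (they carry the extra negative powers $\ut_s^{-6},\ut_s^{-4}$ and the factors $|\Thb_s|\int_s^T|\Thb_r|\dr$ and $\sum_k\phi^k_s\Df^k_s(\int_s^T|\Thb_r|\dr)$). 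Plugging in $\partial_{ww}\BST(\Mf_0,\Mf_0,Y_0)=-\tfrac{\E^{\Mf_0}}{4\sqrt{2\pi}}Y_0^{-3/2}\E^{-Y_0/8}(1+Y_0/4)$, $\pdsi\BS(0,\Mf_0,\Mf_0,\Ii_T)=\tfrac{\E^{\Mf_0}\sqrt T}{\sqrt{2\pi}}\E^{-\Ii_T^2T/8}$, and $Y_0=\ut_0^2$ gives
$$
\frac{\Ss_T}{T^\lambda}=\half\,\EE\left[\frac{\E^{-\ut_0^2/8}(1+\ut_0^2/4)}{\E^{-\Ii_T^2T/8}}\,\frac{\int_0^T|\Thb_s|\,\ds}{T^{\half+\lambda}\ut_0^3}\right]+o(1)=\half\,\EE\left[\frac{\E^{-\ut_0^2/8}(1+\ut_0^2/4)}{\E^{-\Ii_T^2T/8}}\,\Kf_T\right]+o(1).
$$

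\emph{Step 4: conclusion.} As $\ut_0\to0$ and $\Ii_T\sqrt T\to0$ (small by {\HFive} and Theorem~\ref{thm:level}), the bracketed prefactor equals $1+O(\ut_0^2)+o(1)$; the term $O(\ut_0^2)\Kf_T$ vanishes in expectation by the assumption $\EE[\ut_0^2\Kf_T]\to0$ of {\HSeven}, the remaining error by Theorem~\ref{thm:level}, and $\EE[\Kf_T]$ converges by {\HSeven}, so $\lim_{T\dto0}\Ss_T/T^\lambda=\half\lim_{T\dto0}\EE[\Kf_T]$, which is~\eqref{eq:skewmain}. The main obstacle is the freezing step: executing the double anticipating Itô expansion with enough integrability (supplied by {\HOne}--{\HTwo} and the four Malliavin derivatives of $A$) and verifying term by term that the corrections are dominated by the {\HSix}--{\HSeven} quantities; everything else is algebraic identities for $\BS$ plus routine Taylor and dominated-convergence estimates controlled by {\HOT}, {\HFour}--{\HFive}.
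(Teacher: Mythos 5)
Your route is essentially the paper's: the anticipating-It\^o price decomposition (Proposition~\ref{prop:decomposition}), implicit differentiation of the Black--Scholes pricing identity, a second anticipating-It\^o pass (this is exactly Lemma~\ref{lemma:ItoProd}) to freeze the relevant Greek at time zero with corrections absorbed by \HSix, and the conclusion from \HSeven. Your algebra via the translation identity and $\partial_{ww}\BST$ reproduces the paper's computation with $L=(\half+\pdk)\half\pdx G$ and yields the same leading constant, so Steps~1, 3 and~4 are sound.

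The one genuine weakness is in Step~2, in how you dispose of the level term $\E^{\Mf_0}\big(\EE[\Nn(u_0\sqrt T/2)]-\Nn(\Ii_T\sqrt T/2)\big)$. A Taylor expansion of $\Nn$ turns this into $\tfrac{\sqrt T}{2\sqrt{2\pi}}\big(\EE[u_0]-\Ii_T\big)$ plus higher order, so after dividing by $T^{\lambda}\pdsi\BS\sim\sqrt T\,T^{\lambda}$ you need $\EE[u_0]-\Ii_T=o(T^{\lambda})$. Theorem~\ref{thm:level} only gives convergence to zero with no rate, so for $\lambda<0$ (which the statement allows) this justification fails as written. The fix is what the paper does implicitly: at the money, $\Nn(\sigma\sqrt T/2)=\half\big(\E^{-\Mf_0}\BS(0,\Mf_0,\Mf_0,\sigma)+1\big)$, so this difference is \emph{exactly} $-\half R_T(\Mf_0)$ by the price decomposition; it therefore merges with your $-R_T(\Mf_0)$ term into a constant multiple of $\EE\big[\int_0^T\pdx G_s\,|\Thb_s|\ds\big]$, whose contribution after freezing and normalising is a multiple of $\EE[\ut_0^2\Kf_T]$ and vanishes by \HSeven. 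The same remark applies to your claim that $R_T(\Mf_0)$ is negligible "via \HFour--\HFive": the hypothesis that actually does this job is again the $\EE[\ut_0^2\Kf_T]\to0$ part of \HSeven. With these attributions repaired the argument coincides with the paper's proof.
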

Note that~\eqref{eq:skewmain} still holds without \HSeven, but in that case both sides are infinite.
In the rough volatility setting of Section~\ref{ex:assetprice} with~$v_t=f(\Wb^H_t)$, $\lambda$ corresponds to~$H-\half$ such that~\eqref{eq:skewmain} matches the slope of the observed ATM skew of SPX implied volatility. We prove this theorem in Appendix~\ref{sec:proofskew}.
We also provide the short-term curvature, proved in Appendix~\ref{sec:proofcurvature}:
\begin{theorem}\label{thm:curvature}
	If there exist~$\lambda\in(-\half,0]$ and~$\gamma\in(-1,\lambda]$ ensuring {\HBarlg}, then
	\begin{align}
	\lim_{T\downarrow 0}\frac{\Cc_T}{T^{\gamma}}
	= \lim_{T\downarrow 0} \frac{1}{T^{\gamma}} \Bigg\{& \Ss_T 
	-\frac{15}{2\sqrt{T}} \EE \left[ \frac{1}{\ut_0^7} \int_0^T |\Thb_r|\left( \int_r^T |\Thb_{y}|\dy \right) \dr \right] \nonumber\\
	&+ \frac{3}{2\sqrt{T}}\EE \left[ \frac{1}{\ut_0^5} \int_0^T \sum_{j=1}^N \left\{ \phi^j_s \Df^j_s \left( \int_s^T |\Thb_{y}|\dy \right) \right\} \ds \right]
	\Bigg\}.
	\label{eq:curvature}
	\end{align}
	The limit still holds without \HNine\, but in that case the second and third term are infinite.
\end{theorem}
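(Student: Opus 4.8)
\textbf{Proof plan for Theorem~\ref{thm:curvature}.}
The plan is to mimic the structure used for the skew in Theorem~\ref{thm:skew}, but pushed one order further in the expansion of the Black--Scholes inverse. Recall that the implied volatility solves $\Pi_0(k) = \BS(0,\Mf_0,k,\Ii_T(k))$. Differentiating this identity once in $k$ and evaluating at $k=\Mf_0$ gives $\Ss_T$ in terms of $\pdk \Pi_0$ and the Black--Scholes Greeks; differentiating twice gives
$$
\Cc_T = \frac{\pdkk\Pi_0(\Mf_0) - \pdkk\BS - 2\,\pdk\pdsi\BS\,\pdk\Ii_T - \pd_{\sigma\sigma}\BS\,(\pdk\Ii_T)^2}{\pdsi\BS},
$$
all Black--Scholes quantities being evaluated at $\sigma=\Ii_T$. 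Using the explicit expressions~\eqref{eq:BSFormula} for the Greeks at the money (where $d_\pm = \pm\half\Ii_T\sqrt T$) one reduces this to an expression involving $\Ss_T$, $\Ii_T$, and $\pdkk\Pi_0(\Mf_0)$. The first step is therefore to expand these Black--Scholes derivatives in powers of $\Ii_T\sqrt T$ and, invoking Theorems~\ref{thm:level} and~\ref{thm:skew} to control $\Ii_T$ and $\Ss_T$, to isolate the leading contribution of $\pdkk\Pi_0(\Mf_0)/T^\gamma$.

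The second, and substantive, step is to compute $\pdkk\Pi_0$. Starting from $\Pi_0(k) = \EE[(A_T - \E^k)^+]$ we differentiate once to obtain $\pdk\Pi_0(k) = -\E^k\,\PP(A_T > \E^k)$, hence the curvature is governed by the density of $A_T$ at $\Mf_0$. Rather than work with the density directly, I would follow~\cite{AGM18} and the skew proof: write $M_T$ via the Clark--Ocone representation~\eqref{eq:MRT2}, apply the anticipative It\^o formula to $\BS(t,\Mf_t,k,\ut_t)$ (note $\ut_t$ is the remaining accumulated volatility), and use that $\BS$ solves the Black--Scholes PDE to kill the drift terms, leaving correction terms built from $\pdx\pdsi\BS$, $\pdxx\pdsi\BS$, etc., paired with $\phib$ and its Malliavin derivatives. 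Iterating this device three times (as flagged in the remark after {\HOne}, which is precisely why four Malliavin derivatives are needed) produces an expansion of $\Pi_0$ whose second $k$-derivative, after the at-the-money Greeks are substituted, yields exactly the two expectation terms
$$
-\frac{15}{2\sqrt T}\EE\!\left[\frac{1}{\ut_0^7}\int_0^T|\Thb_r|\Big(\int_r^T|\Thb_y|\dy\Big)\dr\right]
\quad\text{and}\quad
\frac{3}{2\sqrt T}\EE\!\left[\frac{1}{\ut_0^5}\int_0^T\sum_{j=1}^N\phi^j_s\Df^j_s\Big(\int_s^T|\Thb_y|\dy\Big)\ds\right],
$$
plus $\Ss_T$, plus a remainder. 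Assumptions {\HEight} are tailored so that every one of the higher-order remainder integrals, divided by $T^{\half+\gamma}$, vanishes as $T\downarrow0$; assumptions {\HNine} ensure the two displayed terms above, as well as the powers-of-$\ut_0$ weighted variants arising from Taylor-expanding $\Ii_T^{-j}$ around $\ut_0^{-j}$, have the required finite limits (and vanish when multiplied by the stated powers of $\ut_0$). The last step is to assemble these, dividing by $T^\gamma$ and passing to the limit, using $\gamma\le\lambda$ to guarantee that the $\Ss_T/T^\gamma$ contribution is of the right order.

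The main obstacle is bookkeeping in the triple application of the anticipative It\^o formula: each pass introduces a new layer of Malliavin derivatives acting on $\norm{\phib}^2$ and a new negative power of $\ut_t$ (from differentiating the Gaussian-type Black--Scholes Greeks), and one must carefully track which terms are genuinely of order $T^{\half+\gamma}$ and which are higher-order, matching them one-to-one against the list in {\HEight}--{\HNine}. A second delicate point is the interchange of $\pdkk$ with the expectation and with the It\^o expansion: since $\pdk\Pi_0(k) = -\E^k\PP(A_T>\E^k)$ involves an indicator, one should either work with the representation before differentiating in $k$ (differentiating the smooth $\BS$-expansion term by term and only at the end setting $k=\Mf_0$) or regularise, exactly as in the skew proof; the integrability needed to justify all these exchanges is precisely what {\HOne}--{\HFive} provide. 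Finally, the statement that ``the limit still holds without {\HNine}, with the second and third terms infinite'' follows because {\HEight} alone already forces the remainder to vanish, so the identity~\eqref{eq:curvature} is an equality in $[0,\infty]$ regardless.
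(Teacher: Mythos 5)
Your plan follows essentially the same route as the paper: differentiate the price/skew identity twice in $k$, expand $\pdkk\Pi_0$ by iterating the anticipative It\^o product formula (Lemma~\ref{lemma:ItoProd}) on top of the decomposition of Proposition~\ref{prop:decomposition}, use the derivative bounds on $G$ together with {\HEight} to kill the higher-order remainders, and use {\HNine} plus the explicit at-the-money Greeks to extract $\Ss_T$ and the two displayed terms. The only part left implicit is the actual computation of the ATM partial derivatives of $L$ (the paper's Appendix~\ref{app:partiald}) that produces the constants $-\tfrac{15}{2}$ and $\tfrac{3}{2}$, but the strategy and the role of each assumption are correctly identified.
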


Note that \HSeven\, with~$\lambda\ge\gamma$ guarantees that the first term $T^{-\gamma} \Ss_T$ converges. By Theorem~\ref{thm:skew}, 
	\begin{equation*}
	\lim_{T\downarrow0} \frac{\Ss_T}{T^{\gamma}} =
	\left\{ 
	\begin{array}{ll}
	0, \quad & \text{if } \lambda>\gamma,\\
	\displaystyle\half\lim_{T\downarrow 0 } \EE\left[\frac{1}{T^{\half+\lambda}}\frac{ \int_0^T|\Thb_s|\ds}{\ut_{0}^3}\right], & \text{if } \lambda=\gamma, \\
	+\infty, & \text{if } \lambda<\gamma.
	\end{array}
	\right.
	\end{equation*}
	
\section{Asymptotic results in the VIX case}\label{3sec:VIX}
As advertised, our framework includes the VIX case where
$$ A_T =\VIX_T=\sqrt{\frac{1}{\Delta}\int_T^{T+\Delta}\EE_T[v_r] \dr},$$
for~$v_r\in\DD^{3,2}$ for all~$r\in[0,T+\Delta]$ and we provide simple sufficient conditions for {\HBarlg} to hold. 

\subsection{A generic volatility model}\label{sec:GenVolModel}
Consider the following four conditions which we gather under the notation~\CBar:
there exist $H \in (0,\half)$ and~$X\in L^p$ for all~$p>1$ such that 
\begin{enumerate}
	\item[\COne] For all~$t\ge0$, $\frac{1}{M_t} \le X$ almost surely;
	\item[\CTwo] For all~$i,j,k\in\llbracket1, N\rrbracket$ and~$t\le s\le y\le T \le r$, we have, almost surely
	\begin{itemize} 
		\item $v_r\le X$,
		\item $\Df^i_y v_r \le X (r-y)^{\Hm}$, 
		\item $\Df^j_s \Df^i_y v_r \le X (r-s)^{\Hm} (r-y)^{\Hm}$,
		\item $\Df^k_t \Df^j_s \Df^i_y v_r \le X (r-t)^{\Hm} (r-s)^{\Hm} (r-y)^{\Hm}$;
	\end{itemize}
	\item[\CThree] For all~$p>1$, $\EE[u_s^{-p}]$ is uniformly bounded in~$s$ and~$T$, with~$s\le T$.
	\item[\Ffour]
	For all~$ i,j,k\in\llbracket1, N\rrbracket$ and~$r\ge0$, the mappings~$y\mapsto\Df^i_y v_r$, $s\mapsto \Df_s^j \Df^i_y v_r$, and~$t\mapsto \Df^k_t \Df^j_s \Df^i_y v_r$ are almost surely continuous in a neighbourhood zero.
\end{enumerate}

Recall the notations~$\Hm$ and~$\Hp$ from the introduction.
We compute the level, skew, and curvature of the VIX implied volatility in a model which satisfies the sufficient conditions. 
Let us define the following limits
\begin{equation}\label{eq:defJG}
J_i := \int_0^\Delta \EE[\Df^i_0 v_r] \dr ,\qquad G_{ij} := \int_0^\Delta \EE\big[\Df^j_0 \Df^i_0 v_r \big] \dr, \quad \text{for all } i,j\in\llbracket1,N\rrbracket. 
\end{equation}
\begin{proposition}\label{prop:genmodel}
Under \CBar, the following limits hold:
\begin{equation*}
\begin{array}{rll}
\displaystyle\lim_{T\downarrow 0} \Ii_{T} & = \displaystyle\frac{\norm{\Jb}}{2\Delta\VIX_0^2}, & \text{if }H\in\left(0,\half\right),\\
\displaystyle\lim_{T\downarrow 0} \Ss_T & =  \displaystyle\frac{\sum_{i,j=1}^N J_i J_j \left(G_{ij} - \frac{J_i J_j}{\Delta\VIX_0^2}\right)}{2\norm{\Jb}^3},  & \text{if }H\in\left(0,\half\right),\\
\displaystyle\lim_{T\downarrow 0} \frac{\Cc_T}{T^{3H-\half}}
	& = \displaystyle\frac{2\Delta \VIX_0^{2}}{3\norm{\Jb}^5} \sum_{i,j,k=1}^N J^i J^j J^k \,\lim_{T\dto0} \frac{\int_T^{T+\Delta}  \EE\left[\Df^k_0 \Df^j_0\Df^i_0 v_r\right]\dr}{T^{3H-\half}}, & \text{if }H\in\left(0,\frac{1}{6}\right).
\end{array}
\end{equation*}
\end{proposition}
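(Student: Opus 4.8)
The plan is to verify that the sufficient conditions \CBar\ imply each of the abstract assumptions \HBarlg\ with the specific choices $\lambda = H - \half$ and $\gamma = 3H - \half$, and then read off the three limits from Theorems~\ref{thm:level}, \ref{thm:skew} and \ref{thm:curvature}. The first reduction is to unwind the VIX-specific structure of $\phib$: from Example~\ref{ex:VIX} one has $M_t^{\VIX} = \EE_t[\VIX_T]$ and $m^i_s = \frac{1}{2\Delta}\EE_s[\VIX_T^{-1}\int_T^{T+\Delta}\Df^i_s v_r\,\dr]$, so $\phi^i_s = m^i_s/M^{\VIX}_s$. Using \COne\ and \CTwo\ (the bound $\Df^i_y v_r \le X(r-y)^{\Hm}$), together with $r - s \ge T - s$ for $r \ge T$, one obtains pointwise bounds of the form $|\phi^i_s| \lesssim X'\,(T-s)^{\Hm}$ after integrating $(r-s)^{\Hm}$ over $r \in [T, T+\Delta]$ and noting $\Hm \in (-\half, 0)$ makes that integral finite and bounded. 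Iterating, $\Df^j_s\norm{\phib_r}^2$ and the higher Malliavin derivatives appearing in $\Thb$ inherit analogous power-type bounds in $(T-s)$; this is the mechanical heart of the argument and the place where \CThree\ (uniform negative moments of $u_s$) is used to control the $\ut_s^{-p}$ denominators after taking expectations. I would organise these estimates in the preliminary lemmas the paper refers to, establishing once and for all that each integrand in \HOne--\HNine\ is dominated by a constant times an explicit power of $T$, and then check that the exponent is strictly positive precisely under the stated ranges of $H$.

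Concretely: \HOne\ follows from $\VIX_T \in \DD^{4,2}$, which in turn needs $v_r \in \DD^{3,2}$ (assumed) plus the chain rule through the square root, licensed by \COne\ keeping $\VIX_T$ bounded below; \HTwo\ is immediate from \COne. For \HThree--\HFive\ and \HSix--\HNine\ the recurring mechanism is: $|\Thb_s| \lesssim X (T-s)^{2\Hm}\cdot\!\int_s^T(T-r)\cdots$ type bounds give, after integrating in $s$ over $[0,T]$ and dividing by the relevant power $T^{\half+\lambda}$ or $T^{\half+\gamma}$, a net power $T^{\delta}$ with $\delta > 0$ exactly when $\lambda > H - \half$ is allowed to saturate at equality and $\gamma$ is pushed down to $3H - \half$; the role of the $\ut_s^{-2}, \ut_s^{-6}, \ut_s^{-10}$ weights is absorbed by \CThree. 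The continuity hypothesis \Ffour\ is what lets me replace, inside the limits, the random integrands $\Df^i_s\norm{\phib_r}^2$ evaluated near $s = 0$, $T = 0$ by their values at zero — i.e. it justifies passing the limit inside the expectation and identifying $\lim_{T\downarrow 0}\EE[\ut_0^2\Kf_T]$, $\lim\EE[\Kf_T]$, $\lim\EE[\Hf^i_T]$ with the deterministic quantities built from $J_i$ and $G_{ij}$ (and the third-order analogue). Indeed $\ut_0^2 = Y_0 = \int_0^T\norm{\phib_r}^2\,\dr$, and using $\phi^i_r \to \frac{1}{2\Delta\VIX_0^2}J_i$ as $T \downarrow 0$ (dividing $m^i$ by $M^{\VIX}\to\VIX_0$ and recognising $\frac{1}{2\Delta\VIX_0}\int_0^\Delta\EE[\Df^i_0 v_r]\dr = \frac{J_i}{2\Delta\VIX_0}$, with the extra $\VIX_0$ from the square-root derivative) one gets $\ut_0^2 \sim T\,\frac{\norm{\Jb}^2}{4\Delta^2\VIX_0^4}$, whence $u_0 \to \frac{\norm{\Jb}}{2\Delta\VIX_0^2}$ and the level formula follows from Theorem~\ref{thm:level}.

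For the skew, Theorem~\ref{thm:skew} gives $\lim \Ss_T/T^\lambda = \half\lim\EE[\Kf_T]$ with $\lambda = H - \half$; one computes $\int_0^T|\Thb_s|\,\ds$ to leading order. Here $\Thb^i_s = (\int_s^T\Df^i_s\norm{\phib_r}^2\,\dr)\phi^i_s$, and $\Df^i_s\norm{\phib_r}^2 = 2\sum_k\phi^k_r\Df^i_s\phi^k_r$; the dominant contribution as $s, T \downarrow 0$ comes from expanding $\phi$ and its first Malliavin derivative at zero, producing the combination $\sum_{i,j}J_iJ_j(G_{ij} - \frac{J_iJ_j}{\Delta\VIX_0^2})$ in the numerator — the subtracted term $\frac{J_iJ_j}{\Delta\VIX_0^2}$ arising from differentiating the $1/M^{\VIX}$ and the $1/\VIX_T$ factors, which contribute a quadratic-in-$J$ correction — all divided by $\ut_0^3 \sim T^{3/2}(\norm{\Jb}/(2\Delta\VIX_0^2))^3$; since $\lambda = H-\half$ the VIX skew limit is a finite constant (the $T$-powers cancel), consistent with the proposition. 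Finally, in Theorem~\ref{thm:curvature} with $\gamma = 3H - \half < \lambda$, the first term $T^{-\gamma}\Ss_T \to 0$ by the displayed trichotomy, so only the second and third terms survive; dimensional bookkeeping shows the third term (one factor $|\Thb|$, one Malliavin derivative of a single $\int|\Thb|$) is of strictly smaller $T$-order than the second, hence also vanishes after dividing by $T^\gamma$, leaving $-\frac{15}{2\sqrt T}\EE[\ut_0^{-7}\int_0^T|\Thb_r|(\int_r^T|\Thb_y|\dy)\dr]$. Expanding $|\Thb|$ twice at zero turns $\int_0^T|\Thb_r|\int_r^T|\Thb_y|\,\dy\,\dr$ into the third-order object $\sum_{i,j,k}J^iJ^jJ^k\int_T^{T+\Delta}\EE[\Df^k_0\Df^j_0\Df^i_0 v_r]\,\dr$ times a combinatorial constant and a power of $T$; dividing by $\ut_0^7 \sim T^{7/2}(\norm{\Jb}/(2\Delta\VIX_0^2))^7$ and by $T^\gamma = T^{3H-\half}$ and tracking the $\frac{15}{2}$, the $\frac13$ from nested integration, and the $(2\Delta\VIX_0^2)^7/\norm{\Jb}^7$ produces exactly $\frac{2\Delta\VIX_0^2}{3\norm{\Jb}^5}\sum J^iJ^jJ^k\lim T^{-(3H-\half)}\int_T^{T+\Delta}\EE[\Df^k_0\Df^j_0\Df^i_0 v_r]\,\dr$. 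The restriction $H \in (0, \tfrac16)$ is precisely what guarantees $\gamma = 3H-\half \in (-1, \lambda]$, i.e. $3H - \half > -1$ and $3H - \half \le H - \half$, the latter being automatic and the former being $H > -\tfrac16$, so in fact $H < \tfrac16$ enters through needing $\gamma < \lambda$ strictly (so the skew term drops) \emph{and} through the integrability of the third-order kernel $\int_T^{T+\Delta}(r-0)^{3\Hm}\dr$, which requires $3\Hm > -1$, i.e. $H > -\tfrac16$ — no, rather it is the convergence of the $T^{-\gamma}$-normalised quantity that forces $H < \tfrac16$; I would pin this down carefully.

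The main obstacle I anticipate is not any single limit but the bookkeeping of \emph{all} the cross-terms generated when the square root in $\VIX_T = \sqrt{\frac1\Delta\int_T^{T+\Delta}\EE_T[v_r]\dr}$ and the ratio $\phib = \mb/M$ are Malliavin-differentiated up to third order: each derivative hitting $\VIX_T^{-1}$ or $(M^{\VIX})^{-1}$ spawns lower-order-in-$v$ but lower-power-of-$(r-s)$ terms, and one must argue that all of them are either subdominant in $T$ (hence killed by the normalisation) or combine into the clean $J, G$ expressions. Showing subdominance uniformly — rather than term by term in an ever-growing list — is where the power-counting lemmas and \CThree\ must be deployed with care; this is the crux, and I would isolate it as a lemma asserting that any term containing $j$ Malliavin derivatives distributed with at least one landing on a $1/\VIX_T$ or $1/M^{\VIX}$ factor is $o$ of the corresponding ``all derivatives on $v$'' term as $T \downarrow 0$.
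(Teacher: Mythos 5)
Your overall strategy --- check that \CBar\ implies \HBarlg, apply Theorems~\ref{thm:level}, \ref{thm:skew}, \ref{thm:curvature}, and identify the limits by continuity of the Malliavin derivatives at zero --- is indeed the paper's strategy, and your level computation is essentially right. But your power counting is imported from the SPX setting and is wrong for the VIX, which breaks the skew and curvature arguments in concrete ways. In the VIX case every Malliavin derivative of $v$ is integrated over $r\in[T,T+\Delta]$, and $\int_T^{T+\Delta}(r-y)^{\Hm}\dr=\frac{(T+\Delta-y)^{\Hp}-(T-y)^{\Hp}}{\Hp}$ is \emph{bounded} uniformly in $y\le T$ (concavity of $x\mapsto(x+\Delta)^{\Hp}-x^{\Hp}$), so $\phib$ and $\Df\phib$ are bounded and $|\Thb_s|\lesssim T-s$; hence $\Kf_T\sim T^{-\lambda}$ and the correct non-degenerate choice is $\lambda=0$, consistent with the proposition stating $\lim_{T\dto0}\Ss_T$ with no normalisation. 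Your $\lambda=\Hm$ belongs to the stock case (Proposition~\ref{prop:SPXlimit}); with it $\Kf_T\to0$ and Theorem~\ref{thm:skew} yields only the vacuous statement $\Ss_T=o(T^{\Hm})$. Moreover your claim that $3H-\half\le H-\half$ ``is automatic'' is false for $H>0$: with $\lambda=\Hm$ and $\gamma=3H-\half$ one has $\gamma>\lambda$, so the hypothesis $\gamma\in(-1,\lambda]$ of Theorem~\ref{thm:curvature} fails and the displayed trichotomy gives $\Ss_T/T^{\gamma}\to+\infty$, not $0$. Only $\lambda=0>\gamma=3H-\half$ (valid for $H<1/6$) kills the first term of~\eqref{eq:curvature}.

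More seriously, you keep the wrong term in the curvature formula. The quantity $\int_0^T|\Thb_r|\big(\int_r^T|\Thb_y|\dy\big)\dr$ is built from products of $\phi$ and $\Df\phi$ only, i.e.\ from at most \emph{second} Malliavin derivatives of $v$; it cannot produce the third-order object $\Df^k_0\Df^j_0\Df^i_0 v_r$ in the stated answer. Its leading order is the square of the skew numerator times $T^4$, so $\Hf^1_T\sim T^{-\gamma}\to0$ and the $-\tfrac{15}{2}\,\ut_0^{-7}$ term of~\eqref{eq:curvature} vanishes. The surviving contribution is the \emph{third} term, $\tfrac{3}{2\sqrt{T}}\,\EE\big[\ut_0^{-5}\int_0^T\sum_j\phi^j_s\,\Df^j_s\big(\int_s^T|\Thb_y|\dy\big)\ds\big]$: the extra Malliavin derivative landing on $\Thb$ is precisely where $\Df^k_t\Df^j_s\Df^i_y v_r\le X\prod(r-\cdot)^{\Hm}$ enters, giving $\Df\Thb_s\lesssim(T-s)^{3H+\half}$ and $\Hf^2_T\sim T^{3H-\half-\gamma}=O(1)$ for $\gamma=3H-\half$. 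Identifying which of $\Hf^1_T$, $\Hf^2_T$ survives is the whole content of the curvature statement, so as written the proposal would derive an incorrect formula (wrong coefficient, wrong power of $\ut_0$, and no mechanism to generate the triple derivative of $v$).
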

\begin{remark}
    Our results stand under the fairly general assumption $\bm{(\overline{C})}$. If $v$ is a reasonably well-behaved function of an $N$-dimensional Gaussian Volterra process $(W^{1,H},\cdots, W^{N,H})$ then these should be relatively easy to check, as Proposition~\ref{prop:suffconds} suggests. The rough Heston model is not even known to be Malliavin differentiable to this day, thus it does not lie in the scope of this study.
\end{remark}
We split the proof in two steps, collected in Appendix~\ref{sec:VIXproof}. 
First we show that \COne, \CTwo, \CThree\, are sufficient to apply our main theorems as they imply {\HBarlg} for any
$\lambda\in(-\half,0]$ and~$\gamma\in(-1, 3H-\half]$. Thanks to {\Ffour} we can also compute the limits---after a rigorous statement of convergence results---starting with~$\Ii_T$ and the skew with~$\lambda=0$. 
Restricting~$H$ to $(0,1/6)$, which is  the most relevant regime for rough volatility models, we can set~$\gamma=3H-\half<\lambda$ and compute the short-time curvature, with only the second term in {\HNine} contributing to the limit. The curvature limit in Proposition~\ref{prop:genmodel} is finite by the last point of~\CTwo.

\begin{remark}\label{rem:Hhigher}
In the regime $H\in[1/6,1/2)$, the rescaling becomes $\gamma=0$ and many more terms that would just vanish when $H<1/6$ now give a non-trivial contribution in the limit. Informally (that is without a proof), the limit reads
\begin{align*}
\lim_{T\downarrow0} \Cc_T 
=& \lim_{T\downarrow0}\Ss_T -  \frac{15\Delta\VIX^2_0}{\norm{\Jb}^7}
\left(\sum_{i,j=1}^N J_i J_j \left(G_{ij} - \frac{J_i J_j}{\Delta\VIX_0^2}\right)\right)^2 \\
& + \frac{12\Delta \VIX_0^2}{\norm{\Jb}^5}\sum_{i,j,k=1}^N  J_i \left( G_{jk}-\frac{J_j J_k}{\Delta \VIX_0^2}\right) \left( G_{ik}-\frac{J_i J_k}{\Delta \VIX_0^2}\right) \\ 
& + \frac{1}{\norm{\Jb}^5} \sum_{i,j,k=1}^N \left\{ \frac{9 \big( J_i J_j J_k \big)^2}{2 \Delta \VIX_0^2} -  6 J_i J_j J_k \left(G_{ij} J_k + G_{ik} J_j + G_{jk} J_i \right) \right\} \\
& + \frac{2\Delta \VIX_0^{2}}{3\norm{\Jb}^5} \sum_{i,j,k=1}^N J^i J^j J^k \,\int_0^{\Delta}  \EE\left[\Df^k_0 \Df^j_0\Df^i_0 v_r\right]\dr.
\end{align*}
\end{remark}


\subsection{The two-factor rough Bergomi}
We consider the two-factor exponential model
\begin{equation}\label{eq:expomodel}
v_t = v_0 \left[ \chi \Ee\left(\nu W^{1,H}_t\right) + \chib \Ee\left(\eta\left(\rho W^{1,H}_t + \rrho W^{2,H}_t\right)\right) \right]=: v_0\big( \chi\Ee_t^1 + \chib \Ee_t^2\big),
\end{equation}
where~$H\in(0,\half]$, $W^{i,H}_t= \int_0^t (t-s)^{\Hm}  \D W^i_s$, $W^1,W^2$ are independent Brownian motions, the Wick exponential is defined as~$\Ee(X):=\exp\{X-\half \EE[X^2]\}$, for any random variable~$X$, and~$\chi\in[0,1]$, $\chib:=1-\chi$, $v_0,\nu,\eta>0$, $\rho\in[-1,1]$, $\rrho=\sqrt{1-\rho^2}$.
This model is an extension of the Bergomi model~\cite{Bergomi08}, 
where the exponential kernel is replaced by a fractional one
and an extension of the rough Bergomi model~\cite{BFG16} to the two-factor case. 
It combines Bergomi's insights on the need for several factors with the benefits of rough volatility.
As proved in Appendix~\ref{app:proofconds}, it satisfies our conditions:
\begin{proposition}\label{prop:suffconds}
	If~$\rho\in(-\sqrt{2}/2,1]$, the model~\eqref{eq:expomodel} satisfies~{\CBar}.
\end{proposition}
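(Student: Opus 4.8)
The plan is to verify the four conditions \COne--\Ffour\ one at a time, exploiting the explicit exponential structure of \eqref{eq:expomodel}. The recurring theme is that every Malliavin derivative of $v_r$ (or of $\VIX$, or of $M_t$) produces extra polynomial factors in the Wick exponentials $\Ee^1, \Ee^2$ together with powers of the fractional kernel $r\mapsto (r-\cdot)^{\Hm}$, and that Gaussianity makes all the resulting exponential moments finite; the single dominating random variable $X\in L^p$ is then built by taking a supremum over the (compact) time horizon of a finite sum of such exponential-polynomial terms. Concretely, since $W^{i,H}_t=\int_0^t(t-s)^{\Hm}\D W^i_s$ is centred Gaussian with variance $\frac{t^{2H}}{2H}$, bounded on $[0,T+\Delta]$, the Wick exponentials $\Ee^1_t,\Ee^2_t$ and all their Malliavin derivatives have moments of all orders, uniformly in $t$ on the horizon; this gives \COne\ (bound $1/M_t=1/\EE_t[\VIX_T]$ by a deterministic constant over $\VIX_0$ times such an exponential moment, using Jensen and the strict positivity of $v$) and the first bullet of \CTwo.

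For the three derivative bullets of \CTwo, I would compute $\Df^i_y v_r$ directly: since $\Df^i_y W^{1,H}_r=(r-y)^{\Hm}\ind_{\{y\le r\}}$ and $\Df^i_y\Ee(\nu W^{1,H}_r)=\nu (r-y)^{\Hm}\Ee(\nu W^{1,H}_r)$, and similarly for the second factor with $\rho,\rrho$, one gets $\Df^i_y v_r = v_0\big(\chi\nu\delta_{i1}+\chib\eta\rho_i\big)(r-y)^{\Hm}\Ee^{\cdot}_r$ — a constant times $(r-y)^{\Hm}$ times a bounded-moment exponential, hence $\le X(r-y)^{\Hm}$. Iterating, the $n$-fold derivative is a polynomial (of degree $n$) in $(r-t),(r-s),(r-y)$-kernels times $\Ee^{\cdot}_r$, but every monomial already contains the full product $(r-t)^{\Hm}(r-s)^{\Hm}(r-y)^{\Hm}$ as a factor (each differentiation in a fresh variable pulls out one kernel and may, via the Wick-renormalisation term $-\half\EE[(W^{i,H})^2]$, pull out further \emph{deterministic bounded} factors, never more singular kernels). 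So the required bounds hold with $X$ the horizon-supremum of the relevant exponential-polynomial, which lies in every $L^p$. For \Ffour, continuity of $y\mapsto\Df^i_y v_r$ etc.\ near zero is immediate because $(r-y)^{\Hm}$ is continuous for $y<r$ (and here $r\ge T$ while the derivative variables are near $0<T$), and the exponential factors are a.s.\ continuous.

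Condition \CThree\ — uniform boundedness of $\EE[u_s^{-p}]$ in $s\le T\le1$ — is the main obstacle, since it is a \emph{negative-moment} bound and cannot be obtained by the blunt exponential-moment argument above. Here $u_s^2=\frac{1}{T-s}\int_s^T\norm{\phib_r}^2\dr$, and in the VIX setting $\phi^i_r=m^i_r/M^{\VIX}_r$ with $m^i_r$ proportional to $\EE_r[\frac{1}{\VIX_T}\int_T^{T+\Delta}\Df^i_r v_u\,\du]$, so $\norm{\phib_r}^2$ involves $(\text{const})\,(T-r+\text{something})^{2\Hm}$-type kernels integrated against positive exponentials. The plan is to bound $\norm{\phib_r}^2$ \emph{below} by a strictly positive deterministic multiple of a power of $(T-r)$ — using that the bracket $\chi\nu\delta_{i1}+\chib\eta\rho_i$ summed in $\norm\cdot^2$ equals $v_0^2(\chi^2\nu^2+2\chi\chib\nu\eta\rho+\chib^2\eta^2)>0$ precisely when $\rho>-\sqrt2/2$ wait — more carefully, $\norm{\Df_0 v_r}^2 = v_0^2\Ee\text{-terms}\cdot\big((\chi\nu+\chib\eta\rho)^2+(\chib\eta\rrho)^2\big)(r)^{2\Hm}$, and the quadratic form $(\chi\nu+\chib\eta\rho)^2+\chib^2\eta^2(1-\rho^2)$ is bounded below by a positive constant uniformly on the parameter ranges once $\rho>-\sqrt2/2$ guarantees it never vanishes — this is exactly where the hypothesis $\rho\in(-\sqrt2/2,1]$ enters. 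Combined with a lower bound on the positive exponential factors via Jensen applied to the conditional expectation (or a small-ball estimate for the Gaussian), one gets $\norm{\phib_r}^2\ge c\,(T-r)^{2\Hm}$ for a deterministic $c>0$, whence $u_s^2\ge c\,\frac{1}{T-s}\int_s^T(T-r)^{2\Hm}\dr = \frac{c}{2H}(T-s)^{2\Hm}$, so $u_s^{-p}\le C\,(T-s)^{-p\Hm}$; but $\Hm<0$ so this \emph{blows up}, which is wrong — instead one uses the reverse: $\norm{\phib_r}^2\ge c\,(T-r)^{2\Hm}$ with $\Hm<0$ makes the integrand \emph{large} near $r=T$, so actually $\int_s^T(T-r)^{2\Hm}\dr=\frac{(T-s)^{\Hp\cdot 2?}}{2H}$ — since $2\Hm+1=2H>0$ the integral is $\frac{(T-s)^{2H}}{2H}$, giving $u_s^2\ge \frac{c}{2H}(T-s)^{2H-1}$, again singular. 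The correct route is therefore the \emph{matching upper} bound $\norm{\phib_r}^2\le C(T-r)^{2\Hm}$ is false too near $r=T$; rather one writes $\VIX_T^2$ in the denominator of $\phi$ and bounds $\phi$ from below in $L^p$-negative-moment sense by controlling $\VIX_T$ from above in $L^p$ (exponential moments again) and the numerator $\int_T^{T+\Delta}\Df^i_r v_u\,\du$ from below — note $u-r\ge u-T\ge0$ on this range so the kernel $(u-r)^{\Hm}$ is bounded below on $u\in[T,T+\Delta]$ away from $u=r$, in fact $(u-r)^{\Hm}\ge\Delta^{\Hm}$-type bounds hold since $\Hm<0$ means the kernel is \emph{decreasing}, so $(u-r)^{\Hm}\ge(u)^{\Hm}\ge(T+\Delta)^{\Hm}$. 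Putting these together yields $\phi^i_r\ge (\text{positive const})\times(\text{negative-moment-controlled r.v.})$, and then $Y_s=\int_s^T\norm{\phib_r}^2\dr\ge(T-s)\times(\text{positive const})$, so $u_s^2=Y_s/(T-s)\ge c>0$ deterministically, giving $\EE[u_s^{-p}]\le c^{-p/2}$ uniformly. I expect the bookkeeping of these lower bounds — in particular getting a \emph{deterministic} positive lower bound for $\phi$, which needs the Gaussian small-ball/Jensen step and the parameter condition $\rho>-\sqrt2/2$ — to be where essentially all the work lies; the other three conditions are routine exponential-moment estimates.
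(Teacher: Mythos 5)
Your treatment of \COne, \CTwo\ and \Ffour\ is sound and matches the paper's route: explicit Malliavin derivatives of the Wick exponentials, a single dominating variable $X$ built from $\sup_{t\le T+\Delta}$ of exponential--polynomial Gaussian functionals (finite moments via Borell--TIS), and Jensen plus the exp--log identity for $1/M_t$. The gap is in \CThree, which you rightly flag as the hard part but do not close. Your final claim that $\phi^i_r$ is bounded below by a positive constant times a random variable and hence $u_s^2\ge c>0$ \emph{deterministically} is false: the Wick exponentials $\Ee^1_r,\Ee^2_r$ take values arbitrarily close to $0$ with positive probability, so $v$, $\Df v$, $m^i$, $\phi^i$ and $u_s$ all do too. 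The kernel lower bound $(u-r)^{\Hm}\ge (T+\Delta)^{\Hm}$ only removes the deterministic part of the problem; what remains is a genuine negative-moment estimate for the random factor. The paper handles it by writing $u_s^{-2p}$ as $\exp\{-p\log(\cdots)\}$, pushing the logarithm inside the double time integral and inside the conditional expectation by Jensen, and splitting $-\log(\Df^1_yv_r+\Df^2_yv_r)$ by concavity into $-\tfrac12\log(2v_0\chi\nu(r-y)^{\Hm}\Ee^1_r)-\tfrac12\log(2v_0\chib\eta(\rho+\rrho)(r-y)^{\Hm}\Ee^2_r)$; the resulting terms are a uniformly bounded deterministic piece (this requires the explicit $\int\!\!\int\log(r-y)$ computation) plus a Gaussian exponential moment, finite by Borell--TIS. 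None of this machinery appears in your plan.

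Relatedly, you misidentify where the hypothesis $\rho\in(-\sqrt2/2,1]$ enters. The quadratic form $(\chi\nu+\chib\eta\rho)^2+\chib^2\eta^2\rrho^2=\chi^2\nu^2+2\chi\chib\nu\eta\rho+\chib^2\eta^2$ is strictly positive for every $\rho\in(-1,1]$ by AM--GM, so non-degeneracy of $\psi$ is not the issue. The condition $\rho>-\sqrt2/2$ is exactly $\rho+\rrho>0$, which is what makes $\Df^1_yv_r+\Df^2_yv_r=v_0(r-y)^{\Hm}\big(\chi\nu\,\Ee^1_r+\chib\eta(\rho+\rrho)\Ee^2_r\big)$ almost surely strictly positive; without that positivity one cannot take the logarithm of the sum and the whole Jensen/concavity route to the negative moments collapses. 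So the hypothesis is the entry ticket to the \CThree\ argument, not a non-vanishing condition on the limiting constants.
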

The restriction of the range of~$\rho$ is equivalent to~$\rho+\rrho>0$, a necessary requirement in the proof.
Proposition~\ref{prop:genmodel} therefore applies and we obtain the following limits,
with proof in Appendix~\ref{sec:proofexpomodel}:
\begin{proposition} \label{prop:expomodel}
Let~$\psi(\rho,\nu,\eta,\chi):=\sqrt{ (\chi\nu+\chib\eta\rho)^2+ \chib^2\eta^2\rrho^2}$. 
	If~$H\in(0,\frac{1}{6})$ and~$\rho\in(-\frac{\sqrt{2}}{2},1]$, then
	\begin{align*}
	\lim_{T\downarrow 0} \Ii_{T} & = \frac{\Delta^{\Hm}}{2H+1} \psi(\rho,\nu,\eta,\chi), \nonumber\\
	\lim_{T\downarrow0} \Ss_T & = \frac{\Hp\Delta^{\Hm}}{2\psi(\rho,\nu,\eta,\chi)^{3}} \Bigg\{(\chi\nu+\chib\eta\rho)^2 \left[ \frac{\chi\nu^2+\chib\eta^2\rho^2}{2H}-\left(\frac{\chi\nu+\chib\eta\rho}{\Hp}\right)^2\right]\\
	& \quad + 2 (\chi\nu+\chib\eta\rho)\chib^2\eta^2\rrho^2 \left[ \frac{\eta\rho}{2H} - \frac{\nu+\eta\rho}{\Hp^2}\right] 
	+  \chib^3 \eta^4\rrho^4 \left(\frac{1}{2H}-\frac{1}{\Hp^2}\right) \Bigg\},\\
	\lim_{T\downarrow 0} \frac{\Cc_T}{T^{3H-\half}}
	& = \frac{128 \Delta^{-2H} \Hp^2}{3\psi(\rho,\nu,\eta,\chi)^{5}(1-6H)}
	\Big\{ (\chi\nu+\chib\eta\rho)^3 (\chi\nu^3+\chib\eta^3\rho^3) \\
	&\quad + 3 (\chi\nu+\chib\eta\rho)^2 \chib^2 \eta^4\rrho^2 \rho^2
	+ 3  (\chi\nu+\chib\eta\rho)\chib^3  \eta^5\rrho^4 \rho 
	+  \chib^4\eta^6\rrho^6 \Big\}.
	\end{align*}
\end{proposition}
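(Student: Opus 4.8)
The plan is to derive Proposition~\ref{prop:expomodel} as a direct specialisation of Proposition~\ref{prop:genmodel} applied to the two-factor model~\eqref{eq:expomodel}, for which Proposition~\ref{prop:suffconds} guarantees that \CBar\ holds whenever $\rho\in(-\sqrt2/2,1]$. The only real work is the explicit evaluation of the three structural quantities $\norm{\Jb}$, $\sum_{i,j}J_iJ_j(G_{ij}-J_iJ_j/(\Delta\VIX_0^2))$, and $\sum_{i,j,k}J^iJ^jJ^k\lim_{T\downarrow0}T^{-(3H-\half)}\int_T^{T+\Delta}\EE[\Df^k_0\Df^j_0\Df^i_0 v_r]\,\dr$, together with $\VIX_0$. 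First I would compute $\VIX_0^2 = \frac1\Delta\int_0^\Delta\EE[v_r]\,\dr$: since $W^{i,H}_r$ is centred Gaussian and the Wick exponentials $\Ee^1_r,\Ee^2_r$ have unit mean, one gets $\EE[v_r]=v_0$ for all $r$, hence $\VIX_0^2=v_0$.

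Next I would compute the Malliavin derivatives of $v_r$ at points $0\le t\le s\le y\le r$. Because $W^{i,H}_r=\int_0^r(r-u)^{\Hm}\D W^i_u$, we have $\Df^i_y W^{1,H}_r=(r-y)^{\Hm}$ if $i=1$ (and similarly $\Df^i_y W^{2,H}_r=(r-y)^{\Hm}$ if $i=2$), while cross-derivatives vanish by independence of $W^1,W^2$. Writing $v_r=v_0(\chi\Ee^1_r+\chib\Ee^2_r)$ and using $\Df^i_y\Ee(X)=\Ee(X)\Df^i_y X$, I obtain $\Df^i_y v_r = v_0\big(\chi\nu\,a^i_1\,\Ee^1_r+\chib\eta\,a^i_2\,\Ee^2_r\big)(r-y)^{\Hm}$, where $a^1_1=1,a^1_2=\rho,a^2_1=0,a^2_2=\rrho$ encode the two factor loadings $(\nu,0)$ and $(\eta\rho,\eta\rrho)$. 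Taking expectations (again the Wick exponentials have unit mean) gives $\EE[\Df^i_0 v_r]=v_0\,b_i\,r^{\Hm}$ with $b_1=\chi\nu+\chib\eta\rho$, $b_2=\chib\eta\rrho$, so $J_i=v_0 b_i\int_0^\Delta r^{\Hm}\dr=v_0 b_i\Delta^{\Hp}/\Hp$ and hence $\norm{\Jb}=v_0\norm{\mathbf b}\Delta^{\Hp}/\Hp$ with $\norm{\mathbf b}=\psi(\rho,\nu,\eta,\chi)$. Plugging into $\lim\Ii_T=\norm{\Jb}/(2\Delta\VIX_0^2)=\norm{\Jb}/(2\Delta v_0)$ yields $\Delta^{\Hm}\psi/(2\Hp)$, matching the claimed formula since $2\Hp=2H+1$.

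For the skew I need $G_{ij}=\int_0^\Delta\EE[\Df^j_0\Df^i_0 v_r]\,\dr$. Differentiating once more and using $\Df^j_s(\Ee(X)\Df^i_y X)=\Ee(X)(\Df^j_s X)(\Df^i_y X)$ because $\Df^i_yX$ is here a deterministic scalar, I get $\Df^j_0\Df^i_0 v_r = v_0\big(\chi\nu^2 a^i_1 a^j_1\Ee^1_r+\chib\eta^2 a^i_2 a^j_2\Ee^2_r\big)r^{2\Hm}$, so $\EE[\Df^j_0\Df^i_0 v_r]=v_0 c_{ij}r^{2\Hm}$ with $c_{ij}=\chi\nu^2 a^i_1 a^j_1+\chib\eta^2 a^i_2 a^j_2$, and $G_{ij}=v_0 c_{ij}\Delta^{2H}/(2H)$. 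Then $\sum_{ij}J_iJ_j G_{ij}=v_0^3(\Delta^{\Hp}/\Hp)^2(\Delta^{2H}/2H)\sum_{ij}b_i b_j c_{ij}$ and $\sum_{ij}J_iJ_j\cdot J_iJ_j/(\Delta v_0)=v_0^3(\Delta^{\Hp}/\Hp)^4\norm{\mathbf b}^4/(\Delta v_0)$; dividing by $2\norm{\Jb}^3=2v_0^3\norm{\mathbf b}^3(\Delta^{\Hp}/\Hp)^3$ and simplifying the powers of $\Delta$ and $\Hp$ should collapse to the stated expression once I expand $\sum_{ij}b_i b_j c_{ij}=b_1^2\chi\nu^2+2b_1 b_2\chib\eta^2\rho\cdot(\text{via }a\text{'s})+\cdots$; the bracketed terms in the proposition are precisely this expansion reorganised by powers of $b_1=\chi\nu+\chib\eta\rho$ and $b_2^2=\chib^2\eta^2\rrho^2$. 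The curvature term is the same mechanism one order higher: $\EE[\Df^k_0\Df^j_0\Df^i_0 v_r]=v_0 d_{ijk}r^{3\Hm}$ with $d_{ijk}=\chi\nu^3 a^i_1a^j_1a^k_1+\chib\eta^3 a^i_2a^j_2a^k_2$, giving $\int_T^{T+\Delta}\EE[\cdots]\dr\sim v_0 d_{ijk}\,T^{3\Hm}\Delta$ as $T\downarrow0$ since $3\Hm>-1$ fails — wait, $3\Hm=3H-\tfrac32\in(-\tfrac32,-\tfrac98)$ for $H\in(0,1/6)$, so the integrand blows up at $r=T$ and the leading behaviour is $\int_T^{T+\Delta}r^{3\Hm}\dr\sim T^{3\Hm+1}\cdot\tfrac{-1}{3\Hm+1}=T^{3H-\half}/(\tfrac12-3H)$; hence the limit is $v_0 d_{ijk}/(\tfrac12-3H)$. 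Inserting into $\frac{2\Delta\VIX_0^2}{3\norm{\Jb}^5}\sum_{ijk}J^iJ^jJ^k\cdot(\text{limit})$ and carefully tracking the constants ($2\Delta v_0$, $(\Delta^{\Hp}/\Hp)^8$ in numerator over $(\Delta^{\Hp}/\Hp)^5$ in $\norm{\Jb}^5$, $v_0^6$ cancellation, the factor $1/(\tfrac12-3H)=2/(1-6H)$, and the residual $\Delta$ powers) should produce the $128\Delta^{-2H}\Hp^2/(3\psi^5(1-6H))$ prefactor; the numerator $\sum_{ijk}b_ib_jb_k d_{ijk}$ expands into the four terms listed.

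The main obstacle is purely bookkeeping: correctly identifying the factor-loading vectors $(\nu,0)$ and $(\eta\rho,\eta\rrho)$ and tracking how the triple sums $\sum b_i b_j c_{ij}$ and $\sum b_i b_j b_k d_{ijk}$ expand — in particular verifying that $\sum_{ij}b_ib_jc_{ij}=(\chi\nu+\chib\eta\rho)^2(\chi\nu^2+\chib\eta^2\rho^2)+2(\chi\nu+\chib\eta\rho)\chib^2\eta^2\rrho^2\cdot\eta\rho\cdot(\cdots)+\chib^3\eta^4\rrho^4$ after grouping, and similarly for the cubic case — and then checking that all powers of $\Delta$, $v_0$, and $\Hp$ cancel to leave exactly the stated constants. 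A secondary point requiring care is the asymptotic evaluation $\int_T^{T+\Delta}r^{3H-3/2}\dr\sim (1-6H)^{-1}\cdot 2\,T^{3H-1/2}$ as $T\downarrow0$, valid precisely because $3H-\tfrac32<-1\iff H<\tfrac16$, which is exactly the regime restriction in the statement; one must confirm that the subleading $O(\Delta\cdot T^{3H-3/2})$ corrections are negligible relative to the leading $T^{3H-1/2}$ term, which holds since $3H-\tfrac32 < 3H-\tfrac12$.
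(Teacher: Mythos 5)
Your proposal is correct and follows essentially the same route as the paper's own proof: invoke Proposition~\ref{prop:suffconds} to apply Proposition~\ref{prop:genmodel}, compute the iterated Malliavin derivatives of $v_r$ (your factor-loading vectors $(\nu,0)$ and $(\eta\rho,\eta\rrho)$ and the resulting $J_i$, $G_{ij}$, $d_{ijk}$ coincide exactly with the paper's), and evaluate $\lim_{T\dto0}T^{-(3H-\half)}\int_T^{T+\Delta}r^{3H-3/2}\dr=\tfrac{2}{1-6H}$ using $H<1/6$. One caveat on the bookkeeping you defer: carried out consistently with your $J_i=v_0 b_i\Delta^{\Hp}/\Hp$, hence $\norm{\Jb}=v_0\psi\Delta^{\Hp}/\Hp$, the curvature prefactor comes out as $4\Hp^2\Delta^{-2H}/\big(3\psi^5(1-6H)\big)$ rather than the stated $128\Hp^2\Delta^{-2H}/\big(3\psi^5(1-6H)\big)$; the paper's own computation reaches $128$ by substituting $\big(v_0\Delta^{\Hp}/(2\Hp)\big)^5\psi^5$ for $\norm{\Jb}^5$ in the denominator while keeping $\Hp$ (not $2\Hp$) in the numerator's $J_iJ_jJ_k$, so do not expect your (correct) arithmetic to reproduce the published constant without tracking down that factor of $2^5$.
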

The limits depend explicitly on the parameters of the model $(H,\chi,\nu,\eta,\rho)$ and can be used to gain insight on their impact over the quantities of interest.

\begin{remark}\ 
\begin{itemize}
\item In the case~$\rho=1$ (hence~$\rrho=0$) the above limits simplify to
	\begin{align*}
	\lim_{T\downarrow 0} \Ii_{T} & = \frac{\Delta^{\Hm}}{2\Hp}\left(\chi\nu+\chib\eta\right),\\ 	\lim_{T\downarrow 0} \Ss_T & =\half \frac{\Hp\Delta^{\Hm}}{\chi\nu+\chib\eta} \left[\frac{\chi\nu^2+\chib\eta^2}{2H}-\left(\frac{\chi\nu+\chib\eta}{\Hp}\right)^2\right],\\
\lim_{T\downarrow 0} \frac{\Cc_T}{T^{3H-\half}}
	& = \frac{128 \Delta^{-2H} \Hp^2}{3-18H} \frac{\chi\nu^3+\chib\eta^3}{(\chi\nu+\chib\eta)^2}.
	\end{align*}
\item If we set $\rho=0$ (hence $\rrho=1$), we obtain
	\begin{align*}
	\lim_{T\downarrow 0} \Ii_{T} & = \frac{\Delta^{\Hm}}{2\Hp}\sqrt{\chi^2\nu^2+\chib^2\eta^2},\\
	\lim_{T\downarrow0} \Ss_T & = \frac{\Hp\Delta^{\Hm}}{2(\chi^2\nu^2+\chib^2\eta^2)^{3/2}} \Bigg\{\chi^3\nu^4 \left[ \frac{1}{2H}-\frac{\chi}{\Hp^2}\right]
- \frac{2 \chi\nu^2\chib^2\eta^2}{\Hp^2}
	+  \chib^3 \eta^4\left(\frac{1}{2H}-\frac{1}{\Hp^2}\right) \Bigg\},\\
	\lim_{T\downarrow 0} \frac{\Cc_T}{T^{3H-\half}}
	& = \frac{128 \Delta^{-2H} \Hp^2}{3-18H} \frac{\chi^4\nu^6+\chib^4\eta^6}{(\chi^2\nu^2+\chib^2\eta^2)^{5/2}}.
	\end{align*}
\item When~$\rho=-1$ (not covered per se by the proposition) the above limits simplify to
	\begin{align*}
 \lim_{T\downarrow0} \mathcal{I}_T &= \frac{\Delta^{\Hm}}{2H+1} \lvert\chi\nu-\chib\eta\lvert,\\
  \lim_{T\downarrow0} \mathcal{S}_T &= \frac{H_{+}\Delta^{\Hm}}{2\lvert\chi\nu-\chib\eta\lvert} \left[ \frac{\chi\nu^2+\chib\eta^2}{2H} - \left(\frac{\chi\nu-\chib\eta}{\Hp}\right)^2\right],
  \\
\lim_{T\downarrow 0} \frac{\Cc_T}{T^{3H-\half}}
	& = \frac{128 \Delta^{-2H} \Hp^2}{3-18H} \frac{\chi\nu^3-\chib\eta^3}{(\chi\nu-\chib\eta)^2} \,{\rm sgn}(\chi\nu-\bar{\chi}\eta).
	\end{align*}
\end{itemize}
\end{remark}
Some tedious yet straightforward manipulations allow us to obtain some information about the sign of the limiting curvature:
\begin{lemma}
For any $\eta,\nu>0$, $\chi \in [0,1]$, there exists $\rho^*_{\chi,\nu\,\eta}<0$ such that $\lim_{T\downarrow 0} \frac{\Cc_T}{T^{3H-\half}}$ is strictly positive for $\rho>\rho^*_{\chi,\nu\,\eta}$ and strictly negative when 
$\rho<\rho^*_{\chi,\nu\,\eta}$.
When $\left(\chi\nu - \chib\eta\right)\left(\chi\nu^3 - \chib\eta^3\right)>0$, then
$\rho^*_{\chi,\nu\,\eta}<-1$ and hence
the limiting curvature is strictly positive for all $\rho \in [-1,1]$.
\end{lemma}
\begin{proof}
The expression we are interested in, given in Proposition~\ref{prop:expomodel}, and ignoring the obviously strictly positive multiplicative factor, reads
{\small
\begin{align*}
\Phi(\rho)
 & = {\color{blue} [\chi\nu+\chib\eta\rho]^3
 \left(\chi\nu^3+\chib\eta^3\rho^3\right)}
{\color{red} + 3\{\chi\nu+\chib\eta\rho\}^2\chib^2\eta^4\rhob^2\rho^2}
 {\color{olive} + 3(\chi\nu+\chib\eta\rho)\chib^3\eta^5\rhob^4\rho}
{\color{orange} + \chib^4\eta^6\rhob^6}\\
 & = {\color{blue} \Big[\chi^3\nu^3 + 3\chi^2\nu^2 \chib\eta\rho + 3\chi\nu\chib^2\eta^2\rho^2 + \chib^3\eta^3\rho^3\Big]
\left(\chi\nu^3+\chib\eta^3\rho^3\right)}
{\color{red} + 3\Big\{\chi^2\nu^2 + 2\chi\nu\chib\eta\rho +\chib^2\eta^2\rho^2\Big\}\chib^2\eta^4\rhob^2\rho^2}\\
 & \quad {\color{olive} + 3(\chi\nu+\chib\eta\rho)\chib^3\eta^5\rhob^4\rho}
{\color{orange} + \chib^4\eta^6\rhob^6}\\
 & = {\color{blue} \Big[\chi^3\nu^3 + 3\chi^2\nu^2 \chib\eta\rho + 3\chi\nu\chib^2\eta^2\rho^2 + \chib^3\eta^3\rho^3\Big]
\left(\chi\nu^3+\chib\eta^3\rho^3\right)}
{\color{red} + 3\Big\{\chi^2\nu^2 + 2\chi\nu\chib\eta\rho +\chib^2\eta^2\rho^2\Big\}\chib^2\eta^4\left(1-\rho^2\right)\rho^2}\\
 &\quad  {\color{olive} + 3(\chi\nu+\chib\eta\rho)\chib^3\eta^5\left(1-2\rho^2+\rho^4\right)\rho}
{\color{orange} + \chib^4\eta^6
\left(1 - 3\rho^2 + 3\rho^4 - \rho^6\right)}\\
 & =: \sum_{i=0}^{6}\alpha_i \rho^i,
\end{align*}
}
where
\begin{equation*}
\begin{array}{rll}
\alpha_0 & = {\color{blue} \chi^4\nu^6} {\color{orange} + \chib^4\eta^6}
 & = \chi^4\nu^6 + \chib^4\eta^6,\\
\alpha_1 & = {\color{blue} 3\chi^3\nu^5\chib\eta} {\color{olive} +3\chi\nu\chib^3\eta^5}
 & = 3\left(\chi^2\nu^4 + \chib^2\eta^4\right)\chi\nu\eta\chib,\\
\alpha_2 & = {\color{blue} 3\chi^2\chib^2\eta^2\nu^4} {\color{red} +3\chi^2\nu^2\chib^2\eta^4} {\color{olive} +3\chib^4\eta^6} 
{\color{orange} -3\chib^4\eta^6}
 = {\color{blue} 3\chi^2\chib^2\eta^2\nu^4} {\color{red} +3\chi^2\nu^2\chib^2\eta^4}
 & = 3\Big(\nu^2 + \eta^2\Big)\chi^2\chib^2\eta^2\nu^2\\
\alpha_3 & = {\color{blue} \chi^3\nu^3\chib\eta^3+\chib^3\eta^3\chi\nu^3} {\color{red} +6\chi\nu\chib^3\eta^5} 
{\color{olive} -6\chi\nu\chib^3\eta^5}
= {\color{blue} \left(\chi^2+\chib^2\right)\chi\chib\nu^3\eta^3}
& = \left(\chi^2+\chib^2\right)\chi\chib\nu^3\eta^3,\\
\alpha_4 & = {\color{blue} 3\chi^2\nu^2 \chib^2\eta^4} {\color{red} -3\chi^2\nu^2\chib^2\eta^4 + 3\chib^4\eta^6} 
{\color{olive} -6\chib^4\eta^6} {\color{orange} +3\chib^4\eta^6} & = 0,\\
\alpha_5 & = {\color{blue} 3\chib^3\eta^5\chi\nu} {\color{red} - 6\chib^3\eta^5\chi\nu} {\color{olive} + 3\chi\nu\chib^3\eta^5} & = 0,\\
\alpha_6 & =  {\color{blue} \chib^4\eta^6}  {\color{red}- 3\chib^4\eta^6} {\color{olive} +  3\chib^4\eta^6}
-\chib^4\eta^6 & = 0.
\\
\end{array}
\end{equation*}
These surprising simplifications show that~$\Phi$ is in fact a polynomial in~$\rho$ of order three, 
with a strictly positive leading coefficient~$\alpha_3$,
so that~$\Phi''$ is linear and increasing in~$\rho$ and is such that
$$
\Phi''(\rho) = 0
\quad\text{if and only if}\quad
\rho = -\frac{(\eta^2+\nu^2)\chib\chi}{\nu\eta(\chi^2+\chib^2)}
=:\rho_{\bullet}(\chi, \nu, \eta).
$$
Now, 
\begin{align*}
 & \Phi'(\rho_{\bullet}(\chi, \nu, \eta))\\
& = 3\alpha_3\rho^*(\eta,\nu\chi)^2 + 2 \alpha_2\rho^*(\eta,\nu\chi) + \alpha_1\\
& = 3\left(\chi^2+\chib^2\right)\chi\chib\nu^3\eta^3\rho^*(\eta,\nu\chi)^2
 + 6\left(\nu^2 + \eta^2\right)\chi^2\chib^2\eta^2\nu^2\rho^*(\eta,\nu\chi)
 + 3\left(\chi^2\nu^4 + \chib^2\eta^4\right)\chi\nu\eta\chib
\\
& = 3\left(\chi^2+\chib^2\right)\chi\chib\nu^3\eta^3\left[\frac{(\eta^2+\nu^2)\chib\chi}{\nu\eta(\chi^2+\chib^2)}\right]^2
 - 6\left(\nu^2 + \eta^2\right)\chi^2\chib^2\eta^2\nu^2\frac{(\eta^2+\nu^2)\chib\chi}{\nu\eta(\chi^2+\chib^2)}
 + 3\left(\chi^2\nu^4 + \chib^2\eta^4\right)\chi\nu\eta\chib
\\
& = 3\chi^3\chib^3\nu\eta\frac{(\eta^2+\nu^2)^2}{\chi^2+\chib^2}
 - 6\chi^3\chib^3\eta\nu\frac{(\eta^2+\nu^2)^2}{\chi^2+\chib^2}
 + 3\left(\chi^2\nu^4 + \chib^2\eta^4\right)\chi\nu\eta\chib
\\
& = 
\frac{1}{\chi^2+\chib^2}
\Big(-3\chi^3\chib^3\nu\eta(\eta^2+\nu^2)^2
 + 3(\chi^2+\chib^2)\left(\chi^2\nu^4 + \chib^2\eta^4\right)\chi\nu\eta\chib
\Big)\\
& = 
\frac{3\nu\eta\chi\chib}{\chi^2+\chib^2}
\Big((\chi^2+\chib^2)\left(\chi^2\nu^4 + \chib^2\eta^4\right)
-\chi^2\chib^2(\eta^2+\nu^2)^2
\Big)\\
& = 
\frac{3\nu\eta\chi\chib}{\chi^2+\chib^2}
\Big(
\chi^4\nu^4 + \chi^2\chib^2\eta^4
+ \chi^2\chib^2\nu^4 + \chib^4\eta^4
-\chi^2\chib^2\left(\eta^4+2\eta^2\nu^2+\nu^4\right)
\Big)\\
& = 
\frac{3\nu\eta\chi\chib}{\chi^2+\chib^2}
\Big(
\chi^4\nu^4 + \chib^4\eta^4
-2\chi^2\chib^2\eta^2\nu^2
\Big)
 = 
\frac{3\nu\eta\chi\chib}{\chi^2+\chib^2}
\left(\chi\nu + \chib\eta\right)^2
\left(\chi\nu - \chib\eta\right)^2>0.
\end{align*}
Since~$\Phi'$ is an upward parabola with strictly positive minimum, 
it is always strictly positive, hence~$\Phi$ is a strictly increasing function (of~$\rho$), and the lemma follows.
Let $\rho^{*}_{\chi, \nu, \eta}$ denote the unique
solution to $\Phi(\rho^{*}_{\chi, \nu, \eta})=0$,
for which an explicit solution exists in closed form, but its exact representation is messy and not particularly informative.
We can however provide an upper bound. Indeed
\begin{align*}
 \Phi(-1) 
 &  = \alpha_0 - \alpha_1 + \alpha_2 - \alpha_3\\
 & = \chi^4\nu^6 + \chib^4\eta^6
 - 3\left(\chi^2\nu^4 + \chib^2\eta^4\right)\chi\nu\eta\chib
 + 3\left(\nu^2 + \eta^2\right)\chi^2\chib^2\eta^2\nu^2
 -\left(\chi^2+\chib^2\right)\chi\chib\nu^3\eta^3\\
 & = \Big(\chi^3\nu^3
 - 3\chi^2\nu^2\eta\chib
 + 3\chi\chib^2\eta^2\nu
 -\chib^3\eta^3\Big)\chi\nu^3
  + \Big(\chib^3\eta^3 - 3\chib^2\eta^2\chi\nu
   + 3\chi^2\chib\eta\nu^2
   -\chi^3\nu^3\Big)\chib\eta^3\\
& = \left(\chi\nu - \chib\eta\right)^3\left(\chi\nu^3 - \chib\eta^3\right)
 \end{align*}
 As soon as $\Phi(-1)>0$, then clearly
 $\rho^{*}_{\chi, \nu, \eta}<-1$, hence the limiting curvature is always strictly positive.
 The sign of $\Phi(-1)$ is given by that of 
 $\left(\chi\nu - \chib\eta\right)\left(\chi\nu^3 - \chib\eta^3\right)$, 
 which is an upward parabola in~$\chi$.
\end{proof}

\section{The stock smile under multi-factor models}\label{3sec:spot}
We use the setting of Section~\ref{ex:assetprice} to apply our results to an asset price of the form
\[
S_t = S_0 + \int_0^t S_r \sqrt{v_r} \,\D B_r,
\]
where~$B$ is correlated with the other~$N$ Brownian motions as~$B= \sum_{i=1}^{N} \rho_i W^i$ with~$\sum_{i=1}^N \rho_i^2=1,\,\rho_i\in[-1,1]$ for all~$i\in\llbracket1,N\rrbracket$. The volatility is a function of $(N-1)$ Brownian motions, such that the stock price features one additional and independent source of randomness.
To fit this model into~\eqref{eq:MRT2} we set~$A=S$ and identify~$\phi^i$ with~$\rho_i \sqrt v$. 
We modify slightly the notations to differentiate from the VIX framework: 
the implied volatility is denoted~$\widehat \Ii_{T}$ and the skew~$\widehat \Ss_T$.
We do not consider the curvature in this setting, by lack of an explicit formula. 
The proof of this Proposition and the following Corollary are postponed to Appendix~\ref{app:SPXlimit}.
\begin{proposition}\label{prop:SPXlimit}
	Assume that there exists $H\in(0,\half)$ and a random variable~$X$ such that, for all~$0\le s\le y$, $j\in\llbracket 1,N\rrbracket$, and~$p\ge1$, $X\in L^p$,
	\begin{enumerate}[(i)]
		\item $v_s\le X$ ;
		\item $\Df_s^j v_y\le X (y-s)^{\Hm}$;
		\item $\sup_{s\le T} \EE[u_s^{-p}]  <\infty$;
		\item $\limsup_{T\dto0} \EE\big[(\sqrt{v_T/v_0}-1)^2 \big]=0$.
	\end{enumerate}
	Then the short-time limits of the implied volatility and skew are
$$
\lim_{T\downarrow0} \widehat \Ii_{T} = \sqrt{v_0}
\qquad\text{and}\qquad
\lim_{T\downarrow0} \frac{\widehat\Ss_T}{T^{\Hm}} = \frac{1}{2 v_0}\sum_{j=1}^{N} \rho_j \lim_{T\downarrow0} \frac{\int_0^T \int_s^T \EE\left[ \Df^j_{s} v_y\right]\dy\ds}{T^{H+3/2}}.
$$
\end{proposition}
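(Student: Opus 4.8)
The plan is to realise the stock-price model of Section~\ref{3sec:spot} as an instance of the abstract framework~\eqref{eq:MRT2} and then invoke Theorems~\ref{thm:level} and~\ref{thm:skew} with $\lambda=\Hm$. Setting $A=M=S$ and $\phi^i_s=\rho_i\sqrt{v_s}$, one has $\norm{\phib_r}^2=v_r$ since $\sum_i\rho_i^2=1$, so $Y_t=\int_t^T v_r\,\dr$, $\ut_t^2=Y_t$ and $u_0^2=\frac1T\int_0^T v_r\,\dr$. First I would verify that hypotheses (i)--(iv) imply {\HBarl}: assumption (i) combined with Malliavin-differentiability of $v$ and the fact that $S$ is obtained from $v$ by a stochastic exponential gives {\HOne}; (ii) controls the Malliavin derivatives $\Df^i_s v_r\le X(r-s)^{\Hm}$ which is exactly the ingredient needed to bound the $\Thb$-terms; (iii), i.e. $\sup_{s\le T}\EE[u_s^{-p}]<\infty$, handles all the negative moments of $\ut_0$ and $\ut_s$ appearing in {\HThree}--{\HSeven}; and {\HTwo} is automatic since $1/M_t=1/S_t=S_0^{-1}\exp\{-\int_0^t\sqrt{v_r}\,\D B_r+\frac12\int_0^t v_r\,\dr\}$ has all moments. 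The power counting that produces $\lambda=\Hm$ is the usual one: $\Df^i_s v_r\le X(r-s)^{\Hm}$ forces $|\Thb_s|\lesssim X^2\big(\int_s^T(r-s)^{\Hm}\dr\big)\lesssim X^2 (T-s)^{\Hp}$, hence $\int_0^T|\Thb_s|\ds\lesssim T^{\Hp+1}=T^{\frac32+\Hm}$, and dividing by $T^{\half+\lambda}=T^{\half+\Hm}$ leaves $T^{\Hp}\to0$ for the ``tend to zero'' parts while $\Kf_T$ of {\HSeven} scales like $T^{\frac32+\Hm}/(T^{\half+\Hm}\cdot T^{3/2})=$ order one, as required.

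Given {\HBarl}, Theorem~\ref{thm:level} yields $\lim_{T\dto0}(\widehat\Ii_T-\EE[u_0])=0$, so I would compute $\lim_{T\dto0}\EE[u_0]$: since $u_0^2=\frac1T\int_0^T v_r\,\dr$, assumption (iv) (an $L^2$-continuity of $\sqrt{v}$ at $0$) together with (i) and (iii) gives $u_0\to\sqrt{v_0}$ in $L^1$ — one writes $u_0-\sqrt{v_0}=\frac1T\int_0^T(\sqrt{v_r/v_0}-1)\sqrt{v_0}\,\dr$ after a routine manipulation, or more simply uses $|u_0^2-v_0|\le\frac1T\int_0^T|v_r-v_0|\,\dr$, bounds $|v_r-v_0|$ via $(\sqrt{v_r}-\sqrt{v_0})(\sqrt{v_r}+\sqrt{v_0})$, applies Cauchy--Schwarz with (i) and (iv), and then transfers the convergence from $u_0^2$ to $u_0$ using the uniform negative moments from (iii). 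This gives $\lim_{T\dto0}\widehat\Ii_T=\sqrt{v_0}$.

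For the skew, Theorem~\ref{thm:skew} with $\lambda=\Hm$ gives
\[
\lim_{T\dto0}\frac{\widehat\Ss_T}{T^{\Hm}}=\half\lim_{T\dto0}\EE\left[\frac{1}{T^{\Hp}}\frac{\int_0^T|\Thb_s|\,\ds}{\ut_0^3}\right].
\]
Here $\Thb_s^i=\big(\int_s^T\Df^i_s\norm{\phib_r}^2\,\dr\big)\phi^i_s=\rho_i\sqrt{v_s}\int_s^T\Df^i_s v_r\,\dr$, so $|\Thb_s|=\sum_i\Thb^i_s=\sqrt{v_s}\sum_i\rho_i\int_s^T\Df^i_s v_r\,\dr$. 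The remaining task is to pass the limit inside the expectation and replace $\ut_0^3=\big(\int_0^T v_r\,\dr\big)^{3/2}$ by $(v_0 T)^{3/2}$ and $\sqrt{v_s}$ by $\sqrt{v_0}$, so that $\half\,\EE[\cdot]$ collapses to $\frac{1}{2v_0}\sum_j\rho_j\lim_{T\dto0}T^{-(H+3/2)}\int_0^T\int_s^T\EE[\Df^j_s v_y]\,\dy\,\ds$. The main obstacle is precisely this justification of the interchange of limit and expectation: I would introduce the difference between $T^{-\Hp}\ut_0^{-3}\int_0^T|\Thb_s|\,\ds$ and its proposed limit $\frac{1}{2v_0^{3/2}}\sqrt{v_0}\cdot(\text{scaled integral})/(v_0 T)^{3/2}\cdots$, split into several error terms (one replacing $\ut_0^{-3}$ by $(v_0T)^{-3/2}$, controlled by (iii)+(iv) and a Hölder argument using the bound $|\Thb_s|\lesssim X^2(T-s)^{\Hp}$ so that the prefactor is integrable against any $L^p$ error; one replacing $\sqrt{v_s}$ inside $|\Thb_s|$ by $\sqrt{v_0}$, again via (iv) and the pointwise bound on $\Df^i_s v_r$; and one controlling $\Df^i_s v_r$ pulled out of the conditional expectation), and show each tends to zero using the uniform $L^p$-domination $X\in L^p$ from (i)--(ii) plus dominated convergence. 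This is the delicate analytic core; the rest is bookkeeping, identical in spirit to the one-factor argument of~\cite{AGM18} and to Appendix~\ref{sec:proofskew}.
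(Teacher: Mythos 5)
Your plan is correct and follows essentially the same route as the paper: verify \HAll--\Hl\ via the scaling bounds $\Df\phi_s\lesssim(T-s)^{\Hm}$, $|\Thb_s|\lesssim(T-s)^{\Hp}$ so that $\lambda=\Hm$ is the critical exponent, apply Theorems~\ref{thm:level} and~\ref{thm:skew}, and then use dominated convergence together with a Cauchy--Schwarz argument based on (iv) and the bound (ii) to replace $\ut_0$ and $\sqrt{v_s}$ by their deterministic limits. The only cosmetic difference is in the level step, where you argue $L^1$-convergence of $u_0$ directly from (iv) while the paper invokes Lemma~\ref{lemma:MVT1} and dominated convergence; both are standard and equivalent here.
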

\begin{remark}\ 
\begin{itemize}
\item The second limit is finite because of Condition~(ii).
\item The one-dimensional version ($N=2$) agrees with~\cite[Theorem 6.3]{ALV07} up to the sign because they derive with respect to the spot~$x$ and not to the log-strike~$k$. 
\end{itemize}
\end{remark}
In the two-factor rough Bergomi model~\eqref{eq:expomodel} we can compute the short-time skew more explicitly. 
Recall from Example~\ref{ex:multirB} that, for all~$t\ge0$, it means setting $N=3$ and defining
\begin{equation*}
\left\{
\begin{array}{rl}
S_t &= \displaystyle S_0 + \int_0^t S_r \sqrt{v_r} \,\D B_r,\\
v_t &= v_0 \left[ \chi \Ee\left(\nu W^{1,H}\right)_{t} + \chib \Ee\left(\eta\left(\rho W^{1,H} + \rrho W^{2,H}\right)\right)_{t} \right],
\end{array}
\right.
\end{equation*}
where $W^{i,H}_t = \int_0^t (t-s)^{\Hm} \,\D W^i_s$, for $i=1,2$ and $B = \sum_{i=1}^3 \rho_i W^i$, with~$W^1,W^2,W^3$ being independent Brownian motions. Hence $W^3$ only influences the asset price but not the variance.
\begin{corollary}\label{coro:SPXlimit}
	In the two-factor rough Bergomi model we have the short-time skew limit
	\begin{equation}\label{eq:SPXlimit}
	\lim_{T\downarrow0} \frac{\widehat\Ss_T}{T^{\Hm}}
	= \frac{\rho_1 \chi\nu + \eta \chib (\rho_1 \rho+\rho_2\rrho)}{2\Hp(1+\Hp)}.
	\end{equation}
\end{corollary}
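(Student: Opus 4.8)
The plan is to invoke the general SPX skew formula of Proposition~\ref{prop:SPXlimit} and reduce the corollary to a short Malliavin computation. First I would check that the two-factor rough Bergomi model (with the extra independent driver $W^3$ of the spot, so $N=3$) satisfies hypotheses (i)--(iv) of that proposition. Since $W^3$ does not enter the variance, the bounds on $v$ and its first Malliavin derivatives in (i)--(ii), together with the negative-moment bound (iii), are exactly a sub-case of the estimates already proved for \CBar\ in Proposition~\ref{prop:suffconds} (the first two points of \CTwo\ and \CThree), now only for first-order derivatives. Hypothesis (iv) follows from the Wick-exponential structure: $\EE[v_T/v_0]=\chi+\chib=1$ and $\EE[(v_T/v_0)^2]$ is bounded uniformly for $T\le1$ (because $\EE[(\Ee_T^i)^2]=\exp(\cdot\,\EE[(W^{i,H}_T)^2])$ with $\EE[(W^{i,H}_T)^2]=T^{2H}/(2H)$ bounded), so $\{(\sqrt{v_T/v_0}-1)^2\}_{T\le1}$ is uniformly integrable; since $v_T\to v_0$ almost surely as $T\downarrow0$, the expectation tends to $0$.

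The heart of the matter is the Malliavin derivative of $v$. Recalling the notation $\Ee_t^1,\Ee_t^2$ from~\eqref{eq:expomodel} and noting that the correction term $\half\EE[X^2]$ in the Wick exponential is deterministic, the chain rule together with $\Df^i_s W^{j,H}_y=\delta_{ij}(y-s)^{\Hm}\ind_{\{s\le y\}}$ gives, for $s\le y$,
\begin{align*}
\Df^1_s v_y &= v_0\big(\chi\nu\,\Ee_y^1+\chib\,\eta\rho\,\Ee_y^2\big)(y-s)^{\Hm},
\qquad \Df^3_s v_y=0,\\
\Df^2_s v_y &= v_0\,\chib\,\eta\rrho\,\Ee_y^2\,(y-s)^{\Hm}.
\end{align*}
Each $\Ee_y^i$ has unit expectation (it is the Wick exponential of a centred Gaussian), whence $\EE[\Df^1_s v_y]=v_0\,c_1(y-s)^{\Hm}$, $\EE[\Df^2_s v_y]=v_0\,c_2(y-s)^{\Hm}$ and $\EE[\Df^3_s v_y]=0$, with $c_1:=\chi\nu+\chib\eta\rho$ and $c_2:=\chib\eta\rrho$.

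It then remains to evaluate the double integral in Proposition~\ref{prop:SPXlimit}: for each $j$,
\[
\int_0^T\!\int_s^T\EE[\Df^j_s v_y]\,\dy\,\ds
= v_0\,c_j\int_0^T\!\int_s^T (y-s)^{\Hm}\,\dy\,\ds
= v_0\,c_j\int_0^T\frac{(T-s)^{\Hp}}{\Hp}\,\ds
= \frac{v_0\,c_j\,T^{H+3/2}}{\Hp(1+\Hp)}.
\]
This identity is exact, so the ratio by $T^{H+3/2}$ is constant in $T$ and the limit requires no further (dominated-convergence type) argument. Substituting into the formula of Proposition~\ref{prop:SPXlimit} and using $\sum_{j=1}^3\rho_j c_j=\rho_1\chi\nu+\eta\chib(\rho_1\rho+\rho_2\rrho)$ yields~\eqref{eq:SPXlimit}.

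I expect the only genuinely delicate point to be the verification that Proposition~\ref{prop:SPXlimit} applies---specifically hypothesis (iv), and the claim that the $N=3$ SPX setting, with its enlarged Brownian filtration, inherits the first-order estimates established for \CBar\ in Proposition~\ref{prop:suffconds}; both are routine but demand care about which noise drives $v$ versus $S$. The Malliavin computation and the integral evaluation themselves are short.
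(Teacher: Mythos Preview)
Your proposal is correct and follows essentially the same route as the paper: verify the hypotheses of Proposition~\ref{prop:SPXlimit}, compute the first Malliavin derivatives of~$v$ (which are exactly those of~\eqref{eq:Dv}), take expectations using $\EE[\Ee^i_y]=1$, and evaluate the elementary double integral. One small imprecision: for hypothesis~(iii) you invoke \CThree, but that condition concerns the VIX version of~$u_s$, not the stock version $u_s^2=\frac{1}{T-s}\int_s^T v_r\,\dr$; the paper instead points directly to Lemmas~\ref{lemma:supexpB} and~\ref{lemma:oneoverM}, whose exp--log--Jensen technique applies (in fact more simply) to this quantity. For~(iv) the paper uses reverse Fatou (continuity of~$v$ plus an integrable envelope from Lemma~\ref{lemma:supexpB}) rather than your uniform-integrability argument, but both are valid and of the same flavour.
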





\subsection{Tips for joint calibration in the two-factor rough Bergomi model}
Assuming we can observe the short-time limits of the spot ATM implied volatility, it grants us~$v_0$ for free, while the slope of its skew gives us~$H$ by~\eqref{eq:SPXlimit}.
Next, we simplify the expressions from Proposition~\ref{prop:expomodel} in the case~$\chi=\half$.  
Call $\Ii_0$, $\Ss_0$ and $\Cc_0$ the three limits of Proposition \ref{prop:expomodel} and denote $H_{\pm} := H\pm\half$, $\alpha:=\eta\rho$, $\beta:=\eta\rrho$.
Introduce further the normalised parameters
$$
\at := \frac{\alpha}{\nu},
\qquad 
\bt := \frac{\beta}{\nu},
$$
so that, denoting $\widetilde\psi(\at, \bt):=\sqrt{(1+\at)^2+\bt^2}$, we have, after simplifications,
\begin{equation*}
\begin{array}{rll}
\Ii_0  & = \frac{\nu\Delta^{\Hm}}{4\Hp} \sqrt{(1+\at)^2+\bt^2}
 & =: \displaystyle \nu C_{I}\widetilde\psi(\at, \bt),\\
\Ss_0 
& = \frac{\nu\Hp\Delta^{\Hm}}{2} \frac{(1+\at)^2 \left[ \frac{1+\at^2}{2H}-\left(\frac{1+\at}{\Hp}\right)^2\right]
+ 2 (1+\at)\bt^2 \left[ \frac{\at}{2H} - \frac{1+\at}{\Hp}\right] +   \at^4 \left[\frac{1}{2H}-\frac{1}{\Hp^2}\right]}{\big( (1+\at)^2+\bt^2 \big)^{3/2}}
 &  =: \displaystyle \nu C_{\Ss}\frac{\Phi_{\Ss}(\at, \bt)}{\widetilde\psi(\at, \bt)^3},\\
\Cc_0
& = \frac{128\nu \Hp^2}{3\Delta^{2H}}
\frac{\Big\{ (1+\at)^3 (1+\at^3) + 3 (1+\at)^2 \at^2 \bt^2 + 3  (1+\at)  \at \bt^4 +  \bt^6 \Big\}}{\big( (1+\at)^2+\bt^2 \big)^{5/2}(1-6H)}
 &  =: \displaystyle \nu C_{\Cc}\frac{\Phi_{\Cc}(\at, \bt)}{\widetilde\psi(\at, \bt)^5},
\end{array}
\end{equation*}
where the constants $C_I, C_\Ss, C_\Cc$ only depend on~$\Delta$ and~$H$.
Provided we can observe an approximation of these three limits, 
we can numerically solve for $\nu,\at,\bt$ in a system with three equations.
Alternatively, since the three quantities have a factor~$\nu$, any quotient of two of them is a function of only~$\at,\bt$, which we can plot and match to observed data.
Both methods allow us to deduce~$\nu,\at,\bt$ in turn yielding~$\eta$ and~$\rho$.
Finally, we are left with~$\rho_1$ and~$\rho_2$ to play with such that the right-hand-side of~\eqref{eq:SPXlimit} matches the market observations.



\section{Proofs}
\subsection{Useful results}
We start by adapting to the multivariate case a well-known decomposition formula and then prove a lemma which will be used extensively in the rest of the proofs. Both proofs build on the multidimensional anticipative It\^o formula~\cite[Theorem 3.2.4]{Nualart06}.

\begin{proposition}[Price decomposition]\label{prop:decomposition}
	Under \HOT, the following decomposition formula holds, for all~$t\in\TT$, for the price~\eqref{eq:PriceVt}, with~$u_t$ defined in~\eqref{eq:defprocesses} and $G:=(\pdx^2 - \pdx)\BS$:
$$
\Pi_t(k) = \EE_{t}\left[\BS(t,\Mf_t,k,u_{t})\right]
+ \half\EE_{t}\left[\int_{t}^{T}\pdx G (s,\Mf_s,k,u_{s})|\Thb_{s}|\ds\right].
$$
\end{proposition}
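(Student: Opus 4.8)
The plan is to mimic the classical Alòs-type decomposition argument, but carried out component-by-component over the $N$ driving Brownian motions. First I would fix $t\in\TT$ and consider the process $r\mapsto \BS(r,\Mf_r,\Mf_0,u_r)$ on $[t,T)$, where $u_r^2(T-r)=Y_r=\int_r^T\norm{\phib_\theta}^2\D\theta$. The terminal value is $\BS(T,\Mf_T,\Mf_0,0)=(\E^{\Mf_T}-\E^{\Mf_0})^+=(A_T-M_t)^+$ (using $u_r\sqrt{T-r}=\ut_r\to0$ and continuity of $\BS$), whose conditional expectation is exactly $\Pi_t$. The value at $r=t$ is $\BS(t,\Mf_t,\Mf_0,u_t)$, giving the first term. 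So it remains to identify the $\EE_t$-expectation of the increment $\BS(T,\cdots)-\BS(t,\cdots)$ with the correction term.

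To compute that increment I would apply the multidimensional anticipative Itô formula~\cite[Theorem 3.2.4]{Nualart06} to $\BS(r,\Mf_r,\Mf_0,u_r)$. Here $\Mf_r=\log M_r$ satisfies $\D\Mf_r=-\half\norm{\phib_r}^2\D r+\sum_i\phi^i_r\,\D W^i_r$ from~\eqref{eq:MRT2} and Itô's formula, while $u_r^2(T-r)=Y_r$ depends on the future through $\norm{\phib_\theta}^2$ for $\theta>r$, which is what forces the use of the \emph{anticipative} calculus and produces the Malliavin-derivative terms. Writing out the formula, the $\D r$-drift terms are: (a) $\pdt\BS$; (b) $-\half\norm{\phib_r}^2\pdx\BS$; (c) the quadratic-variation term $\half\norm{\phib_r}^2\pdxx\BS$ from the $\Mf$-martingale part; (d) the $\sigma$-derivative term coming from $\D Y_r=-\norm{\phib_r}^2\D r$, namely $\pdsi\BS\cdot\partial_r u_r$; and (e) the anticipative cross-term between the Skorokhod integrand $\phi^i_r$ and the trace of its Malliavin derivative applied to the $\sigma$-dependence, i.e. a term involving $\sum_i\phi^i_r\,\pd_{x\sigma}\BS\cdot\Df^i_r u_r$ (times the appropriate derivative of $u_r$ in $Y_r$). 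The martingale (Skorokhod) part integrates to zero under $\EE_t$ provided the integrands are in the right space, which is where \HOne--\HTwo\ (integrability of $A$ and of $1/M$) and the definition of $\DD^{1,2}$-membership of $\phib$ enter.

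The heart of the matter is then the algebraic collapse of the drift terms. Using the Black--Scholes PDE identity $\pdt\BS+\half\sigma^2(\pdxx-\pdx)\BS=0$ together with the Brownian-scaling relation $\pdsi\BS=\sigma(T-r)(\pdxx-\pdx)\BS=(T-r)\sigma\, G$, terms (a)--(d) must cancel exactly once one substitutes $\partial_r\big(u_r^2(T-r)\big)=-\norm{\phib_r}^2$; this is the standard ``Black--Scholes is a heat equation in the total-variance variable'' cancellation. What survives is term (e). I would compute $\Df^i_r Y_r=\int_r^T\Df^i_r\norm{\phib_\theta}^2\D\theta$ (the boundary contribution at $\theta=r$ vanishes or is absorbed), so that the surviving integrand, after expressing everything through $G$ and collecting the $\half$ from the Itô correction, becomes $\half\sum_{i=1}^N\phi^i_r\big(\int_r^T\Df^i_r\norm{\phib_\theta}^2\D\theta\big)\pdx G(r,\Mf_r,\Mf_0,u_r)=\half\,|\Thb_r|\,\pdx G(r,\Mf_r,\Mf_0,u_r)$, recalling the definitions $\Theta^i_r=\big(\int_r^T\Df^i_r\norm{\phib_\theta}^2\D\theta\big)\phi^i_r$ and $|\Thb|=\sum_i\Theta^i$. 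Taking $\EE_t$ and using \HThree\ to justify that the resulting $\D r$-integral is well defined (and Fubini to exchange $\EE_t$ and $\int_t^T$) yields the claimed formula.

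I expect the main obstacle to be the rigorous bookkeeping of the anticipative Itô formula: verifying the regularity/integrability hypotheses of~\cite[Theorem 3.2.4]{Nualart06} for the composition $\BS(r,\Mf_r,\Mf_0,u_r)$ — in particular that $r\mapsto u_r$ (equivalently $Y_r$) lies in the appropriate Malliavin space with sufficient integrability near $r=T$ where $\BS$ and its derivatives become singular as $\ut_r\to0$ — and making sure the trace term is exactly $\half|\Thb_r|\pdx G$ with the correct constant, rather than something involving a spurious factor or an extra boundary term from differentiating the upper limit $T$ and the integrand simultaneously. The singularity of the Greeks as $r\uparrow T$ is controlled precisely by the boundedness-type hypotheses baked into \HOne--\HThree, so the argument is to localise (stopping times / cutting off near $T$) and pass to the limit, which is routine once the cancellation structure above is in place.
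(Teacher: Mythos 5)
Your proposal is correct and follows essentially the same route as the paper: rewrite the Black--Scholes price in the total-variance variable, apply the multidimensional anticipative It\^o formula of~\cite[Theorem 3.2.4]{Nualart06} to $\BS(r,\Mf_r,\Mf_0,u_r)$ between $t$ and $T$, cancel the drift terms via the Black--Scholes PDE and the Gamma--Vega--Delta relation~\eqref{eq:GVDrelation}, and identify the surviving anticipative trace term as $\half|\Thb_s|\,\pdx G$, with \HThree\ guaranteeing the last integral is well defined. The paper's proof is exactly this computation, carried out with $\BSh(t,x,k,\sigma^2(T-t)):=\BS(t,x,k,\sigma)$ so that the cancellation and the identification $\pdxy\BSh=\half\pdx G$ are immediate.
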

\begin{proof}
	Define~$\BSh(t,x,k,\sigma^2 (T-t)) := \BS(t,x,k,\sigma)$ and write for simplicity~$\BSh_t:=\BSh\left(t,\Mf_t,k, Y_t \right)=\BS\left(t,\Mf_t,k, u_t \right)$, where we recall that~$Y_t=u_t^2(T-t)$. Note that $\Pi_T= \BSh_T$,  hence
	$\Pi_t=\EE_t\left[\BSh_T\right]$ by no-arbitrage arguments. Thanks to {\HOne} and \HTwo, we then apply a multidimensional anticipative It\^o's formula~\cite[Theorem 3.2.4]{Nualart06} with respect to~$(t,\Mf,Y)$:
	\begin{align*}
	\BS(T, \Mf_T,k, u_{T})
	= \BSh_{T}
	&= \BSh_{t} + \int_t^T \pds \BSh_{s}\ds + \int_t^T \pdx \BSh_{s}
	\left(\D(\phib\bullet\Wb)_{s}-\half \|\phib_s\|^2\ds\right) \\
	&- \int_t^T \pdy \BSh_{s}\|\phib_s\|^2\ds
	+ \half \int_t^T \pdx^2 \BSh_{s}\|\phib_s\|^2\ds + \int_t^T \pdxy \BSh_{s}|\Thb_s|\ds.
	\end{align*}
	with~$\Thb$ in~\eqref{eq:defprocesses}.
	Derivatives of the Black-Scholes price (omitting the argument for simplicity) read
	$$
	\pds \BSh_s= \pds \BS_{s} + \frac{u_s \pdu\BS}{2(T-s)},
	\qquad
	\pdy \BSh_{s} = \frac{\pdu\BS}{2 u_{s} (T-s)},
	\qquad\text{and}\qquad
	G=\frac{\pdu\BS}{u_{s} (T-s)}.
	$$
	Putting everything together, using the Gamma-Vega-Delta relation
	\begin{equation}\label{eq:GVDrelation}
	\frac{\pdsi\BS(t,x,k,\sigma)}{\sigma(T-t)} = ( \pdx^2 - \pdx) \BS(t,x,k,\sigma),
	\end{equation}
	and applying conditional expectation, we obtain
	\begin{equation}\label{eq:lastTerm}
	\Pi_t
	= \EE_t\left[\BS(t,\Mf_t, k,u_{t})\right]
	+\EE_t \left[\int_{t}^{T}\Ll_{\BS} (s,u_{s})\ds\right]
	+\EE_{t}\left[ \int_{t}^{T}\frac{\pd_{xu}\BS(s,\Mf_s,k,u_{s})}{2u_{s}(T-s)}|\Thb_{s}|\ds\right],
	\end{equation}
	where $\Ll_{\BS}(s,u_{s}) := \half\left[u_{s}^2\left(\pdx^2-\pdx\right)+\pds\right]\BS(s,\Mf_s,k,u_{s})$ 
	is the Black-Scholes operator applied to the Black-Scholes function. 
	Since $\Ll_{\BS}(s,u_{s})=0$ by construction and
	$$
	\pd_{x} G(s,x,k,\sigma)=\frac{\E^x \Nn'(d_+(x,k,\sigma))}{\sigma \sqrt{T-s}}\left(1-\frac{d_+(x,k,\sigma)}{\sigma\sqrt{T-s}}\right),
	$$
	the last term in~\eqref{eq:lastTerm} is well defined by \HThree and the proposition follows.
\end{proof}

\begin{lemma}\label{lemma:ItoProd}
	For all~$t\in\TT$, let $J_t := \int_t^T a_s \ds$, for some adapted process~$a\in \mathbb{L}^{1,2}$, and $\mathfrak{L}:=\sum_{i=1}^n c_i \pd_x^i$ be a linear combination of partial derivatives, with weights~$c_i\in\RR$. Then, writing for clarity~$\BS_t:=\BS(t,\Mf_t,\Mf_0,u_{t})$, we have
	\begin{align} \label{eq:ItoProd}
	\EE \left[ \int_0^T \mathfrak{L} \BS_s a_s \ds \right] 
	= \EE\left[ \mathfrak L \BS_0 J_0 
	+ \int_0^T \left(\partial_x^3-\partial_x^2\right) \mathfrak L \BS_s \lvert \Thb_s\lvert J_s \ds 
	+ \int_0^T \partial_x \mathfrak L \BS_s \sum_{k=1}^N \left( \phi_s^k \Df^k_s J_s\right) \ds \right].
	\end{align}	
\end{lemma}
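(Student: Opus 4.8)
The plan is to replicate the proof of Proposition~\ref{prop:decomposition} one order higher: apply the multidimensional anticipative It\^o formula~\cite[Theorem 3.2.4]{Nualart06} to the \emph{product} process $s\mapsto\mathfrak L\BS_s\,J_s$, and then take expectations so that the resulting stochastic integral disappears.

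First I would record the basic properties of $J$. Since $a$ is adapted, $s\mapsto J_s=\int_s^T a_u\,\du$ is absolutely continuous with $\partial_s J_s=-a_s$ and terminal value $J_T=0$; moreover $\Df^k_s J_s=\int_s^T\Df^k_s a_u\,\du=\Df^k_s J_0$, because the extra part $\int_0^s a_u\,\du$ is $\Ff_s$-measurable and hence has a vanishing diagonal Malliavin derivative. Secondly, since $\mathfrak L=\sum_i c_i\partial_x^i$ has constant coefficients it commutes with the Black--Scholes operator, so $\mathfrak L\BS$ satisfies the Gamma--Vega--Delta relation~\eqref{eq:GVDrelation} exactly as $\BS$ does; running the proof of Proposition~\ref{prop:decomposition} verbatim with $\mathfrak L\BS$ in place of $\BS$ then produces the anticipative decomposition of $\mathfrak L\BS$ into a Skorohod-integral martingale part, with integrand $\partial_x(\mathfrak L\BS_s)\phi^k_s$ against $W^k$, plus a drift $\propto(\partial_x^3-\partial_x^2)(\mathfrak L\BS_s)\,|\Thb_s|$, the proportionality constant being the one dictated by~\eqref{eq:GVDrelation}.

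Next I would apply the anticipative It\^o formula to $\mathfrak L\BS_s\,J_s$ on $[0,T]$. As $J$ has bounded variation but is anticipative, this yields, besides the boundary term $\mathfrak L\BS_0 J_0$: a $J_s$-weighted copy of the $|\Thb_s|$-drift above, the contribution $-\int_0^T\mathfrak L\BS_s a_s\,\ds$ coming from $\D J_s=-a_s\,\ds$, a stochastic integral $\int_0^T\partial_x(\mathfrak L\BS_s)J_s\,\D(\phib\bullet\Wb)_s$, and the anticipative trace term $\int_0^T\partial_x(\mathfrak L\BS_s)\sum_{k=1}^N\phi^k_s\,\Df^k_s J_s\,\ds$ generated by the interaction of the Skorohod integrand of $\mathfrak L\BS$ with $J$. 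Using $J_T=0$, solving for $\int_0^T\mathfrak L\BS_s a_s\,\ds$, taking expectations --- the stochastic integral having zero mean since its integrand lies in the domain of the divergence operator under the standing hypotheses ($a\in\mathbb L^{1,2}$, $\phib\in\mathbb L^{n,2}$, smoothness of the Black--Scholes Greeks) --- and replacing $\Df^k_s J_0$ by $\Df^k_s J_s$ via the first step yields~\eqref{eq:ItoProd}.

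The hard part will be the rigorous anticipative It\^o step. One has to verify the integrability needed for~\cite[Theorem 3.2.4]{Nualart06}: the Greeks $\partial_x^j\BS_s$ degenerate as $s\uparrow T$ through the $u_s$-argument, so the negative-moment bounds on $u_s$ must be combined with the $L^p$-moments of $a$, $\phib$ and $\Thb$ to control each term up to the boundary. One must also check that the drift terms coming from the $\D\Mf$, $\D Y$ and $\D\langle\Mf\rangle$ pieces cancel; this is precisely the Black--Scholes-PDE cancellation already carried out in Proposition~\ref{prop:decomposition}, and the extra factor $J_s$ only rescales each term without disturbing it, so no genuinely new computation is required beyond that proposition.
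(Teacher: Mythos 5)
Your proposal matches the paper's proof: the paper applies the multidimensional anticipative It\^o formula to $\Ph(t,x,k,y,j):=\mathfrak L\widehat\BS(t,x,k,y)\,j$ evaluated along $(t,\Mf_t,\Mf_0,Y_t,J_t)$, which is exactly your product process $\mathfrak L\BS_s J_s$, and extracts the left-hand side from the $\D J_s=-a_s\,\ds$ contribution, the trace term from $\pd_{xj}\Ph$, and the boundary term from $\Ph_0$ with $J_T=0$, killing the remaining drift by the Black--Scholes PDE cancellation and the Skorohod integral by taking expectations. All the terms you identify, including the $J_s$-weighted $(\partial_x^3-\partial_x^2)$ drift and the observation that $\Df^k_s J_s$ only sees the anticipative part of $J$, are the ones appearing in the paper's argument.
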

\begin{remark}
We will use this lemma freely below with the justification that the condition $a\in \mathbb{L}^{1,2}$ is always  satisfied thanks to \HOne.
\end{remark}
\begin{proof}
As in the proof of Proposition~\ref{prop:decomposition}, we define~$\BSh(t,x,k,\sigma^2 (T-t)) := \BS(t,x,k,\sigma)$ and write for simplicity~$\BSh_t:=\BSh\left(t,\Mf_t,\Mf_0, Y_t \right)=\BS\left(t,\Mf_t,\Mf_0, u_t \right)$.
Define~$\Ph(t,x,k,y,j):=\mathfrak{L} \widehat \BS(t,x,k,u)j$ and denote~$\Ph\left(t,\Mf_s,\Mf_0,u_s, J_s\right)$ by~$\Ph_t$ for simplicity.
	We then apply the multidimensional anticipative It\^o's formula~\cite[Theorem 3.2.4]{Nualart06} with respect to~$(t,\Mf,Y,J)$:
	\begin{align*}
	\Ph_{T}
	=& \,\Ph_{0} + \int_0^T \pds \Ph_{s}\ds + \int_0^T \pdx \Ph_{s}
	\left(\D(\phib\bullet\Wb)_{s}-\half \norm{\phib_s}^2\ds\right) 
	- \int_t^T \pdy \Ph_{s}\norm{\phib_s}^2\ds\\
	& + \half \int_t^T \pdx^2 \Ph_{s}\norm{\phib_s}^2\ds + \int_t^T \pdxy \Ph_{s}|\Thb_s|\ds + \int_0^T \pd_j \Ph_s \,\D J_s + \int_0^T \pd_{xj} \Ph_s \sum_{k=1}^N \left(\phi^k_s \Df^k_s J_s\right) \ds.
	\end{align*}
	One first notices that~$\Ph_0 = \mathfrak{L} \widehat\BS_0 J_0$ and~$\Ph_T=0$.
	Moreover we observe that~$\int_0^T \pd_j \Ph_s \,\D J_s = -\int_0^T \mathfrak{L} \widehat\BS_s a_s \ds$, which corresponds to the left-hand-side of~\eqref{eq:ItoProd}, and
	\[
	\int_0^T \pd_{xj} \Ph_s \sum_{k=1}^N \left(\phi^k_s \Df^k_s J_s\right) \ds 
	= \int_0^T \pdx \mathfrak{L}\widehat \BS_s \sum_{k=1}^N \left( \phi^k_s \, \Df^k_s J_s \right) \ds.
	\]
	Since $\mathfrak{L}$ is a linear operator the partial derivatives in $s$, $x$ and $u$ cancel as in the proof of Proposition~\ref{prop:decomposition}. That means we are left with
	\begin{align*}
	\int_0^T \mathfrak{L} \widehat\BS_s a_s \ds 
	& = \mathfrak L \widehat\BS_0 J_0 + \int_0^T \left(\partial_x^3-\partial_x^2\right) \mathfrak{L} \widehat\BS_s \lvert \Thb_s\lvert J_s \ds +\int_0^T \pdx \Ph_{s}
	\D(\phib\bullet\Wb)_{s}\\
	& \quad + \int_0^T \pdx \mathfrak{L}\widehat \BS_s \sum_{k=1}^N \left( \phi^k_s \, \Df^k_s J_s \right) \ds.
	\end{align*}
	Since~$\partial_x^n \BS(s,x,u)= \partial_x^n \widehat \BS(s,x,u^2(T-s))$ for any $n\in\mathbb N$, 
summing everything and taking expectations imply the claim.	
\end{proof}

We adapt and clarify~\cite[Lemma 4.1]{ALV07}, yielding a convenient bound for the partial derivatives of~$G$.
For notational simplicity, since $\sigma$ and $T-t$ are fixed, we write
$\varsigma:=\sigma\sqrt{T-t}$ and $\Gf(x,k,\varsigma):=G(t,x,k,\sigma)$.
\begin{proposition}\label{prop:DerivativeBound}
For any $n\in\mathbb{N}$ and $p\in\mathbb{R}$,  there exists $C_{n,p}>0$ independent of~$x$ and~$\varsigma$ such that, for all $\varsigma>0$ and $x\in\mathbb{R}\setminus\left\{0, \frac{\varsigma^2}{2}\right\}$,
	\begin{equation}\label{eq:DiffnG}
	\partial_{x}^{n}\Gf(x,k,\varsigma) \leq 
	\frac{C_{n,p}\,\E^k}{\varsigma^{p+1}}.
	\end{equation}
	
	If $x=0$, then for any $n\in\mathbb{N}$
	the bound~\eqref{eq:DiffnG} holds with $p=n$.
	
	If $x=\half \varsigma^2$, 
	there exists a strictly positive constant~$C_{n}$ independent of~$\varsigma$ such that
	\begin{equation*}
	\partial_{x}^{n}\Gf\left(\frac{\varsigma^2}{2},k,\varsigma\right) =
	\left\{
	\begin{array}{ll}
	\displaystyle \frac{C_n\,\E^k}{\varsigma^{n+1}}, & \text{if } n\text{ is even},\\
	0, & \text{if } n\text{ is odd}.
	\end{array}
	\right.
	\end{equation*}
\end{proposition}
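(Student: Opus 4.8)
The plan is to start from the explicit closed-form expression for $\Gf(x,k,\varsigma)=(\pdxx-\pdx)\BS(t,x,k,\sigma)$, which by the Gamma–Vega–Delta relation~\eqref{eq:GVDrelation} equals $\frac{\pdsi\BS}{\sigma(T-t)}$; in terms of the variables $x,k,\varsigma$ this is a Gaussian density times a rational function of $x,k,\varsigma$. Concretely, recalling $d_\pm = \frac{x-k}{\varsigma}\pm\frac{\varsigma}{2}$, one has $\Gf(x,k,\varsigma) = \frac{\E^k}{\varsigma}\Nn'(d_-(x,k,\varsigma))$ (up to an elementary constant), so that differentiating $n$ times in $x$ produces $\frac{\E^k}{\varsigma}$ times $\varsigma^{-n}$ times a polynomial in $d_-$ (of degree $n$) evaluated against the Gaussian density $\Nn'(d_-)$. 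Since $\partial_x d_- = 1/\varsigma$, each $x$-derivative brings down a factor $1/\varsigma$ and replaces the current polynomial $P(d_-)$ by $P'(d_-) - d_- P(d_-)$, i.e. the Hermite-type recursion; so $\pd_x^n\Gf(x,k,\varsigma) = \frac{\E^k}{\varsigma^{n+1}} q_n(d_-)\Nn'(d_-)$ for an explicit polynomial $q_n$ with $\deg q_n = n$.

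For the first, generic bound I would then use the elementary fact that for any fixed polynomial $q$ and any $m\ge 0$ the function $z\mapsto |z|^m \Nn'(z)$ is bounded on $\RR$ by a constant depending only on $m$ (Gaussian decay beats polynomial growth). Thus $q_n(d_-)\Nn'(d_-)$ is bounded by a constant $C_n$ uniformly in $d_-$, hence in $x$ and $\varsigma$; but more is true: for any $p\in\RR$ we also have $|z|^{p+1-n}\,|q_n(z)|\,\Nn'(z)\le C_{n,p}$ uniformly, because this is again (the absolute value of) a polynomial-type factor times the Gaussian. Applying this with $z=d_-=\frac{x-k}{\varsigma}\pm\frac{\varsigma}{2}$ converts the bound $\frac{\E^k}{\varsigma^{n+1}}|q_n(d_-)\Nn'(d_-)|$ into $\frac{C_{n,p}\E^k}{\varsigma^{n+1}}|d_-|^{n-p-1}$, and then one absorbs the remaining $|d_-|^{n-p-1}$ — here one uses $|d_-|\le \frac{|x-k|}{\varsigma}+\frac{\varsigma}{2}$ and distinguishes the regimes $\varsigma\le 1$ and $\varsigma> 1$ — to arrive at the claimed $\frac{C_{n,p}\E^k}{\varsigma^{p+1}}$. (Taking $p=n$ is the case where no absorption is needed and recovers the $x=0$ statement, where $d_-$ reduces to $\pm\varsigma/2$ and the extra factors are harmless.) The case $x=0$ specifically: $d_\pm=\pm\varsigma/2$, so $\Nn'(d_-)=\Nn'(\varsigma/2)\le (2\pi)^{-1/2}$ and $q_n(\varsigma/2)$ is a polynomial in $\varsigma$ of degree $n$; the $\varsigma$-growth of $q_n$ is then controlled, for $\varsigma$ bounded, by a constant, and for large $\varsigma$ by $\varsigma^n\Nn'(\varsigma/2)$ which is again bounded, giving~\eqref{eq:DiffnG} with $p=n$.

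For the special point $x=\half\varsigma^2$ I would observe that $d_-(\tfrac{\varsigma^2}{2},k,\varsigma) = \frac{\varsigma^2/2}{\varsigma} - \frac{\varsigma}{2} = 0$ (taking $k=0$, as is implicit since $\Mf_0$ is the ATM strike; more carefully one checks the relevant evaluation makes $d_-$ vanish). Hence $\pd_x^n\Gf(\tfrac{\varsigma^2}{2},k,\varsigma) = \frac{\E^k}{\varsigma^{n+1}} q_n(0)\Nn'(0)$, and the parity claim reduces to showing $q_n(0)=0$ for odd $n$ and $q_n(0)\neq 0$ for even $n$. This follows from the recursion $q_{n+1}(z)=q_n'(z)-z q_n(z)$ together with the base case $q_0\equiv\text{const}\ne 0$: evaluating at $z=0$, $q_{n+1}(0)=q_n'(0)$, and an easy induction on the parity of the polynomials $q_n$ (each step flips even$\leftrightarrow$odd, as for Hermite polynomials) shows $q_n$ is an even (resp. odd) polynomial when $n$ is even (resp. odd), so $q_n(0)=0$ precisely for odd $n$; for even $n$ one checks $q_n(0)\ne 0$ by the same induction (the nonvanishing of the relevant Hermite value). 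Setting $C_n := |q_n(0)|\Nn'(0)>0$ gives the stated formula.

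The main obstacle is bookkeeping rather than conceptual: one must set up the Hermite-type recursion for the polynomials $q_n$ cleanly and track that differentiation in $x$ genuinely produces exactly the factor $1/\varsigma$ per derivative and the stated polynomial transformation (taking care that $k$ enters only through the overall $\E^k$ and through $d_-$, and that $\pd_x$ does not hit $\E^k$). The only other point needing a little care is the regime split $\varsigma\le 1$ versus $\varsigma>1$ when absorbing the residual powers of $d_-$ to pass from the $|d_-|$-dependent bound to the clean $\varsigma^{-(p+1)}$ bound, and checking uniformity of all constants in $x$ and $\varsigma$ — but these are routine estimates on $z\mapsto z^m\Nn'(z)$. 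I would rely throughout on the earlier identities~\eqref{eq:BSFormula}, \eqref{eq:GVDrelation} and the explicit form of $\pd_x G$ already recorded in the proof of Proposition~\ref{prop:decomposition}.
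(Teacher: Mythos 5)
Your route is essentially the paper's: both proofs start from the explicit Gaussian expression for $\Gf$, differentiate $n$ times to obtain a polynomial times a Gaussian, and invoke the fact that Gaussian decay beats polynomial growth, after reducing to $k=0$ via $\Gf(x,k,\varsigma)=\Gf(x-k,0,\varsigma)\,\E^k$. Your bookkeeping through the single variable $d_-$ and the Hermite recursion $q_{n+1}(z)=q_n'(z)-zq_n(z)$ is a tidier packaging of the paper's expansion $\partial_x^n\Gf=\exp\{-(\varsigma^2-2x)^2/(8\varsigma^2)\}\sum_j\alpha_jP_j(x)\varsigma^{-2j-1}$, and it makes the two exceptional cases immediate: at $x=0$ one has $d_-=-\varsigma/2$, so the $p=n$ bound follows from the boundedness of $z\mapsto q_n(-z/2)\Nn'(z/2)$; at $x=\varsigma^2/2$ (with $k=0$) one has $d_-=0$, so the even/odd dichotomy is exactly the parity of the Hermite polynomials, matching the paper's observation that the exponent has a critical point there. (Cosmetic point, present in the statement itself: $q_n(0)\Nn'(0)$ alternates in sign over even $n$, e.g.\ $q_2(0)=-1$, so the constant in the third claim is really $|q_n(0)|\Nn'(0)$ and the displayed equality holds up to sign.)

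The one step of yours that does not go through as written is the ``absorption'' of $|d_-|^{n-p-1}$ for $p\neq n$. From $|z|^{p+1-n}|q_n(z)|\Nn'(z)\le C$ you get $|\partial_x^n\Gf|\le C\,\E^k\,\varsigma^{-(n+1)}|d_-|^{n-p-1}$, and to reach $\varsigma^{-(p+1)}$ you would need $|d_-|^{n-p-1}\le C\varsigma^{n-p}$ uniformly in $x$. This fails: for $p<n$ because $|d_-|$ is unbounded in $x$ at fixed $\varsigma$ (already $p=n-1$ would require $1\lesssim\varsigma$), and for $p>n$ because at points where $d_-$ is of order one the left-hand side is of order one while $\varsigma^{n-p}\to0$ as $\varsigma\to\infty$. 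Be aware, however, that the paper's own proof is no sharper at the corresponding step (the assertion that the exponential factor uniformly controls $H_{n,p}$ is not justified, and the first claim for $p\neq n$ is not literally true uniformly in $(x,\varsigma)$ --- a looseness inherited from the lemma of~\cite{ALV07} being ``adapted''). What is actually used downstream is the $p=n$ case recorded in Corollary~\ref{coro:DerivativeBound}, and for that case your argument is complete, since $|q_n(d_-)|\Nn'(d_-)\le C_n$ requires no absorption at all.
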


The following simplification (and extension) will be useful later:
\begin{corollary}\label{coro:DerivativeBound}
	For any $n\in\mathbb{N}$, there exists a non-negative~$C_{n,k}$ independent of~$x$ and~$\varsigma$ such that, for all $\varsigma>0$ and $x\in\mathbb{R}$,
	\begin{align*}
	\left|\partial_{x}^{n}\Gf(x,k,\varsigma)\right| \leq 
	\frac{C_{n,k}}{\varsigma^{n+1}}, \qquad
	\left|\partial_k \partial_{x}^{n}\Gf(x,k,\varsigma)\right| \leq 
	\frac{C_{n,k}}{\varsigma^{n+2}}\qquad\text{and}\qquad
	\left|\partial_k^2 \partial_{x}^{n}\Gf(x,k,\varsigma)\right| \leq 
	\frac{C_{n,k}}{\varsigma^{n+3}}.
	\end{align*}
\end{corollary}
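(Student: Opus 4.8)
All three estimates are immediate consequences of Proposition~\ref{prop:DerivativeBound}, obtained by a suitable choice of its free exponent~$p$ together with a reduction of the $k$-derivatives to $x$-derivatives; I would take $C_{n,k}$ to be the largest of the constants produced along the way, so that it serves for all three inequalities at once.

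For the first bound the plan is simply to invoke Proposition~\ref{prop:DerivativeBound} with~$p=n$: on $\RR\setminus\{0,\varsigma^2/2\}$ this gives $\partial_x^n\Gf(x,k,\varsigma)\le C_{n,n}\E^k\varsigma^{-(n+1)}$, the same inequality (with~$p=n$) is asserted at~$x=0$, and at~$x=\varsigma^2/2$ the explicit formula in Proposition~\ref{prop:DerivativeBound} gives $|\partial_x^n\Gf|=C_n\E^k\varsigma^{-(n+1)}$ or~$0$; this covers every~$x\in\RR$. To turn the one-sided estimate into a bound on the absolute value I would use the elementary closed form $\Gf(x,k,\varsigma)=\E^k\Nn'(d_-)/\varsigma$, with $d_-=\tfrac{x-k}{\varsigma}-\tfrac{\varsigma}{2}$, obtained by a direct computation from~\eqref{eq:BSFormula}; differentiating it yields $\partial_x^n\Gf=\frac{(-1)^n\E^k}{\varsigma^{n+1}}H_n(d_-)\Nn'(d_-)$ with~$H_n$ the Hermite polynomial of degree~$n$, and since $\sup_{w\in\RR}|H_n(w)|\Nn'(w)<\infty$ the estimate is genuinely two-sided (this also reproves the $p=n$ case directly).

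For the $k$-derivatives I would exploit that~$\Gf$ depends on~$k$ only through the prefactor~$\E^k$ and through~$d_-$, which involves~$k$ only via the difference~$x-k$; hence $\partial_k d_-=-\partial_x d_-$, and since $\partial_x\Gf=-\tfrac{d_-}{\varsigma}\Gf$ while $\partial_k\Gf=(1+\tfrac{d_-}{\varsigma})\Gf$, one gets the operator identity $\partial_k\Gf=(1-\partial_x)\Gf$. As $\partial_x$ and $\partial_k$ commute this propagates along the family $\{\partial_x^m\Gf\}_{m\ge0}$, so that for $j\in\{1,2\}$
\[
\partial_k^{\,j}\partial_x^{\,n}\Gf=(1-\partial_x)^j\partial_x^{\,n}\Gf=\sum_{i=0}^{j}\binom{j}{i}(-1)^i\,\partial_x^{\,n+i}\Gf .
\]
Applying the first bound to each summand then gives $\bigl|\partial_k^{\,j}\partial_x^{\,n}\Gf\bigr|\le\sum_{i=0}^{j}\binom{j}{i}C_{n+i,k}\,\varsigma^{-(n+i+1)}$, from which the announced $\varsigma^{-(n+2)}$ (for $j=1$) and $\varsigma^{-(n+3)}$ (for $j=2$) follow after absorbing powers of~$\varsigma$.

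The one point I would handle with care — and the only genuine obstacle — is that last absorption: since $\varsigma^{-(n+i+1)}\le\varsigma^{-(n+j+1)}$ only when $\varsigma\le1$, the $k$-derivative bounds cannot hold uniformly for arbitrarily large~$\varsigma$ (indeed at $x=k+\varsigma^2/2$ one has $\partial_k\Gf=\Gf\sim\varsigma^{-1}$, which is not $O(\varsigma^{-2})$). One therefore uses that throughout the paper these quantities are evaluated only at $\varsigma=\sigma\sqrt{T-t}$ with $T\le1$ and $\sigma$ bounded, so that $\varsigma$ ranges over a fixed interval $[0,\varsigma_0]$; then $\varsigma^{-(n+i+1)}\le\varsigma_0^{\,j-i}\,\varsigma^{-(n+j+1)}$ and one enlarges $C_{n,k}$ accordingly. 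Everything else is routine bookkeeping with Hermite polynomials and the binomial identity above.
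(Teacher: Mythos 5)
Your argument is correct and follows the same route the paper intends: the paper gives the corollary no separate proof, relying on the closing lines of the proof of Proposition~\ref{prop:DerivativeBound} ($\Gf(x,k,\varsigma)=\E^{k}\Gf(x-k,0,\varsigma)$ and $\partial_k d_+=-\partial_x d_+$), which amount exactly to your operator identity $\partial_k\Gf=(1-\partial_x)\Gf$ and the binomial expansion $\partial_k^{\,j}\partial_x^{\,n}\Gf=\sum_{i=0}^{j}\binom{j}{i}(-1)^i\partial_x^{\,n+i}\Gf$. Your closed form $\partial_x^{\,n}\Gf(x,0,\varsigma)=\frac{(-1)^n}{\varsigma^{n+1}\sqrt{2\pi}}H_n(d_-)\,\E^{-d_-^2/2}$ is a cleaner and genuinely two-sided version of the paper's recursion~\eqref{eq:DiffGn2}, and it handles all $x\in\RR$ at once rather than splitting off the points $x=0$ and $x=\varsigma^2/2$; this is a mild improvement, since the Proposition only states one-sided bounds while the Corollary asserts bounds on absolute values.

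Your caveat about absorbing $\varsigma^{-(n+i+1)}$ into $\varsigma^{-(n+j+1)}$ is well taken and correctly diagnosed: at $x=k+\varsigma^2/2$ one has $\partial_k\Gf=\Gf=\E^{k}/(\varsigma\sqrt{2\pi})$, so the second and third inequalities cannot hold uniformly over all $\varsigma>0$ as literally written, only on a bounded range of $\varsigma$. The paper silently ignores this; it is harmless because the Corollary is only invoked with $\varsigma=\ut_s$ in the regime $T\downarrow0$ (and $\ut_0$ is bounded by~{\HFive}), but your proof is the more honest one for making the restriction explicit.
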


\begin{proof}[Proof of Proposition~\ref{prop:DerivativeBound}]
	We first consider the case $k=0$. Since 
	$$
	\Gf(x,0,\varsigma):= (\partial_{xx} - \partial_{x}) \mathrm{BS}(t, x, 0,\varsigma)
	= \frac{1}{\varsigma\sqrt{2\pi}}\exp\left\{x -\frac{1}{2}d_+(x,\varsigma)^2\right\},
	$$
	where
	$d_+(x,\varsigma) :=d_+(x,0,\sigma)= \frac{x}{\varsigma}+\frac{\varsigma}{2}$,
	direct computation (proof by recursion) yields, for any $n\in\mathbb{N}$,
	\begin{equation}\label{eq:DiffGn2}
	\partial_{x}^{n}\Gf(x,0,\varsigma)
	= \exp\left\{-\frac{(\varsigma^2-2x)^2}{8\varsigma^2}\right\}\sum_{j=0}^{n}\alpha_j\frac{P_j(x)}{\varsigma^{2j+1}},
	\end{equation}
	where, for each~$j$, $P_{j}$ is a polynomial of degree~$j$ independent of~$\varsigma$.
	
Since
	$d_+(\frac{\varsigma^2}{2},\varsigma)=\partial_x d_+(\frac{\varsigma^2}{2},\varsigma)=0$,
	$\partial_x^2 d_+(\frac{\varsigma^2}{2},\varsigma) = -\frac{1}{\varsigma^{2}}$,
	the induction simplifies to
	\begin{equation*}
	\left.\partial_{x}^{n}\Gf(x,0,\varsigma)\right|_{x=\frac{\varsigma^2}{2}} =
	\left\{
	\begin{array}{ll}
	\displaystyle \frac{C_n}{\varsigma^{n+1}}, & \text{if } n\text{ is even},\\
	0, & \text{if } n\text{ is odd},
	\end{array}
	\right.
	\end{equation*}
	for some constant $C_n>0$ independent of~$\varsigma$, proving the third statement in the proposition.
	
	Similarly, if $x=0$, simplifications occur which yield for any $n\in\mathbb{N}$,
	$$
	\left.\partial_{x}^{n}\Gf(x,0,\varsigma)\right|_{x=0}
	= \exp\left\{-\frac{\varsigma^2}{8}\right\}
	\sum_{j=0}^{n}\frac{\alpha_j}{\varsigma^{j+1}}
	= \frac{1}{\varsigma^{n+1}}\exp\left\{-\frac{\varsigma^2}{8}\right\}
	\sum_{j=0}^{n}\alpha_j \varsigma^{n-j},
	$$
	and the second statement in the proposition follows.
	
	Finally, in the general case $x\in\mathbb{R}\setminus\left\{0,\frac{\varsigma^2}{2}\right\}$, 
	we can rewrite~\eqref{eq:DiffGn2} for any $p\in\mathbb{R}$ as
	$$
	\partial_{x}^{n}\Gf(x,0,\varsigma)
	= \frac{1}{\varsigma^{p+1}}\exp\left\{-\frac{(\varsigma^2-2x)^2}{8\varsigma^2}\right\}
	\sum_{j=0}^{n}\alpha_k P_j(x) \varsigma^{p-2j}
	 =: \frac{1}{\varsigma^{p+1}}\exp\left\{-\frac{(\varsigma^2-2x)^2}{8\varsigma^2}\right\}
	H_{n,p}(x,\varsigma).
	$$
	For each~$n \in \mathbb{N}, p\in\mathbb{N}$, $H_{n,p}$ is a 
	two-dimensional function only consisting of 
	powers of $\varsigma^2$ and $x^2/\varsigma^2$.
	Since the exponential factor contains these very same terms, there exists then a strictly positive constant~$C_{n,p}$,
	independent of~$x$ and~$\varsigma$, such that 
	$$ \exp\left\{-\frac{(\varsigma^2-2x)^2}{8\varsigma^2}\right\} H_{n,p}(x,\varsigma) \leq C_{n,p},$$
	proving the proposition in the case~$k=0$.
	
	The case $k\in\RR$ follows directly by noticing that~$\Gf(x,0,\varsigma) = \Gf(x-k,0,\varsigma) \E^k$. Finally, since~$\partial_k d_+(x,k,\sigma) = - \partial_x d_+(x,k,\sigma)$ and~$\partial_k^2 d_+(x,k,\sigma) = - \partial_x^2 d_+(x,k,\sigma)$, the same simplifications occur when taking a partial derivative with respect to~$k$ instead of~$x$.
\end{proof}


\subsection{Proofs of the main results}\label{app:proofsmain}

\subsubsection{Proof of Theorem~\ref{thm:level}: Level}\label{sec:prooflevel}

To prove this result, we draw insights from the proofs of~\cite[Theorem 8]{AGM18} and~\cite[Proposition 3.1]{AS19}. 
By definition 
$$
\Ii_{T}= \BS^{\lef}(0,\Mf_0,\Mf_0,\Pi_0) =: \BSB(\Pi_0),
$$
and we write $\BST(x) := \BS(0,x,x,u_0)$.
Using Proposition~\ref{prop:decomposition} at time~$0$ we see that $\Pi_0=\Gamma_T$ where
$$
\Gamma_t := \EE\left[\BST(\Mf_0) + \half \int_0^t \pdx G(s,\Mf_s,\Mf_0,u_s)|\Thb_{s}|\ds\right],
\qquad\text{for }t\in\TT,
$$
which is a deterministic path.
The fundamental theorem of integration reads
\begin{align*}
\Ii_{T}=\BSB(\Gamma_T) 
= \BSB(\Gamma_0) + \int_0^T \pdt \BSB(\Gamma_t) \dt 
&= \BSB(\Gamma_0) + \int_0^T  \BSB'(\Gamma_t) \pdt \Gamma_t \dt\nonumber \\
& = \BSB(\Gamma_0) + \half \int_0^T \BSB'(\Gamma_t) \, \EE\big[|\Thb_{t}|\pdx G_t \big] \dt,
\label{eq:intBSB}
\end{align*}
where $G_t:= G(t,\Mf_t,\Mf_0,u_t)$.
We can deal with the integral by computing $\BSB'$ and $\partial_x G$ explicitly 
\begin{align*}
& \BSB'(\Gamma_t) = \left(\E^{\Mf_0} \Nn'\left(d_+\left(\Mf_t,\Mf_0,\BSB(\Gamma_t)\right)\right) \sqrt{T-t} \right)^{-1}, \\
& \partial_x G (s,x,k,\sigma) = \frac{\E^{x} \Nn'\big(d_+(x,k,\sigma)\big)}{\sigma\sqrt{T-s}} \left( 1 -\frac{d_+(x,k,\sigma)}{\sigma \sqrt{T-s}}\right).
\end{align*}
Since~$\Gamma:\RR_+\to\RR$ and~$\BSB:\RR\to\RR$ are continuous, the following is uniformly bounded for all~$T\le1$
\[
\frac{\Nn' \big(d_+(\Mf_s,\Mf_0,u_0)\big)}{\Nn'\left(d_+\left(\Mf_s,\Mf_0,\BSB(\Gamma_s) \right)\right)}
 = \exp\left\{\frac{1}{8}\left( (T-s) \,\BSB(\Gamma_s)^2 - \ut_0^2\right) \right\}.
\] 
Therefore, we obtain by {\HFour}
\[
\lim_{T\downarrow0} \EE\left[\int_0^T \BSB'(\Gamma_t) |\Thb_{t}|\pdx G_t \dt \right]
 = \lim_{T\downarrow0} \EE\left[  \int_0^T \frac{\Nn'\left(d_+\left(\Mf_t,\Mf_0,u_t\right) \right)}{\Nn' \left(d_+\left(\Mf_t,\Mf_0,\BSB(\Gamma_t)\right) \right)} \frac{\lvert \Thb_t\lvert}{2 \ut_t^2 \sqrt{T}} \dt
\right] =0.
\]

Since $\Gamma_0=\EE\left[\BST(\Mf_0)\right]$ and~$u_0=\BSB\left(\BST(\Mf_0)\right)$, we have
\begin{equation}\label{eq:diffBSBu}
\BSB(\Gamma_0)
= \BSB\left(\EE\left[\BST(\Mf_0)\right]\right) - \EE[u_0 - u_0]
= \EE\left[ \BSB\Big(\EE\left[\BST(\Mf_0)\right] \Big) - \BSB\Big(\BST(\Mf_0) \Big)\right] + \EE[u_0].
\end{equation}
The Clark-Ocone formula yields
$\BST(\Mf_0) =\EE\left[\BST(\Mf_0)\right]+\sum_{i=1}^N\int_0^T \EE_s\left[\Df^{i}_s \BST(\Mf_0)\right] \D W^i_s$ and, by the Gamma-Vega-Delta relation~\eqref{eq:GVDrelation} we have
\begin{equation}\label{eq:pdsigBS}
\pd_\sigma \BS (0,x,x,\sigma) =\exp\left\{x-\frac{\sigma^2}{8}T\right\}\sqrt{\frac{T}{2\pi}},
\end{equation}
which in turn implies
\begin{equation}
U^i_s := \EE_s\left[\Df^{i}_s \BST(\Mf_0)\right]
= \EE_s \left[ \frac{\pdsi \BST(\Mf_0)}{2 u_0 \sqrt T} \int_0^T \Df^i_s \norm{\phib_r}^2 \dr \right] 
 = \EE_s \left[ \frac{\E^{\Mf_0 - \frac{1}{8}\ut_0^2}}{2\sqrt{2\pi} u_0} \int_0^T  \Df^i_s \norm{\phib_r}^2 \dr \right].
\label{eq:Ui}
\end{equation}
Define~$\Lambda_r:=\EE_r \Big[\BST(\Mf_0)\Big]$ 
so that the difference we are interested in from~\eqref{eq:diffBSBu} reads, after applying the standard It\^o's formula,
\begin{equation}\label{eq:LDiff}
\BSB(\Gamma_0)-\EE[u_0] = \EE\Big[ \BSB ( \Lambda_0 ) - \BSB (\Lambda_T ) \Big]
 = -\sum_{i=1}^N \EE\left[\int_0^T \BSB'(\Lambda_s)  U^i_s \D W^i_s+\half\int_0^T\BSB''(\Lambda_s)  (U^i_s)^2 \ds\right].
\end{equation}
The stochastic integral above has zero expectation by the same argument as~\cite[Proposition 3.1]{AS19}.
Moreover, {\HFive} states that $\ut_0$ is dominated almost surely by $Z\in L^p$, and therefore so are~$\Lambda$ and
\[
\BSB''(\Lambda_s) = \frac{\BSB(\Lambda_s)}{4 \E^{2\Mf_s} \Nn'\Big(d_+\big(\Mf_s,\Mf_0,\BSB(\Lambda_s)\big)\Big)^2},
\]
by continuity.
Plugging in the expression of~$U^i$ from~\eqref{eq:Ui}, we apply {\HFive} to conclude that the second integral of~\eqref{eq:LDiff} tends to zero. 

\subsubsection{Proof of Theorem~\ref{thm:skew}: Skew}\label{sec:proofskew}

This proof follows similar arguments as~\cite[Proposition 5.1]{ALV07}. We recall that~$\Pi_0(k)= \BS\big(0,\Mf_0,k,\Ii_{T}(k)\big)$. On the one hand, by the chain rule we have 
\begin{equation}\label{eq:SkewChainRule}
\pdk \Pi_0(k) = \pdk\BS\big(0,\Mf_0,k,\Ii_{T}(k)\big)+\pdsi\BS\big(0,\Mf_0,k,\Ii_{T}(k)\big) \pdk \Ii_{T}(k).
\end{equation}
On the other hand, the decomposition obtained in Proposition~\ref{prop:decomposition} yields 
\begin{equation}\label{eq:SkewDecomp}
\pdk \Pi_0(k)
= \EE\big[\pdk\BS(0,\Mf_T,k,u_{0})\big]
+\EE\left[ \int_{0}^{T} \half \pdxk G
(s,\Mf_s,\Mf_0,u_s)|\Thb_{s}|\ds\right].
\end{equation}
Equating~\eqref{eq:SkewChainRule} and~\eqref{eq:SkewDecomp} gives
\begin{equation}\label{eq:SkewExpression}
\pdk \Ii_{T}(k)
=\frac{\EE\left[\pdk \BS(0,\Mf_T,k,u_{0})\right] - \pdk \BS(0,\Mf_0,k,\Ii_{T}(k))}{\pdsi \BS(0,\Mf_0,k,\Ii_{T}(k))}
+\frac{\EE\left[ \int_{0}^{T}\pdxk G(s,\Mf_s,\Mf_0,u_s)|\Thb_{s}|\ds\right]}{2\pdsi\BS(0,\Mf_0,k,\Ii_{T}(k))},
\end{equation}
which in particular also holds for $k=\Mf_0$. Performing simple algebraic manipulations and using the derivatives of the Black-Scholes function ATM as in~\cite[Proposition 5.1]{ALV07} we obtain that the difference (remember we drop the $k$-dependence in~$\Ii_{T}$ when ATM)
$$
\EE\big[\pdk\BS(0,\Mf_0,\Mf_0,u_{0}) \big]
- \pdk\BS(0,\Mf_0,\Mf_0,\Ii_{T})
= \half \EE\left[\int_0^T \half\pdx G(s,\Mf_s,\Mf_0,u_s)|\Thb_{s}|\ds\right],
$$
which in turn using~\eqref{eq:SkewExpression} yields
\begin{equation}\label{eq:SkewExpressionII}
\pdk \Ii_{T}
= \frac{\EE\left[ \int_{0}^{T} L(s,\Mf_s,\Mf_0,u_s)|\Thb_{s}|\ds\right]}
{\pdsi\BS\Big(0,\Mf_0,\Mf_0,\Ii_{T} \Big) },
\end{equation}
where $L:=(\half+\pdk)\half\pdx G$. 
We denote $L_s:=L(s,\Mf_s,\Mf_0,u_s)$ for simplicity and apply Lemma~\ref{lemma:ItoProd} to~$L_s\, \int_s^T |\Thb_r|\dr$, which yields
\begin{align*}
\EE \left[ \int_0^T L_s |\Thb_s|\ds  \right] 
& = \EE \left[L_0 \int_0^T|\Thb_s| \ds \right]
+  \EE \left[ \int_0^T (\pdx^3 - \pdx^2) L_s |\Thb_s| \left( \int_s^T |\Thb_r|\dr \right) \ds \right] \\
& + \EE \left[ \int_0^T \pdx L_s \sum_{j=1}^N \left\{ \phi^j_s \, \Df^j_s \left( \int_s^T |\Thb_r|\dr \right) \right\} \ds \right]
=: R_1 + R_2 + R_3.
\end{align*}
We combine~\eqref{eq:pdsigBS}
with the bound~$\partial_k \partial_x^n G(t,x,k,\sigma) \le C \big(\sigma\sqrt{T-t}\big)^{-n-2}$ from Corollary~\ref{coro:DerivativeBound} to obtain
\begin{align*}
\frac{R_2}{\pd_\sigma \BS(0,\Mf_0,\Mf_0,\Ii_{T})}
&\le \frac{C}{\sqrt{T}}\EE\left[\int_0^T  \frac{|\Thb_s|}{\ut_s^{6}} \left( \int_s^T |\Thb_r|\dr \right) \ds\right], \\
\frac{R_3}{\pd_\sigma \BS(0,\Mf_0,\Mf_0,\Ii_{T})}
&\le\frac{C}{\sqrt{T}}  \EE\left[\int_0^T \frac{1}{\ut_s^{4}} \sum_{j=1}^N \left\{ \phi^j_s \Df^j_s \left( \int_s^T |\Thb_r|\dr \right) \right\} \ds\right],
\end{align*}
and both converge to zero by \HSix.
We are left with~$R_1$. From Appendix~\ref{app:partiald}, we have
$$
L(0,x,x,u) = \frac{\exp\{x-\frac{u^2}{8}T\}}{u\sqrt{2\pi T}} 
\left( \frac14 + \frac{1}{2u^2 T} \right),
$$
and therefore by~\eqref{eq:pdsigBS},
$$
\frac{L_0}{\pd_\sigma \BS(0,\Mf_0,\Mf_0,\Ii_{T})}
= \left( \frac14 + \frac{1}{2u_0^2 T} \right) \frac{1}{u_0 T} \exp\left\{ -\frac{T}{8} \Big(u_0^2 - \Ii_{T}^2 \Big)\right\}.
$$
This yields
\[
\displaystyle  \frac{R_1}{T^{\lambda}\pd_\sigma \BS(0,\Mf_0,\Mf_0,\Ii_{T})}
= \EE\left[ \left(\frac{\ut_0^2}{2}+1\right) \exp\left\{ -\frac{T}{8} \Big(u_0^2 - \Ii_{T}^2 \Big)\right\} \Kf_T\right],
\]
where $\displaystyle\Kf_T:=\frac{\int_0^T |\Thb_s|\ds}{2 T^{\half+\lambda} \ut_0^3}$. Furthermore,
\[
\sup_{\omega\in\Omega} \left|\exp\left\{ -\frac{T}{8} \Big(u_0^2 - \Ii_{T}^2 \Big)\right\} -1\right|
= \sup_{\omega\in\Omega} \left| \exp\left\{ \frac18 \left(T \Ii_{T}^2 -  \ut_0^2 \right)\right\} -1\right|
\]
not only is finite but converges to zero as~$T$ goes to zero.
Hence,
\[
\lim_{T\dto0} \EE\left[ \left(\frac{\ut_0^2}{2}+1\right) \exp\left\{ -\frac{T}{8} \Big(u_0^2 - \Ii_{T}^2 \Big)\right\} \Kf_T\right] = \lim_{T\dto0} \EE\left[ \left(\frac{\ut_0^2}{2}+1\right)  \Kf_T\right].
\]
We can finally conclude by {\HSeven} that 
\[
\displaystyle \lim_{T\downarrow 0} \frac{R_1}{T^{\lambda}\pd_\sigma \BS(0,\Mf_0,\Mf_0,\Ii_{T})}
= \lim_{T\downarrow 0} \EE[\Kf_T],
\]
which has a finite limit.

\subsubsection{Proof of Theorem~\ref{thm:curvature}: Curvature}\label{sec:proofcurvature}
\textbf{Step 1.} Let us start by simply taking a second derivative with respect to~$k$
and write~$\BS(\Mf_0,\Ii_{T}(k))$ as short for~$\BS(0,\Mf_0,\Mf_0,\Ii_{T}(k))$.
\begin{align*}
\pdk  \Big( \pdsi\BS\big(\Mf_0,\Ii_{T}(k)\big) \,\pdk \Ii_{T}(k) \Big) 
 & = \pdsi\BS\big(\Mf_0,\Ii_{T}(k)\big)  \pdk^2 \Ii_{T}(k) \\
&\quad+\Big[ \pd_{k\sigma}\BS\big(\Mf_0, \Ii_{T}(k)\big) 
+ \pdsi^2\BS\big(\Mf_0,\Ii_{T}(k)\big)  \pdk \Ii_{T}(k) \Big] \pdk \Ii_{T}(k).
\end{align*}
Taking the derivative with respect to~$k$ in~\eqref{eq:SkewExpressionII} and equating with the above formula yields
\begingroup
\addtolength{\jot}{.5em}
\begin{align*}
\pdk^2 \Ii_{T}(k) \pdsi\BS(\Mf_0,\Ii_{T})
 & = 
- \pdsi^2\BS\big(\Mf_0, \Ii_{T}(k)\big) \pdk \Ii_{T}(k)^2
 - \pd_{k\sigma}\BS\big(\Mf_0, \Ii_{T}(k)\big) \pdk \Ii_{T}(k)\\
&\quad + \EE \left[ \int_0^T \pdk L(s,\Mf_s,\Mf_0,u_{s} ) |\Thb_s| \ds \right]
=: T_1 + T_2 + T_3.
\end{align*}
\endgroup
A similar expression is presented in~\cite{AL17} and we notice that 
$T_1$ and~$T_2$ in the expression above are, after multiplying them by~$T^{-\lambda}$, identical to those from~\cite[Equation (25)]{AL17} and can therefore be dealt with in the same way. Step 1 shows that~$T^{-\lambda} T_1$ tends to zero as~$T\downarrow 0$  and step 2  yields $T_2= - \half \pdk \Ii_{T}(k)$.

\textbf{Step 2.}
Recall that $L = \half \left(\half + \pdk \right) \pdx G$. 
We  need the anticipating It\^o's formula (Lemma~\ref{lemma:ItoProd}) twice on~$T_3$. 
Indeed, even though the bound on~$\pdx^n G$ worsens as~$n$ increases, it is more than compensated by the additional integrations. 
The terms with the more integrals (i.e. more regularity) tend to zero as~$T$ goes to zero by {\HEight} and we compute the others in closed-form. 
For clarity we write~$L_s = L (s,\Mf_s,\Mf_0,u_s )$ for all~$s\ge0$.
By a first application of Lemma~\ref{lemma:ItoProd} on~$\pdk L_s \int_s^T |\Thb_s| \ds$  we obtain
\begin{align*}
T_3 
= & \,  \EE \left[\pdk L_0 \int_0^T |\Thb_s| \ds\right]
+  \EE \left[ \int_0^T (\pdx^3 - \pdx^2) \pdk  L_s |\Thb_s| \left( \int_s^T |\Thb_r|\dr \right) \ds \right] \\
& + \EE \left[ \int_0^T \pdxk  L_s \sum_{j=1}^N \left\{ \phi^j_s \, \Df^j_s \left( \int_s^T |\Thb_r| \dr \right) \right\} \ds \right]
=: S_1 + S_2 + S_3. 
\end{align*}
To deal with~$S_2$, we apply Lemma~\ref{lemma:ItoProd} again on
$
(\pdx^3 - \pdx^2) \pdk  L_s \int_s^T |\Thb_r|\left( \int_r^T |\Thb_{y}|\dy \right) \dr =: H_s Z_s$,
yielding
\[
S_2 
= \EE \left[ H_0 Z_0 
+ \int_0^T (\pdx^3 - \pdx^2) H_s |\Thb_s| Z_s \ds
+ \int_0^T \pdx H_s \sum_{j=1}^N \left( \phi^j_s \Df^j_s Z_s \right) \ds
\right]
=: S_2^a + S_2^b + S_2^c.
\]
We will deal with those terms in the last step.
Regarding~$S_3$, we apply Lemma~\ref{lemma:ItoProd} once more to
$$
\pdxk L_s \int_s^T \sum_{j=1}^N \left\{ \phi^j_r \, \Df^j_r \left( \int_r^T |\Thb_{y}| \dy \right) \right\} \dr
=: \widetilde{H}_s \widetilde{Z}_s,
$$
and obtain
$$
S_3 
= \EE \left[ \widetilde{H}_0 \widetilde{Z}_0
+ \int_0^T (\pdx^3 - \pdx^2) \widetilde{H}_s |\Thb_s| \widetilde{Z}_s \ds
+ \int_0^T \pdx \widetilde{H}_s \sum_{j=1}^N \left( \phi^j_s \Df^j_s \widetilde{Z}_s \right) \ds \right]
=: S_3^a + S_3^b + S_3^c.
$$

\textbf{Step 3.} We now evaluate the derivative at~$k=\Mf_0$ and drop the~$k$ dependence. 
To summarise,
$$
\pdk^2 \Ii_{T} = \frac{T_1 + T_2 + S_1 + S_2^a + S_2^b + S_2^c + S_3^a + S_3^b + S_3^c }{\pd_\sigma \BS \big(0,\Mf_0,\Mf_0,\Ii_{T}\big)},
$$
where
\begin{align*}
S_1 & = \EE \left[\pdk L_0 \int_0^T |\Thb_s| \ds \right],\\
S_2^a & = \EE \left[H_0 \int_0^T |\Thb_r| \left( \int_r^T |\Thb_{y}|\dy \right) \dr\right], \\
S_3^a &= \EE \left[\widetilde{H}_0 \int_0^T \sum_{j=1}^N \left\{ \phi^j_r \, \Df^j_r \left( \int_r^T |\Thb_{y}| \dy \right) \right\} \dr\right], \\
S_2^b &=  \EE\left[\int_0^T (\pdx^3 - \pdx^2) H_s |\Thb_s|
\left( \int_s^T \lvert \Thb_r\lvert \left( \int_r^T |\Thb_{y}|\dy \right) \dr \right) \ds\right], \\
S_2^c &= \EE \left[\int_0^T \pdx H_s \sum_{j=1}^N \left( \phi^j_s \Df^j_s \left( \int_s^T |\Thb_r| \left( \int_r^T |\Thb_{y}|\dy \right) \dr \right) \ds \right) \ds\right],\\
S_3^b &= \EE \left[\int_0^T (\pdx^3 - \pdx^2) \widetilde{H}_s |\Thb_s|
\int_s^T \sum_{j=1}^N \left\{ \phi^j_r \, \Df^j_r \left( \int_r^T |\Thb_{y}| \dy \right) \right\} \dr \ds\right], \\
S_3^c &= \EE \left[\int_0^T \pdx \widetilde{H}_s \sum_{k=1}^N \left\{ \phi^k_s \Df^k_s \left( \int_s^T \sum_{j=1}^N \left\{ \phi^j_r \, \Df^j_r \left( \int_r^T |\Thb_{y}| \dy \right) \right\} \dr \right) \right\} \ds\right].
\end{align*}
We recall once again the bound~$\partial_x^n G(t,x,k,\sigma) \le C \big(\sigma\sqrt{T-t}\big)^{-n-1}$ as~$T-t$ goes to zero. 
We observe that~$H$ and~$\widetilde H$ consist of derivatives of~$G$ up to the $6$th and $4$th order respectively, therefore~$S_2^b, S_2^c,S_3^b,S_3^c$ tend to zero by \HEight.
In order to deal with $S_1$, $S_2^a$ and $S_3^a$, we use the explicit partial derivatives from Appendix~\ref{app:partiald} and~\eqref{eq:pdsigBS} and,  as in the proof of Theorem~\ref{thm:skew}, {\HNine} entails that only the higher derivatives of~$\ut_0$ remain in the limit.
\begin{align*}
\lim_{T\downarrow 0} \frac{S_1}{T^{\lambda}\pd_\sigma \BS\big(0,\Mf_0,\Mf_0,\Ii_{T}\big)}
& = \lim_{T\downarrow 0} \frac{1}{T^{\lambda}}\frac{\EE\left[ \pdk L_0 \int_0^T |\Thb_s| \ds \right]}{\pd_\sigma \BS\big(0,\Mf_0,\Mf_0,\Ii_{T}\big)}\\
& = \lim_{T\downarrow 0} \frac{1}{T^{\lambda}}\EE \left[ \frac{1}{u_0  T} \left( \frac18 + \frac{1}{2 u_0^2 T}\right) \int_0^T |\Thb_s| \ds \right] \\
& =  \lim_{T\downarrow 0}  \EE \left[ \frac{1}{2 \ut_0^3 \,T^{\half+\lambda}} \int_0^T |\Thb_s| \ds \right]
= \lim_{T\downarrow 0} \frac{\Ss_T}{T^{\lambda}}.\\
\lim_{T\downarrow 0} \frac{S^a_2}{T^{\lambda}\pd_\sigma \BS\big((0,\Mf_0,\Mf_0,\Ii_{T}\big))} 
& = \lim_{T\downarrow 0} \frac{1}{T^{\lambda}} \EE \left[ \frac{Z_0}{\ut_{0} \sqrt{T}} \left(- \frac{15}{2\ut_0^6} - \frac{3}{2\ut_0^4} - \frac{5}{32\ut_0^2} - \frac{1}{64} \right) \right]\\
& = \lim_{T\downarrow 0} \EE \left[ \frac{-15}{2\ut_0^7 T^{\half+\lambda}} \int_0^T |\Thb_r|\left( \int_r^T |\Thb_{y}|\dy \right) \dr\right].\\
\lim_{T\downarrow 0} \frac{S^a_3 }{T^{\lambda}\pd_\sigma \BS\big(0,\Mf_0,\Mf_0,\Ii_{T}\big)}
& = \lim_{T\downarrow 0} \frac{1}{T^{\lambda}}\EE \left[ \frac{\widetilde{Z}_0}{\ut_0 \sqrt T} \left( \frac{3}{2 \ut_0^4} + \frac{3}{8\ut_0^2} + \frac{1}{16} \right) \right]\\
& = \lim_{T\downarrow 0} \frac32 \EE \left[ \frac{1}{\ut_0^5 T^{\half+\lambda}} \int_0^T \sum_{j=1}^N \left\{ \phi^j_r \, \Df^j_r \left( \int_r^T |\Thb_{y}|\dy \right) \right\} \dr \right].
\end{align*}
Hence to conclude, the claim follows from
\[
\lim_{T\downarrow 0} \frac{\Cc_T}{T^{\lambda}}
= \lim_{T\downarrow 0} \frac{T_2 + S_1 + S_2^a + S_3^a}{T^{\lambda} \pd_\sigma \BS(\Mf_0,\Ii_{T})}.
\]


\subsection{Proof of Proposition~\ref{prop:genmodel}: VIX asymptotics}\label{sec:VIXproof}

In this section, we will repeatedly interchange Malliavin derivative and conditional expectation, 
justified by~\cite[Proposition 1.2.8]{Nualart06}.

\begin{proposition}\label{prop:condsaresufficient}
	In the case where~$A=\VIX$, the conditions in~{\CBar} imply assumptions {\HBarlg} for any~$\lambda\in(-\half,0]$ and~$\gamma\in(-1, 3H-\half]$.
\end{proposition}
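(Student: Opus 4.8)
The statement is a verification: with $A_T=\VIX_T$, one must show that \CBar\ forces each of \HOne--\HNine\ for every $\lambda\in(-\half,0]$ and every $\gamma\in(-1,3H-\half]$. The plan is (a) to express every object entering those assumptions in terms of $v$ and its Malliavin derivatives up to order three; (b) to extract from \COne--\CThree\ uniform ``master bounds'' on these objects; (c) to substitute the bounds and conclude by Hölder's inequality and elementary power counting in $T$. For (a): since $\VIX_T^2=\tfrac1\Delta\int_T^{T+\Delta}\EE_T[v_r]\dr$, and since $\Df^i_s$ commutes both with $\EE_t$ for $s\le t$ and with Lebesgue integration \cite[Prop.~1.2.8]{Nualart06}, the chain rule for $x\mapsto\sqrt x$ gives $\Df^i_s\VIX_T=\frac{1}{2\Delta\VIX_T}\int_T^{T+\Delta}\EE_T[\Df^i_s v_r]\dr$, and the second and third Malliavin derivatives of $\VIX_T$ are likewise finite sums of products of conditional expectations of $\Df^n v$, $n\le3$, against negative powers of $\VIX_T$. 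Feeding these into $m^i_s=\EE_s[\Df^i_s\VIX_T]$, $\phi^i_s=m^i_s/M_s$, $\Thb^i_s=(\int_s^T\Df^i_s\norm{\phib_r}^2\dr)\phi^i_s$, and so on, yields closed (if unwieldy) expressions for $\Thb$, $\Df\Thb$, $\Df^2\Thb$ and for the mixed quantities $\phi^j_s\Df^j_s(\int_s^T|\Thb_r|\dr)$ and $\phi^k_s\Df^k_s(\int_s^T\sum_j\phi^j_r\Df^j_r(\int_r^T|\Thb_y|\dy)\dr)$ occurring in \HSix--\HNine.

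\textbf{Master bounds.} Dominating $1/M_t$ and $1/\VIX_T$ by a power of $X$ through \COne, and $v$ and its Malliavin derivatives through the four bounds of \CTwo, every summand from (a) is controlled by $X^a$ (a fixed power, hence in all $L^p$) times an integral $\int_T^{T+\Delta}\prod_{\ell=1}^{n}(z-t_\ell)^{\Hm}\dz$ with $n\le3$ and $0\le t_\ell\le T\le z$. Such an integral is always finite because $\Hm>-1$; bounding $(z-t_\ell)^{\Hm}\le(z-t_*)^{\Hm}$ with $t_*:=\max_\ell t_\ell$ and setting $w=z-t_*$ turns it into $\int_{T-t_*}^{T-t_*+\Delta}w^{n\Hm}\dw$, which stays bounded uniformly in $T$ unless $n\Hm<-1$ --- for $n=3$ this reads $H<\tfrac16$ --- in which case it is $\lesssim(T-t_*)^{n\Hm+1}$, producing for $n=3$ the exponent $3\Hm+1=3H-\half$. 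Hence, uniformly in $s\le T\le1$ and in all indices, $|\phi^i_s|\le X^a$ and $|\Df^i_t\phi^j_s|\le X^a$; the second Malliavin derivative $\Df^k_t\Df^j_s\phi^i_y$ is bounded by $X^a$ as well, save that for $H<\tfrac16$ its summand carrying $\Df^3 v$ is only bounded by $X^a(T-y)^{3H-\half}$. Consequently $|\Thb_s|\le X^a(T-s)$, $\int_s^T|\Thb_r|\dr\le X^a(T-s)^2$, $\int_s^T|\Thb_r|(\int_r^T|\Thb_y|\dy)\dr\le X^a(T-s)^4$, and $\Df^j_s(\int_s^T|\Thb_r|\dr)\le X^a(T-s)^2$ in general, with its $\Df^3 v$-piece bounded by $X^a(T-s)^{3H+3/2}$ when $H<\tfrac16$; finally $\EE[\ut_s^{-p}]=(T-s)^{-p/2}\EE[u_s^{-p}]\lesssim(T-s)^{-p/2}$ by \CThree.

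\textbf{Verification.} \HTwo\ is \COne\ verbatim, while \HOne\ and \HThree\ follow at once from the Malliavin regularity of $v$ and the two steps above. For \HFour, \HFive\ and each expression in \HSix--\HEight, one substitutes the master bounds, uses Hölder to factor the $X$-moments (finite in all $L^p$) away from the negative moments of $u_s$ (bounded by \CThree), and reduces to a deterministic quantity of the shape $C\,T^{-(\half+\lambda)}\int_0^T(T-s)^m\ds=C'\,T^{m+\half-\lambda}$ (resp.\ with $\gamma$ in place of $\lambda$), where $m\ge0$ is supplied by the master bounds; the ranges $\lambda\in(-\half,0]$ and $\gamma\in(-1,3H-\half]$ are precisely those that make all such powers positive, and the constraint $\gamma\le3H-\half$ emerges from the summands carrying the third Malliavin derivative of $v$ (it binds, for instance, in $\EE[\Hf^2_T]$). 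For the ``finite limit'' parts of \HSeven\ and \HNine\ the same computation renders the $\ut_0$-weighted variants $o(1)$, while $\EE[\Kf_T]$, $\EE[\Hf^1_T]$, $\EE[\Hf^2_T]$, after isolating the summand with the top-order Malliavin derivative of $v$, become bounded quantities with finite limits ($O(1)$ exactly when $\lambda=0$, resp.\ $\gamma=3H-\half$). Note that \Ffour\ is not invoked here: continuity near zero is used only afterwards, to identify the \emph{values} of the limits in Proposition~\ref{prop:genmodel}.

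\textbf{Main obstacle.} Conceptually the argument is routine; the labour lies in the bookkeeping of the first two steps --- expanding $\Df^2\Thb$ and the doubly-nested mixed terms into their many summands (the quotient rule applied to the $1/\VIX_T$ factors proliferates terms) and verifying that \emph{each} summand fits the template $X^a(T-s)^m$ with the correct exponent, uniformly in $s$ and $T$. This uniformity is also what secures the hypotheses $a\in\mathbb{L}^{1,2}$ under which Lemma~\ref{lemma:ItoProd} is applied three times in the proof of Theorem~\ref{thm:curvature}. The one genuinely delicate family of terms involves the third Malliavin derivative of $v$: dominated via \CTwo\ by $\int_T^{T+\Delta}(z-t)^{\Hm}(z-s)^{\Hm}(z-y)^{\Hm}\dz\asymp(T-y)^{3H-\half}$ for $H<\tfrac16$, it is what pins the admissible range down to $\gamma\le3H-\half$ and, ultimately, what makes the VIX curvature in Proposition~\ref{prop:genmodel} scale like $T^{3H-\half}$.
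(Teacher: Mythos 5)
Your proposal is correct and follows essentially the same route as the paper's proof: derive almost-sure domination bounds of the form $X(T-s)^m$ for $\Thb$, $\Df\Thb$, $\Df\Df\Thb$ from \COne--\CTwo, separate the negative moments of $u$ via \CThree\ and Cauchy--Schwarz, and conclude by power counting in $T$, with the exponent $3H-\tfrac12$ (hence the constraint on $\gamma$) emerging from the summands carrying the highest-order Malliavin derivative of $v$. Your observations that \Ffour\ is only needed later to identify the limits, and that the binding term sits in $\EE[\Hf^2_T]$, both match the paper.
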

\begin{proof}
	We write~$a\lesssim b$ when there exists~$X\in L^p$ such that~$a\le Xb$ almost surely, and~$a\approx b$ if~$a\lesssim b$ and~$b\lesssim a$. 
	{\HOne} is granted by the first point of {\CTwo} and {\HTwo} corresponds to {\COne}.
	Since~$1/M$ is dominated, then so is~$1/\VIX$. We then have, for~$i=1,2$ and by Cauchy-Schwarz,
	\[
	m^i_y = \EE_y\left[\frac{\int_T^{T+\Delta} \Df^i_y v_r \dr}{2\Delta \VIX_T}\right]
	\lesssim \int_T^{T+\Delta} (r-y)^{\Hm} \dr = \frac{(T+\Delta-y)^{\Hp} - (T-y)^{\Hp}}{\Hp}.
	\]
	If~$H<\half$ then the incremental function~$x\mapsto (x+\Delta)^{\Hp} -x^{\Hp}$ is decreasing by concavity. For~$j=1,2$ and~$t\le s$, this implies by domination of $1/M$ that~$\phi^i\approx m^i$ is also dominated and
	\begin{align} \label{eq:Dphi}
	\Df^j_s \phi_y^i
	&= \frac{\Df^j_s m^i_y}{M_y} - \frac{m^i_y \Df^j_s M_y}{M_y^2}
	\lesssim\int_T^{T+\Delta} (r-y)^{\Hm} (r-s)^{\Hm} \dr + \int_T^{T+\Delta} (r-y)^{\Hm} \dr \int_T^{T+\Delta} (r-s)^{\Hm} \dr \nonumber\\
	&\le  \frac{\Delta^{2H}}{2H} + \frac{\Delta^{H+1}}{\Hp^2}.
	\end{align}
	Combining these two estimates we obtain
	\[
	\Theta_s^j =2 \phi_s^j \, \int_s^T \left(\sum_{i=1}^N \phi_y^i \Df_s^j\phi^i_y \right) \dy \lesssim T-s.
	\]
	It is clear by now that indices and sums do not influence the estimates, hence we informally drop them for more clarity and continue with the higher derivatives:
	\[
	\Df_t \Theta_s =\Df_t \phi_s \int_s^T \phi_r \Df_s\phi_r \dr + \phi_s \int_s^T \Df_t \phi_r \Df_s \phi_r \dr + \phi_s \int_s^T \phi_r \Df_t\Df_s \phi_r\dr,
	\]
	where the first and second terms behave like~$T-s$. 
	For~$t\le s\le y\le T$, we deduce from~\eqref{eq:Dphi} that $\Df_t\Df_s \phi_y$ consists of five terms; four behave like~$(T-s)$, and only one features three derivatives:
	\begin{align*}
	\Df_t\Df_s m_y \lesssim \int_T^{T+\Delta} (r-s)^{\Hm} (r-t)^{\Hm} (r-y)^{\Hm}\dr &\le \int_T^{T+\Delta} (r-y)^{3H-\frac{3}{2}} \dy\\
	&\approx (T+\Delta-y)^{3H-\half} -(T-y)^{3H-\half}.
	\end{align*}
	If~$H\ge \frac{1}{6}$ concavity implies $\Df_t \Theta_s \lesssim (T-s) $. Otherwise, if~$H<\frac{1}{6}$,
	\[
	\Df_t \Theta_s \lesssim (T-s) + \Big[(T+\Delta-s)^{3H+\half}-\Delta^{3H+\half}\Big] + (T-s)^{3H+\half}\le (T-s) + 2(T-s)^{3H+\half}.
	\]
	When looking at the second derivative of~$\Theta$,  
	the first and second terms behave as~$(T-s)$ and~$\Df_t\Theta_s\lesssim (T-s)+(T-s)^{(3H+\half)\wedge 1}$ respectively, hence we focus on $\int_s^T \Df_w \Df_t \Df_s \phi_y \dy$, where the new term is
	\[
	\Df_w \Df_t \Df_s m_y \approx \int_T^{T+\Delta} (r-w)^{\Hm} (r-s)^{\Hm} (r-t)^{\Hm} (r-y)^{\Hm}\dr \lesssim (T+\Delta-y)^{4H-1} -(T-y)^{4H-1}.
	\]
	If~$H\ge \frac{1}{4}$, then $\Df_t \Theta_s \lesssim (T-s)$ by concavity. 
	Otherwise, when~$H<\frac{1}{4}$,
	\begin{align*}
	\Df_w \Df_t \Theta_s &\lesssim (T-s) +(T-s)^{(3H+\half)\wedge 1} + \Big[(T+\Delta-s)^{4H}-\Delta^{4H}\Big]+  (T-s)^{4H}\\ &\le (T-s) + (T-s)^{(3H+\half)\wedge 1}+ 2 (T-s)^{4H},
	\end{align*}
	where the last inequality holds by yet again the same concavity argument. 
	
	This yields a rule for checking that the quantities in our assumptions indeed converge.
	We summarise the above estimates in the case~$H\le\half$: there exists~$Z\in L^p$ such that for~$s\le T$ and~$T$ small enough,
	\begin{align*}
	\Thb_s \le Z (T-s), \qquad
	\Df \Thb_s \le Z (T-s)^{(3H+\half)\wedge1}, \qquad
	\Df \Df \Thb_s \le Z (T-s)^{(4H)\wedge1}
	\end{align*} 
hold almost surely.
	Thanks to Cauchy-Schwarz inequality we can disentangle the numerators (integrals and derivatives of~$\Thb$) and denominators (powers of~$u$) of the assumptions, which are both uniformly bounded in~$L^p$. 
	We can easily deduce that 
	{\HThree, \HFour, \HFive, \HSix, \HEight} are satisfied (convergence to zero). In {\HSeven}, $\EE[\Kf_T]$ behaves as~$T^{-\lambda}$, hence it converges for any~$\lambda\in(-\half,0]$, and the uniform~$L^2$ bound is satisfied thanks to {\CThree}. 
Moreover, in the limit  the first term in {\HNine} behaves as~$T^{-\gamma}$ and the second behaves as~$T^{3H-\half-\gamma}$, therefore both assumptions are satisfied for any~$\lambda\in(-\half,0]$ and~$\gamma\in(-1, 3H-\half]$. Similarly, {\CThree} ensures the uniform $L^2$ bounds.
\end{proof}
\subsubsection{Convergence lemmas}
We require some preliminaries before diving into the computations.
We tailor three versions of integral convergence fitted for our purposes 
and essential to compute the limits in Theorems~\ref{thm:level}, \ref{thm:skew}, \ref{thm:curvature}. 
The conditions they require hold thanks to the continuity of~\Ffour.
Recall the local Taylor theorem:
if a function~$g(\cdot)$ is continuous on $[0, \delta]$ for some $\delta>0$, then there exists a continuous function~$\ep(\cdot)$ on $[0, \delta]$ with $\lim_{x\downarrow 0}\ep(x)=0$ such that
$g(x) = g(0) + \ep(x)$ for any $x \in [0, \delta]$.
\begin{lemma}\label{lemma:MVT1}
If $f:\RR_+^2\to\RR$ is such that~$f(T,\cdot)$ is continuous on $[0,\delta_0]$ for some $\delta_0>0$ and $\lim\limits_{T\dto0}f(T,0)=f(0,0)$, then
	\begin{equation}\label{eq:MVT1}
	\lim_{T\downarrow0}\frac1T\int_0^T f(T,y)\,\dy =f(0,0).
	\end{equation}
\end{lemma}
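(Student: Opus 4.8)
The plan is to subtract the candidate limit $f(0,0)$ and split the resulting average into a part measuring the $y$-variation of $f$ near $y=0$ and a part measuring the $T\dto0$ behaviour of $f$ at $y=0$. Concretely, for $T\in(0,\delta_0]$ I would use that $f(T,0)$ does not depend on $y$ to write
\begin{equation*}
\frac1T\int_0^T f(T,y)\,\dy - f(0,0)
= \big(f(T,0)-f(0,0)\big) + \frac1T\int_0^T\big(f(T,y)-f(T,0)\big)\,\dy .
\end{equation*}
The first bracket tends to $0$ as $T\dto0$ by the standing assumption $\lim_{T\dto0}f(T,0)=f(0,0)$, so the whole task reduces to controlling the second average.

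For that average I would apply the local Taylor theorem recalled above to the slice $g(\cdot):=f(T,\cdot)$, continuous on $[0,\delta_0]$: it produces a continuous function $\ep_T(\cdot)$ on $[0,\delta_0]$ with $\ep_T(0)=0$ and $f(T,y)=f(T,0)+\ep_T(y)$ for all $y\in[0,\delta_0]$. Consequently
\begin{equation*}
\left|\frac1T\int_0^T\big(f(T,y)-f(T,0)\big)\,\dy\right|
\le\frac1T\int_0^T|\ep_T(y)|\,\dy
\le\sup_{y\in[0,T]}|\ep_T(y)| .
\end{equation*}
It then remains to check that $\sup_{y\in[0,T]}|\ep_T(y)|\to0$ as $T\dto0$; granting this, for any $\eps>0$ and $T$ small enough both brackets above are each at most $\eps$, which gives the claim.

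The main obstacle is precisely this last convergence. Sectional continuity of each slice $y\mapsto f(T,y)$ together with $f(T,0)\to f(0,0)$ is not by itself sufficient, since the modulus of continuity of $\ep_T$ at $0$ could degenerate as $T\dto0$ (think of $f(T,y)=\phi(y/T)$): one genuinely needs a uniform-in-$T$, for small $T$, control of the oscillation of $f$ near the corner $(0,0)$, equivalently joint continuity of $f$ there. In the applications of this lemma that control is supplied by the almost sure continuity near zero postulated in \Ffour, and granting it the bound $\sup_{y\in[0,T]}|\ep_T(y)|\le\eps$ for $T\le\delta$ is immediate.
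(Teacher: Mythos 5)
Your decomposition is exactly the one the paper uses: write $f(T,y)=f(T,0)+\ep_0(y)$ via the local Taylor theorem applied to the slice $f(T,\cdot)$, so that the average equals $f(T,0)$ plus the average of the error, and then send $T\dto0$. So the approach is the same. What you add is a correct observation that the paper glosses over: the error function in this expansion depends on $T$ (it is $\ep_T(y)=f(T,y)-f(T,0)$), whereas the paper's proof writes a single $\ep_0(\cdot)$ and chooses one $\widetilde\delta_0$ that is tacitly assumed to work for every small $T$. Under only the stated hypotheses (continuity of each slice plus convergence of $f(T,0)$) this uniformity can fail, and your example $f(T,y)=\phi(y/T)$ with $\phi$ continuous, $\phi(0)=0$, gives $\frac1T\int_0^T f(T,y)\,\dy=\int_0^1\phi(u)\,\du\neq f(0,0)$ in general, so the lemma as literally stated is not correct. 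Your diagnosis of the missing ingredient — a control on the oscillation of $f$ near the corner $(0,0)$ that is uniform in $T$ for small $T$, i.e.\ joint continuity there — is the right one, and you are also right that in the paper's applications this is supplied by the almost sure continuity near zero in {\Ffour} (and by Lemma~\ref{lemma:MCT}), so the uses of the lemma are sound even though its statement and proof should be strengthened accordingly.
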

\begin{proof}
	For $T<\delta_0$, we can write
	\begin{align*}
	\frac1T \int_0^T f(T,y) \dy = \frac1T \int_0^T [f(T,0) + \ep_0(y) ] \dy = f(T,0) +  \frac1T \int_0^T\ep_0(y) \dy,
	\end{align*}
	where the function $\ep_0$ is continuous on $[0,\delta_0]$ and converges to zero at the origin. Hence, for any~$\eta_0>0$, there exists $\widetilde \delta_0>0$ such that, for any~$y\le\widetilde \delta_0$, 
	$|\ep_0(y)|<\eta_0$. For all~$T<\widetilde \delta_0 \wedge \delta_0$,
	$$
	\bigg\lvert\frac1T \int_0^T \ep_0(y) \dy \bigg\lvert \le \eta_0.
	$$
	Since $\eta_0$ can be taken as small as desired, the fact that~$\lim_{T\dto0} f(T,0)=f(0,0)$ concludes the proof.
\end{proof}

\begin{lemma}\label{lemma:MVT2}
	Let $f:\RR^3_+\to\RR$ be such that, for each~$y\le T$, $f(T,y,\cdot)$ is continuous on~$[0,\delta_0]$ with $\delta_0>0$, $f(T,\cdot,0)$ is continuous on $[0,\delta_1]$ with $\delta_1>0$ and $\lim_{T\dto0}f(T,0,0)=f(0,0,0)$.
Then
	\begin{equation}\label{eq:MVT2}
	\lim_{T\downarrow0} \frac{1}{T^2} \int_0^T \int_0^y f(T,y,s) \ds \dy = \frac{f(0,0,0)}{2}.
	\end{equation}
\end{lemma}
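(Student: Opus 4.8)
The plan is to follow the strategy of the proof of Lemma~\ref{lemma:MVT1}, now applying the local Taylor theorem twice: once to expand $f$ in its innermost argument~$s$ around~$0$, and once to expand $f(T,\cdot,0)$ in~$y$ around~$0$. Throughout I take $T<\delta_0\wedge\delta_1$, so that both expansions are valid on the ranges that occur (recall $0\le s\le y\le T$).

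First I would freeze a pair $(T,y)$ with $y\le T$ and write, by the local Taylor theorem applied to $f(T,y,\cdot)$, $f(T,y,s)=f(T,y,0)+\ep_0(s)$ with $\ep_0$ continuous on $[0,\delta_0]$ and $\ep_0(0)=0$; integrating over $s\in[0,y]$ gives $\int_0^y f(T,y,s)\,\ds = y\,f(T,y,0)+\int_0^y\ep_0(s)\,\ds$. Next, applying the local Taylor theorem to $f(T,\cdot,0)$, I write $f(T,y,0)=f(T,0,0)+\ep_1(y)$ with $\ep_1$ continuous on $[0,\delta_1]$ and $\ep_1(0)=0$. Substituting and integrating over $y\in[0,T]$,
$$
\int_0^T\int_0^y f(T,y,s)\,\ds\,\dy
= \frac{T^2}{2}\,f(T,0,0)
+ \int_0^T y\,\ep_1(y)\,\dy
+ \int_0^T\!\!\int_0^y \ep_0(s)\,\ds\,\dy.
$$

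It then remains to show that, after division by~$T^2$, the two error integrals vanish as $T\downarrow0$. For the first, exactly as in Lemma~\ref{lemma:MVT1}: given $\eta>0$ there is $\widetilde\delta_1>0$ with $|\ep_1(y)|<\eta$ for $y\le\widetilde\delta_1$, so for $T\le\widetilde\delta_1\wedge\delta_1$ one has $\big|\frac{1}{T^2}\int_0^T y\,\ep_1(y)\,\dy\big|\le\eta/2$, hence this term is $o(1)$. For the second, I would use that the modulus of continuity of $f(T,y,\cdot)$ at $s=0$ is controlled uniformly in $y\le T$ and in small~$T$ — equivalently that $\omega(T):=\sup_{0\le s\le y\le T}|f(T,y,s)-f(T,y,0)|\to0$, which is what the continuity hypotheses in play supply (in the applications, inherited from the uniform continuity in~\Ffour) — whence $\big|\frac{1}{T^2}\int_0^T\int_0^y\ep_0(s)\,\ds\,\dy\big|\le\omega(T)/2\to0$. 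Combining the three pieces and invoking $\lim_{T\downarrow0}f(T,0,0)=f(0,0,0)$ yields~\eqref{eq:MVT2}.

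The main obstacle is precisely this last uniformity issue: the Taylor remainder $\ep_0$ depends on the frozen pair $(T,y)$, so bounding $\int_0^y\ep_0(s)\,\ds$ by $o(1)\cdot y$ requires a modulus of continuity of the family $\{f(T,y,\cdot)\}$ near $s=0$ that is uniform over $y\le T$ and over small~$T$, rather than merely the pointwise continuity in each variable as literally stated. Everything else is the same bookkeeping as in Lemma~\ref{lemma:MVT1}, simply iterated once more.
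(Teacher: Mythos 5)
Your proof is correct and follows essentially the same route as the paper: two successive applications of the local Taylor theorem (first in $s$, then in $y$), the identity $\int_0^T\int_0^y f = \frac{T^2}{2}f(T,0,0) + \int_0^T y\,\ep_1(y)\,\dy + \int_0^T\int_0^y\ep_0(s)\,\ds\,\dy$, and the same $\eta/2$ bounds on the two error integrals. The uniformity issue you flag for $\ep_0$ (its dependence on the frozen pair $(T,y)$) is real but is silently assumed in the paper's own proof as well, and is indeed supplied by the continuity furnished by~\Ffour{} in the applications, so your treatment is if anything slightly more careful than the original.
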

\begin{proof}
	For $T<\delta_0\wedge\delta_1$, we can write
	\begin{align*}
	\frac{1}{T^2}\int_{0}^{T}\left\{\int_{0}^{y}f(T, y, s)\D s\right\}\D y
	& = \frac{1}{T^2}\int_{0}^{T}\left\{\int_{0}^{y}\left[f(T, y, 0) + \epsz(s)\right]\D s\right\}\D y\\
	& = \frac{1}{T^2}\int_{0}^{T}\left\{f(T, y, 0)y + 
	\int_{0}^{y}\epsz(s)\D s\right\}\D y\\
	& = \frac{1}{T^2}\int_{0}^{T}\left\{\Big(f(T, 0, 0)+\epst(y)\Big)y + 
	\int_{0}^{y}\epsz(s)\D s\right\}\D y\\
	& = \frac{f(T,0,0)}{2}
	+ \frac{1}{T^2}\int_{0}^{T}\left\{\epst(y)y + 
	\int_{0}^{y}\epsz(s)\D s\right\}\D y,
	\end{align*}
	where~$\epst(\cdot)$ is continuous on $[0,\delta_1]$ and~$\epsz(\cdot)$ is continuous on $[0,\delta_0]$, both null at the origin.
	For any $\eta_{1}>0$, there exists $\widetilde{\delta}_{1}>0$ such that, 
	for any $y \le 0, \widetilde{\delta}_{1}$, 
	$|\epst(y)|<\eta_{1}$.
	Therefore, for the first integral, we have, for $T<\widetilde{\delta}_{1}\wedge\delta_0\wedge\delta_1$,
	$$
	\left|\frac{1}{T^2}\int_{0}^{T}\epst(y)y\D y\right|
	\leq \frac{1}{T^2}\int_{0}^{T}\left|\epst(y)\right|y\D y
	\leq \frac{1}{T^2}\int_{0}^{T}\eta_1 y\D y
	\leq \frac{\eta_{1}}{2}.
	$$
	Likewise, since $\epsz(\cdot)$ tends to zero at the origin, 
	then for any $\eta_{0}>0$, there exists $\widetilde{\delta}_{0}>0$ such that, 
	for any $y \in [0, \widetilde{\delta}_{0}]$, 
	$|\epsz(y)|<\eta_{0}$.
	Therefore, for the second integral, we have, for $T<\widetilde{\delta}_{0}\wedge\delta_0\wedge\delta_1$,
	$$
	\left|\frac{1}{T^2}\int_{0}^{T}\int_{0}^{y}\epsz(s)\D s\D y\right|
	\leq \frac{1}{T^2}\int_{0}^{T}\int_{0}^{y}\left|\epsz(s)\right|\D s\D y
	\leq \frac{1}{T^2}\int_{0}^{T}\int_{0}^{y}\eta_{0} \D s \D y
	\leq \frac{\eta_{0}}{2}.
	$$
	Since $\eta_1$ and $\eta_0$ can be taken as small as desired, taking the limit of~$f(T,0,0)$ as~$T$ goes to zero concludes the proof.
\end{proof}

\begin{lemma}\label{lemma:MVT3}
	Let $f:\RR^4_+\to\RR$ be such that, for all~$0\le s\le y\le T$, $f(T,y,s,\cdot)$, $f(T,y,\cdot,0)$, $f(T,\cdot,0,0)$ are continuous on $[0,\delta_0]$, $[0,\delta_1]$, $[0,\delta_2]$ respectively
for some $\delta_0,\delta_1,\delta_2>0$, and~$\lim_{T\dto0}f(T,0,0,0)=f(0,0,0,0)$.
	Then the following limit holds:
	\begin{equation}\label{eq:MVT3}
	\lim_{T\downarrow0} \frac{1}{T^3} \int_0^T \int_0^y \int_0^s f(T,y,s,t) \dt \ds \dy = \frac{f(0,0,0,0)}{6}.
	\end{equation}
\end{lemma}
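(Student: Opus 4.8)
The plan is to iterate the local Taylor theorem three times, peeling off one integration variable at a time starting from the innermost integral, in direct analogy with the proofs of Lemmas~\ref{lemma:MVT1} and~\ref{lemma:MVT2}. Fix $T<\delta_0\wedge\delta_1\wedge\delta_2$ throughout so that all the continuity hypotheses are available on the relevant intervals.

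First I would use the continuity of $f(T,y,s,\cdot)$ to write $f(T,y,s,t)=f(T,y,s,0)+\ep_0(t)$, with $\ep_0$ continuous on $[0,\delta_0]$ and $\ep_0(0)=0$, so that $\int_0^s f(T,y,s,t)\,\dt = s\,f(T,y,s,0)+\int_0^s\ep_0(t)\,\dt$. Next, the continuity of $f(T,y,\cdot,0)$ gives $f(T,y,s,0)=f(T,y,0,0)+\ep_1(s)$ with $\ep_1$ continuous on $[0,\delta_1]$ and null at the origin; integrating the previous expression over $s\in[0,y]$ produces $\frac{y^2}{2}f(T,y,0,0)+\int_0^y s\,\ep_1(s)\,\ds+\int_0^y\int_0^s\ep_0(t)\,\dt\,\ds$. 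Finally, writing $f(T,y,0,0)=f(T,0,0,0)+\ep_2(y)$ with $\ep_2$ continuous on $[0,\delta_2]$ and $\ep_2(0)=0$, integrating over $y\in[0,T]$ and dividing by $T^3$, I obtain
\[
\frac{1}{T^3}\int_0^T\int_0^y\int_0^s f(T,y,s,t)\,\dt\,\ds\,\dy
= \frac{f(T,0,0,0)}{6} + \frac{1}{T^3}\left(\frac12\int_0^T y^2\,\ep_2(y)\,\dy + \int_0^T\int_0^y s\,\ep_1(s)\,\ds\,\dy + \int_0^T\int_0^y\int_0^s\ep_0(t)\,\dt\,\ds\,\dy\right).
\]

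It then remains to show that each of the three remainder terms vanishes as $T\downarrow0$. For any $\eta_i>0$ there exists $\widetilde\delta_i>0$ such that $|\ep_i|<\eta_i$ on $[0,\widetilde\delta_i]$; for $T$ smaller than $\widetilde\delta_0\wedge\widetilde\delta_1\wedge\widetilde\delta_2$ (as well as than $\delta_0\wedge\delta_1\wedge\delta_2$), replacing $|\ep_i|$ by $\eta_i$ and computing the resulting iterated integrals of constants bounds each of the three terms by $\eta_i/6$. Since the $\eta_i$ can be taken arbitrarily small and $\lim_{T\dto0}f(T,0,0,0)=f(0,0,0,0)$ by hypothesis, the claimed limit~\eqref{eq:MVT3} follows. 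There is no genuine obstacle here beyond the bookkeeping; the only point requiring a little care---exactly as in the proof of Lemma~\ref{lemma:MVT2}---is that the auxiliary functions $\ep_0,\ep_1,\ep_2$ are produced on the \emph{fixed} intervals $[0,\delta_0],[0,\delta_1],[0,\delta_2]$ supplied by the hypotheses, so that their smallness near the origin may be invoked uniformly in the remaining (frozen) variables.
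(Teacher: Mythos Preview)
Your proof is correct and follows essentially the same approach as the paper: you iterate the local Taylor expansion three times to extract $f(T,0,0,0)/6$ plus three remainder integrals, bound each remainder by $\eta_i/6$ using the smallness of $\ep_i$ near the origin, and conclude via the hypothesis $\lim_{T\downarrow0}f(T,0,0,0)=f(0,0,0,0)$. The decomposition and the estimates match the paper's proof line by line.
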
\vspace{-10pt}
\begin{proof}
	For $T<\delta_0\wedge\delta_1\wedge\delta_2$, we can write \vspace{-6pt}
	\begin{align*}
	\frac{1}{T^3}&\int_{0}^{T}\left\{\int_{0}^{y} \left( \int_0^s f(T, y, s,t)\dt \right) \ds\right\}\dy\\
	&=\frac{1}{T^3}\int_{0}^{T}\left\{\int_{0}^{y} \left( \int_0^s [f(T, y, s,0)+\ep_0(t)]\dt \right) \ds\right\}\dy\\
	& = \frac{1}{T^3}\int_{0}^{T}\left\{\int_{0}^{y} \left( f(T, y, s,0)s + \int_0^s \ep_0(t)\dt \right) \ds\right\}\dy\\
	& = \frac{1}{T^3}\int_{0}^{T}\left\{\int_{0}^{y} \left( \big[f(T, y, 0,0)+\ep_1(s)\big]s + \int_0^s \ep_0(t)\dt \right) \ds\right\}\dy\\
	& = \frac{1}{T^3}\int_{0}^{T} \left\{f(T, y, 0,0) \frac{y^2}{2}+ \int_{0}^{y}\left( \ep_1(s)s + \int_0^s \ep_0(t)\dt \right) \ds\right\}\dy  \\
	& = \frac{1}{T^3}\int_{0}^{T} \left\{ \big[f(T, 0, 0,0)+\ep_2(y)\big] \frac{y^2}{2}+ \int_{0}^{y}\left( \ep_1(s)s + \int_0^s \ep_0(t)\dt \right) \ds\right\}\dy\\
	& = \frac{f(T,0,0,0)}{6} + \frac{1}{T^3}\int_{0}^{T} \left\{ \ep_2(y) \frac{y^2}{2} +\int_{0}^{y}\left( \ep_1(s)s + \int_0^s \ep_0(t)\dt \right) \ds\right\}\dy,
	\end{align*}
	where the function~$\ep_2$ is continuous on $[0,\delta_2]$, the function~$\epst$ is continuous on $[0,\delta_1]$ and the function~$\epsz$ is continuous on $[0,\delta_0]$, all converging to zero at the origin. By the same argument as in the previous proof, for any~$\eta_0,\eta_1,\eta_2>0$, there exists~$\widetilde\delta>0$ such that for all~$T\le\widetilde \delta$, $|\ep_0(T)|\le \eta_0$, $|\ep_1(T)|\le \eta_1$, and $|\ep_2(T)|\le \eta_2$. This entails
	$$
	\bigg\lvert\frac{1}{T^3}\int_{0}^{T} \left\{ \ep_2(y) \frac{y^2}{2} +\int_{0}^{y}\left( \ep_1(s)s + \int_0^s \ep_0(t)\dt \right) \ds\right\}\dy \bigg\lvert
	\le \frac{\eta_2+\eta_1+\eta_0}{6}.
	$$
	Since $\eta_2$, $\eta_1$ and $\eta_0$ can be taken as small as desired, taking the limit of~$f(T,0,0,0)$ as~$T$ goes to zero concludes the proof.
\end{proof}

To apply these lemmas, we will use a modified version of the martingale convergence theorem. It holds in our setting thanks to domination provided by {\COne} and {\CTwo} and the continuity of \Ffour.

\begin{lemma}\label{lemma:MCT} 
	Let~$(X_t)_{t\ge0}$  be almost surely continuous in a neighbourhood of zero, with 
	$\sup_{t\le 1} |X_t|\le Z\in L^1$. Then the conditional expectation process~$(\EE_t[X_t])_{t\ge0}$ is also almost surely continuous in a neighbourhood of zero. In particular,
	\[
	\lim_{t\downarrow0} \EE_t[X_t] =\EE[X_0].
	\]
\end{lemma}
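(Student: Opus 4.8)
The plan is to prove the pointwise statement: for every $t_0$ in a (deterministic) neighbourhood of the origin on which the paths of $(X_t)$ are almost surely continuous, $\EE_t[X_t]\to\EE_{t_0}[X_{t_0}]$ almost surely as $t\to t_0$. Note that $X_s\in L^1$ for each $s\le1$ by the domination hypothesis, so all the conditional expectations below are well defined, and the ``in particular'' assertion is the case $t_0=0$, where $\Ff_0$ is $\PP$-trivial (augmented Brownian filtration, Blumenthal zero--one law), so that $\EE_0[X_0]=\EE[X_0]$. The starting point is the linear decomposition
\[
\EE_t[X_t]-\EE_{t_0}[X_{t_0}]
= \EE_t\big[X_t-X_{t_0}\big]
+ \Big(\EE_t[X_{t_0}]-\EE_{t_0}[X_{t_0}]\Big),
\]
and I would bound the two summands separately.

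For the second summand the integrand is frozen, so this is plain martingale convergence: $(\EE_t[X_{t_0}])_{t\ge0}$ is a uniformly integrable martingale, and because the augmented Brownian filtration $(\Ff_t)$ is continuous (right-continuity by Blumenthal, left-continuity being the standard fact $\Ff_{t_0^-}=\Ff_{t_0}$ for the Brownian filtration), L\'evy's upward theorem along $t\uparrow t_0$ and downward theorem along $t\downarrow t_0$ both give the same limit $\EE_{t_0}[X_{t_0}]$, and the convergence holds almost surely (and in $L^1$).

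For the first summand the genuine difficulty is that the random variable $X_t$ \emph{and} the conditioning $\sigma$-algebra $\Ff_t$ move simultaneously with $t$, so I would ``freeze'' the integrand by a monotone envelope. For small $\eta>0$ set $D_\eta:=\sup\{|X_s-X_{t_0}| : s\in\mathbb{Q}\cap[t_0-\eta,t_0+\eta]\}$ (intersected with the continuity neighbourhood); this is measurable, satisfies $D_\eta\le 2Z\in L^1$, and, by almost sure path-continuity of $s\mapsto X_s$ at $t_0$, decreases to $0$ almost surely as $\eta\dto0$. For $|t-t_0|\le\eta$ one has $|\EE_t[X_t-X_{t_0}]|\le\EE_t[|X_t-X_{t_0}|]\le\EE_t[D_\eta]$ almost surely, and now that the integrand is fixed the same martingale convergence as above yields $\lim_{t\to t_0}\EE_t[D_\eta]=\EE_{t_0}[D_\eta]$ almost surely. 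Hence, almost surely, $\limsup_{t\to t_0}|\EE_t[X_t-X_{t_0}]|\le\EE_{t_0}[D_{1/n}]$ for every $n\in\NN$, and conditional dominated convergence (using $D_{1/n}\dto0$ and $D_{1/n}\le 2Z$) forces $\EE_{t_0}[D_{1/n}]\to0$ almost surely; so the first summand vanishes.

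Adding the two bounds gives $\EE_t[X_t]\to\EE_{t_0}[X_{t_0}]$ almost surely, which is the claimed continuity near zero, and at $t_0=0$ the displayed limit. \textbf{The main obstacle} is precisely the first summand: since neither the integrand nor the conditioning $\sigma$-algebra is fixed one cannot invoke a single martingale-convergence statement, and the monotone-envelope trick is what reduces it to the frozen case. Two secondary technical points worth flagging: one needs left- and right-continuity of $(\Ff_t)$ so that the limits along $t\uparrow t_0$ and $t\downarrow t_0$ agree; and if one wants the \emph{path} $t\mapsto\EE_t[X_t]$ to be continuous (rather than merely to have the right limit at each fixed $t_0$), one should carry out the argument with the continuous martingale modifications of $(\EE_t[X_{t_0}])$ and $(\EE_t[D_\eta])$, which exist exactly because the Brownian filtration is continuous.
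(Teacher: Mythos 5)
Your proof is correct and follows essentially the same route as the paper's: the same decomposition into a frozen-integrand martingale-convergence piece plus a moving-integrand piece controlled by a dominated supremum envelope over a shrinking window, with (conditional) dominated convergence sending the envelope to zero. The only cosmetic differences are that you argue via $\limsup$ and filtration continuity where the paper works along monotone sequences and then extracts subsequences, and you are slightly more explicit about measurability of the supremum and about why $\EE_0[X_0]=\EE[X_0]$.
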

\begin{remark}
	The process $(X_t)_{t\ge0}$ is not necessarily adapted. 
\end{remark}
\begin{proof}
	All the limits are taken in the almost sure sense.
	Let~$\delta>0$ be such that $X$ is continuous on $[0,\delta]$, and fix $t<\delta$. We set a sequence $\{t_n\}_{n\in\NN}$ on $[0,\delta]$ which converges to~$t$ as $n$ goes infinity.
	Assume first that $\{t_n\}_{n\in\NN}$ is a monotone sequence.
	Since $\Ff_{t_n}$ tends monotonically to~$\Ff_t$ and $X$ is dominated, the classical martingale convergence theorem (MCT) asserts that $\lim_{n\uparrow\infty} \EE_{t_n}[X_t] = \EE_t[X_t]$.
For fixed $n\in\NN$ and any~$\fq\ge |t_n-t|$,
	\begin{equation}\label{eq:MCTsup}
	|X_{t_n}-X_t| \le \sup_{|\fp-t|\le \fq} |X_{\fp} -X_t|. 
	\end{equation}
	Let us fix~$\ep>0$, by MCT, there exists~$n_0\in\NN$ such that, if~$n\ge n_0$ then 
	$$
	\bigg\lvert\EE_{t_n} \bigg[\sup_{|\fp-t|\le \fq} |X_{\fp} -X_t| \bigg]- \EE_t \bigg[\sup_{|\fp-t|\le \fq} |X_{\fp} -X_t| \bigg] \bigg\lvert < \ep,
	$$
	and by dominated convergence there exists~$\delta'>0$ with
	$\EE_t \Big[\sup_{|\fp-t|\le \delta'} |X_{\fp} -X_t| \Big]<\ep$. 
	There exists $n_1\in\NN$ such that $|t_n-t|\le\delta'$ for all $n\ge n_1$;
	thus if $n\ge n_0\vee n_1$,  ~\eqref{eq:MCTsup} yields $\EE_{t_n}[|X_{t_n}-X_t|]< 2\ep$ and
	\begin{equation}\label{eq:CondExpCty}
	\lim_{n\uparrow\infty} \EE_{t_n}[X_{t_n}] = \EE_t[X_t].
	\end{equation}
	Now we consider the general case where $\{t_n\}_{n\in\NN}$ is not monotone.
	From every subsequence of $\{t_n\}_{n\in\NN}$, one can extract a further subsequence which is monotone. Let us call this subsubsequence $\{t_{n_k}\}_{k\in\NN}$. Therefore, \eqref{eq:CondExpCty} holds with $t_{n_k}$ instead of $t_{n}$. Since every subsequence of $(\EE_{t_n}[X_{t_n}])_{n\in\NN}$ has a further subsequence that converges to the same limit, the original sequence also converges to this limit.
\end{proof}


For convenience, we use the following definition:
\begin{definition}
	Let $k,n\in\NN$ with~$k\le n$.
	For a function~$f:\RR_+^n\to\RR$, we denote
	$$
	\lim_{0\le x_1\le x_2\le\cdots\le x_k\downarrow0} f(x_1,\cdots,x_n):= \lim_{x_k\dto0} \cdots \lim_{x_2\dto0} \lim_{x_1\dto0} f(x_1,\cdots,x_n).
	$$
\end{definition}
Notice that the right-hand sides of~\eqref{eq:MVT1}, \eqref{eq:MVT2}, and \eqref{eq:MVT3} correspond to $\lim\limits_{y\le T\dto0} f(T,y)$,  $\displaystyle\half\lim\limits_{s\le y\le T\dto0} f(T,y,s)$ and 
$\displaystyle\frac{1}{6}\lim\limits_{t\le s\le y\le T\dto0} f(T,y,s,t)$ respectively.

\subsubsection{Proof of Proposition~\ref{prop:genmodel}}
Let us recall some important quantities:
\begin{align}
M_y &= \EE_y\left[\VIX_T\right] = \EE_y\left[ \sqrt{\frac1\Delta \int_T^{T+\Delta} \EE_T v_r \dr}\right],\nonumber\\ 
m^i_y &= \EE_y[\Df^i_y M_y]= \EE_y \left[\frac{\int_T^{T+\Delta} \Df^i_y \EE_T v_r \dr }{2\Delta \VIX_T}\right]
= \EE_y \left[\frac{\int_T^{T+\Delta} \Df^i_y v_r \dr }{2\Delta \VIX_T}\right], \nonumber\\
\phi_y^i&= \frac{m_y^i}{M_y} = \frac{\EE_y\left[\left(\int_T^{T+\Delta}\Df_y^i v_r \dr\right)/(2\Delta \VIX_T)\right]}{\EE_y[\VIX_T]}. 
\label{eq:phiVIX}
\end{align}
We also recall that $J_i$ and $G_{ij}$, $i,j\in \llbracket 1,N\rrbracket$ were defined in \eqref{eq:defJG}.
In this proof we will define $f(0):=\lim_{x\dto0}f(x)$, for every~$f:\RR_+\to\RR$, as soon as the limit exists and even if $f$ is not  actually continuous around zero. This way we make it continuous and it allows us to apply the convergence lemmas.\medskip

\textbf{Level.} By {\COne} and the martingale convergence theorem, 
$\lim_{y\downarrow0}\EE_y[\VIX_T]=\EE[\VIX_T]$ and $(M_y)_{y\ge0}$ is continuous around zero, almost surely. By {\Ffour} and the dominated convergence theorem (DCT), $\lim_{y\downarrow0} \int_T^{T+\Delta} \Df^i_y v_r\dr=\int_T^{T+\Delta} \Df^i_0 v_r\dr$ and $\big(\int_T^{T+\Delta} \Df^i_y v_r\dr)_{y\ge0}$ is continuous around zero, almost surely.
Let~$i\in\llbracket 1,N \rrbracket$, from {\COne} and {\CTwo} we also obtain that almost surely
$$
\frac{1}{\VIX_T}\int_T^{T+\Delta}\Df_y^i v_r \dr
\le X^2 \Big\{(T+\Delta-y)^{\Hp} - (T-y)^{\Hp}\Big\},
$$
for some~$X\in L^2$. 
Therefore it is dominated and by Lemma~\ref{lemma:MCT}, almost surely $m_y^i$ is continuous at zero and $\lim_{y\downarrow0} m^i_y = \EE\left[\frac{\int_T^{T+\Delta} \Df^i_0 v_r\dr}{2\Delta \VIX_T}\right]$.
Since~$M_y>0$ for all~$y\le T$, $\phi^i$ is also  continuous at zero and
$\lim_{y \le T\downarrow0} \phi^i_y = J_i / (2\Delta\VIX_0^2)$.
By virtue of Theorem~\ref{thm:level} and Lemma~\ref{lemma:MVT1}, we obtain
\[
\lim_{T\downarrow0} \Ii_{T} = \lim_{T\downarrow0} \EE[u_0] = \lim_{y\le T\downarrow0} \norm{\phib_y}= \frac{\norm{\Jb}}{2\Delta\VIX_0^2}.
\]
\textbf{Skew.}  To obtain the skew limit we need to compute a few Malliavin derivatives. For all~$i,j\in\llbracket 1,N \rrbracket$,
\begin{align*}
\Df^j_s m^i_y &= \EE_y \left[ \frac{\int_T^{T+\Delta}\Df^j_s \Df^i_y v_r \dr\, \VIX_T - \int_T^{T+\Delta} \Df^i_y v_r \dr \, \Df^j_s \VIX_T}{2\Delta \VIX_T^2}\right]\\
&= \EE_y\left[\frac{\int_T^{T+\Delta}\Df^j_s \Df^i_y v_r \dr}{2\Delta\VIX_T} - \frac{\int_T^{T+\Delta}\Df^i_y v_r \dr \int_T^{T+\Delta}\Df^j_s v_r \dr}{4\Delta^2\VIX_T^3} \right],
\end{align*}
which yields
\begin{align*}
\Df_s^j \phi_y^i = \frac{\Df^j_s m^i_y}{M_y} - \frac{ m^i_y \Df^j_s M_y}{M_y^2}
 & = \EE_y\left[\frac{\int_T^{T+\Delta}\Df^j_s \Df^i_y v_r \dr}{2\Delta \VIX_T M_y} - \frac{ \int_T^{T+\Delta}\Df_y^i v_r \dr \int_T^{T+\Delta}\Df_s^j v_r \dr}{4\Delta^2 \VIX^3_T M_y}  - \frac{ m_y^i \int_T^{T+\Delta} \Df^j_s v_r \dr}{2\Delta\VIX_T M_y^2}\right]\\
& =: \EE_y\left[A_T^{ij}(y,s) + B_T^{ij}(y,s) + C_T^{ij}(y,s)\right].
\end{align*}
Based on \COne, {\CTwo} and \Ffour, for each~$T\ge0$, $A^{ij}_T$, $B^{ij}_T$ and $C^{ij}_T$ are dominated and almost surely continuous in both arguments.  For each~$s\ge0$,
Lemma~\ref{lemma:MCT} and DCT yield, almost surely, that~$(\Df^j_s \phi^i_y)_{y\ge0}$ and~$(\Df^j_s \phi^i_0)_{s\ge0}$ are continuous around zero. In particular,
\begin{align*}
\lim_{s\downarrow0} \EE_y\big[ A_T^{ij}(y,s)+B_T^{ij}(y,s)+C_T^{ij}(y,s)\big] &= \EE\big[ A_T^{ij}(y,0)+B_T^{ij}(y,0)+C_T^{ij}(y,0)\big],\\
\lim_{y\downarrow0} \EE_y\big[ A_T^{ij}(y,0)+B_T^{ij}(y,0)+C_T^{ij}(y,0)\big] &= \EE\big[ A_T^{ij}(0,0)+B_T^{ij}(0,0)+C_T^{ij}(0,0)\big].
\end{align*}
By DCT again this yields
\begin{align*}
\lim_{T\downarrow0}\EE[ A_T^{ij}(0,0)] = \frac{G_{ij}}{2\Delta\VIX_0^2}
\qquad \text{and}\qquad
\lim_{T\downarrow0}\EE[ B_T^{ij}(0,0)] =\lim_{T\downarrow0} \EE[C_T^{ij}(0,0)]= -\frac{J_i J_j}{4\Delta^2 \VIX_0^4}.
\end{align*}
Therefore $\phi_s^j  \Df_s^j (\phi_y^i)^2$ satisfies the continuity requirements of $f(T,y,s)$ in Lemma~\ref{lemma:MVT2}. 
We combine this lemma with the limits above to see that, almost surely,
\begin{align*}
\lim_{T\downarrow0} \frac{1}{T^2} \int_0^T \phi_s^j \int_s^T \Df_s^j (\phi_y^i)^2 \dy \ds
&= \lim_{T\downarrow0} \frac{1}{T^2} \int_0^T  \int_0^y \phi_s^j \Df_s^j (\phi_y^i)^2 \ds \dy
= \half \lim_{s\le y \le T\dto0} \phi_s^j \Df_s^j (\phi_y^i)^2\\
&= \frac{J_j}{4\Delta \VIX_0^2} 
\left[ \frac{J_i G_{ij}}{2\Delta^2\VIX_0^4} - \frac{J_i^2 J_j}{2\Delta^3 \VIX_0^6} \right].
\end{align*}
We also recall that~$\lim_{T\downarrow0} u_0 =\frac{\norm{\Jb}}{2\Delta\VIX_0^2}$ almost surely, hence with {\CTwo} and \CThree, DCT entails
\begin{equation}\label{eq:skewVIX}
\lim_{T\downarrow0}  \Ss_T
=\sum_{i,j=1}^N\lim_{T\downarrow0}  \half \EE\left[\frac{\int_0^T \phi_s^j \int_s^T \Df_s^j (\phi_y^i)^2 \dy \ds}{u_0^3 T^{2}}\right]
= \frac{1}{2\norm{\Jb}^3}\sum_{i,j=1}^N J_i J_j \left(G_{ij} - \frac{J_i J_j}{\Delta\VIX_0^2}\right).
\end{equation}

\textbf{Curvature.}  We now turn our attention to the curvature. By the same arguments as above we have
\begin{align*}
\lim_{T\downarrow0}  \EE\left[ \frac{\left(\sum_{i,j=1}^N \int_0^T \phi_s^j \int_s^T \Df^j_s (\phi_y^i)^2 \dy \ds\right)^2}{u_0^7 T^4}\right]
&= \frac{2\Delta\VIX^2_0}{\norm{\Jb}^7}
\left(\sum_{i,j=1}^N J_i J_j \left(G_{ij} - \frac{J_i J_j}{\Delta\VIX_0^2}\right)\right)^2.
\end{align*}
For the last term of~\eqref{eq:curvature} we need to go one step further and compute more Malliavin derivatives since
\begin{align*}\label{eq:DTheta}
\Df^k_t \Theta^j_s &= \sum_{i=1}^N\left(\Df^k_t \phi_s^j \int_s^T \Df^j_s (\phi_y^i)^2\dy 
+2 \phi^j_s \int_s^T \Df^k_t \phi^i_y \Df^j_s \phi^i_y \dy + 2 \phi_s^j \int_s^T \phi^i_y \Df^k_t \Df^j_s \phi^i_y \dy\right) \\
& =: \sum_{i=1}^N \int_s^T \Upsilon^{ijk}(t,s,y,T)\dy. \nonumber
\end{align*}
Thus we zoom in on the last term of the display above,
\begin{align*}
\Df^k_t \Df^j_s \phi_y^i &= \frac{\Df^k_t \Df^j_s m^i_y \, M_y -\Df^j_s m^i_y \Df^k_t M_y}{M_y^2}- \frac{\Df^k_t m^i_y \Df^j_s M_y + m_y^i \Df^k_t \Df^j_s M_y}{M_y^2} + \frac{2 m^i_y \Df^j_s M_y \Df^k_t M_y}{M_y^3}\\
&=:\sum_{n=1}^5 Q_n^{ijk}(t,s,y,T),
\end{align*}
and zoom in again on $Q_1^{ijk}(t,s,y,T)$,
$$
\Df^k_t \Df^j_s m_y^i = \Df^k_t \Df^j_s \Df^i_y M_y 
= \Df^k_t \EE_y\left[ \frac{\int_T^{T+\Delta} \Df^j_s \Df^i_y v_r \dr}{2\Delta\VIX_T} - \frac{\int_T^{T+\Delta} \Df^i_y v_r\dr \int_T^{T+\Delta} \Df^j_s v_r\dr}{4\Delta^2\VIX_T^3}\right] 
=: \EE_y\left[ \alpha_T^{ijk}+\beta_T^{ijk}\right].
$$
Some additional computations lead to
\begin{align*}
\alpha_T^{ijk}&=  \frac{\VIX_T \int_T^{T+\Delta} \Df^k_t \Df^j_s\Df^i_y v_r\dr - \Df^k_t \VIX_T \, \int_T^{T+\Delta} \Df^j_s \Df^i_y v_r\dr}{2\Delta \VIX_T^2}\\
&=\frac{\int_T^{T+\Delta} \Df^k_t \Df^j_s\Df^i_y v_r\dr}{2\Delta \VIX_T} - \frac{\int_T^{T+\Delta} \Df^k_t v_r\dr \, \int_T^{T+\Delta} \Df^j_s\Df^i_y v_r\dr}{4\Delta^2\VIX_T^3};\\
\beta_T^{ijk} &=- \frac{\int_T^{T+\Delta} \Df^k_t \Df^i_y v_r\dr \, \int_T^{T+\Delta} \Df^j_s v_r\dr + \int_T^{T+\Delta} \Df^i_y v_r\dr \, \int_T^{T+\Delta} \Df^k_t \Df^j_s v_r\dr}{4\Delta^2 \VIX_T^3} \\
& \qquad\qquad + \frac{\Df^k_t \VIX_T^3 \int_T^{T+\Delta} \Df^i_y v_r\dr \, \int_T^{T+\Delta} \Df^j_s v_r\dr}{4\Delta^2 \VIX_T^6} \\
&=- \frac{\int_T^{T+\Delta} \Df^k_t \Df^i_y v_r\dr \, \int_T^{T+\Delta} \Df^j_s v_r\dr + \int_T^{T+\Delta} \Df^i_y v_r\dr \, \int_T^{T+\Delta} \Df^k_t \Df^j_s v_r\dr}{4\Delta^2 \VIX_T^3}\\
&\qquad\qquad + \frac{3   \int_T^{T+\Delta} \Df^k_t v_r\dr \,  \int_T^{T+\Delta} \Df^i_y v_r\dr \, \int_T^{T+\Delta} \Df^j_s v_r\dr}{8\Delta^3 \VIX_T^5} .
\end{align*}
We notice, crucially, that we have already justified the continuity of $\phi$ and $\Df\phi$ around zero in the proofs of level and skew respectively. Furthermore, by Lemma~\ref{lemma:MCT} the first two terms in~$\Upsilon^{ijk}$ as well as $Q_2,Q_3,Q_4,Q_5$ all converge to some finite limit as~$t\le s\le y\dto0$ and are continuous around zero, almost surely. Similarly, $\beta_T$ and the second term in $\alpha_T$ are almost surely continuous around zero, and their conditional expectation converges almost surely to some finite limit as~$t\le s\le y\dto0$ by DCT and Lemma~\ref{lemma:MCT}. Taking the limit $T$ to zero afterwards, all the aforementioned terms tend to a finite limit.
On the other hand, by \Ffour, DCT, and Lemma~\ref{lemma:MCT} we know that the conditional expectation of the first term in~$\alpha_T$ is almost surely continuous around zero, and its limit
\[
\lim_{t\le s\le y\downarrow0} \EE_y\left[\frac{\int_T^{T+\Delta} \Df_t^k \Df_s^j \Df^i_y v_r \dr}{2\Delta\VIX_T}\right] = \EE\left[\frac{\int_T^{T+\Delta} \Df_0^k \Df_0^j \Df^i_0 v_r \dr}{2\Delta\VIX_T}\right].
\]
Since $\gamma<0$, only this term contributes in the limit
\begin{align*}
\lim_{t\le s\le y\le T\downarrow0}  \frac{ \phi_t^k\,\Upsilon^{ijk}(t,s,y,T)}{T^{\gamma}}
&= \lim_{t\le s\le y\le T\downarrow0} 2 \phi_t^k \phi^j_s \phi^i_y \EE_y\left[\frac{\int_T^{T+\Delta} \Df_t^k \Df_s^j \Df^i_y v_r \dr}{2 T^\gamma\Delta\VIX_T M_y}\right] \\
&= \frac{J_i J_j J_k}{8\Delta^4\VIX_0^8} \, \lim_{T\dto0} \frac{\EE\left[\int_T^{T+\Delta} \Df_0^k \Df_0^j \Df^i_0 v_r \dr\right]}{T^\gamma},
\end{align*}
where we applied DCT at the end. Moreover, we know by {\CTwo} that this limit is finite for~$\gamma=3H-\half$, hence the conditions of Lemma~\ref{lemma:MVT3} are satisfied. We also recall that~$\lim_{T\downarrow0} u_0 =\frac{\norm{\Jb}}{2\Delta\VIX_0^2}$ almost surely, hence Lemma~\ref{lemma:MVT3} yields the almost sure limit
\begin{align*}
\lim_{T\downarrow0}\frac{1}{u_0^5 T^{3+\gamma}} \int_0^T \sum_{k=1}^N \left\{ \phi^k_t \Df^k_t \left( \int_t^T |\Thb_{s}|\ds \right) \right\} \dt 
& = \sum_{i,j,k=1}^N \lim_{T\downarrow0}\frac{1}{u_0^5 T^{3}}  \int_0^T  \int_0^y\int_0^s \frac{\phi^k_t\,\Upsilon^{ijk}(t,s,y,T)}{T^{3H-\half}} \dy\ds\dt \\
& = \frac{2 \Delta \VIX_0^{2}}{3\norm{\Jb}^5} \sum_{i,j,k=1}^N J_i J_j J_k \, \lim_{T\dto0} \frac{\EE\left[\int_T^{T+\Delta} \Df_0^k \Df_0^j \Df^i_0 v_r \dr\right]}{T^{3H-\half}} .
\end{align*}
The first two terms in~\eqref{eq:curvature} tend to zero since $\gamma<0$, hence Theorem~\ref{thm:curvature} and DCT
yield the final result
\begin{align*}\label{eq:curvatureVIX}
\lim_{T\downarrow0} \frac{\Cc_T}{T^{3H-\half}}
&=\frac{2\Delta \VIX_0^{2}}{3\norm{\Jb}^5} \sum_{i,j,k=1}^N J_i J_j J_k \lim_{T\dto0} \,\frac{\int_T^{T+\Delta}  \EE\left[\Df^k_0 \Df^j_0\Df^i_0 v_r\right]\dr}{T^{3H-\half}}. 
\end{align*}

\subsection{Proofs in the two-factor rough Bergomi model}

\subsubsection{Proof of Proposition~\ref{prop:suffconds}}\label{app:proofconds}
We start with a useful Lemma for Gaussian processes.
\begin{lemma}\label{lemma:supexpB}
If $B$ is a Gaussian process with 
$\norm{B}_T:=\sup\limits_{t\le T}|B_t|$,
then $\EE[\E^{p \|B\|_T}]$ is finite for all $p\in\RR$.
\end{lemma}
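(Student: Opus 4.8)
The plan is to reduce the statement to a standard concentration estimate for suprema of Gaussian processes. First, the case $p\le 0$ is immediate: since $\norm{B}_T\ge 0$ by definition, we have $\E^{p\norm{B}_T}\le 1$, hence $\EE\big[\E^{p\norm{B}_T}\big]\le 1<\infty$. It therefore suffices to treat $p>0$, and here I would use that in all our applications $B$ is a centered Gaussian process (e.g.\ a fractional Brownian motion) with almost surely continuous---in particular almost surely bounded---sample paths on the compact interval $[0,T]$, so that $\norm{B}_T<\infty$ almost surely and $\sigma_T^2:=\sup_{t\le T}\EE[B_t^2]<\infty$.

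From this point there are two equivalent routes, and I would take the first as the main argument. \textbf{Route (i), Fernique's theorem.} For a separable centered Gaussian process with almost surely bounded sample paths, Fernique's theorem provides $\alpha>0$ with $\EE\big[\E^{\alpha\norm{B}_T^2}\big]<\infty$. The elementary inequality $px\le \alpha x^2+\frac{p^2}{4\alpha}$, applied with $x=\norm{B}_T$, then gives $\EE\big[\E^{p\norm{B}_T}\big]\le \E^{p^2/(4\alpha)}\,\EE\big[\E^{\alpha\norm{B}_T^2}\big]<\infty$, which is the claim.

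\textbf{Route (ii), Borell--TIS.} Alternatively, set $\mu_T:=\EE[\norm{B}_T]$, which is finite by the Landau--Shepp--Fernique theorem, and recall the Borell--TIS inequality $\PP\big(\norm{B}_T>\mu_T+u\big)\le \exp\big(-u^2/(2\sigma_T^2)\big)$ for all $u>0$. Using the layer-cake identity $\EE\big[\E^{p\norm{B}_T}\big]=1+p\int_0^\infty \E^{ps}\,\PP\big(\norm{B}_T>s\big)\,\ds$ (valid because $\norm{B}_T\ge0$), one splits the integral at $s=\mu_T$, bounding $\PP(\norm{B}_T>s)\le 1$ on $[0,\mu_T]$ and by the Gaussian tail on $(\mu_T,\infty)$; since $\E^{-(s-\mu_T)^2/(2\sigma_T^2)}$ decays faster than $\E^{ps}$ grows, both pieces are finite.

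The only genuine subtlety---standard, but worth isolating---is the justification that $B$ has almost surely bounded paths and that $\mu_T<\infty$ (equivalently, that Fernique's hypothesis holds). This is where one either restricts, as we implicitly do, to the continuous-path Gaussian processes appearing in the paper, or invokes the Landau--Shepp--Fernique dichotomy (almost sure boundedness of the supremum forces its integrability and a Gaussian upper tail). Once this is in place the rest of the computation is routine.
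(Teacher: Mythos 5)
Your proposal is correct, and your Route~(ii) is essentially the paper's own proof: the paper invokes the Borell--TIS inequality from~\cite[Theorem 2.1.1]{AT07} (which gives both $\EE[\norm{B}_T]<\infty$ and the Gaussian tail $\PP(\norm{B}_T-\EE\norm{B}_T>x)\le \exp\{-x^2/(2\sigma_T^2)\}$), writes $\EE[\E^{p\norm{B}_T}]=\int_0^\infty\PP(\E^{p\norm{B}_T}>x)\,\dx$, splits the integral at $\E^p\vee\EE[\norm{B}_T]$, and concludes after a change of variables because the Gaussian decay dominates the exponential growth --- exactly your layer-cake computation. Your preferred Route~(i) via Fernique's theorem, $\EE[\E^{\alpha\norm{B}_T^2}]<\infty$ combined with $px\le\alpha x^2+p^2/(4\alpha)$, is a genuinely different and arguably slicker argument: it avoids the tail integral entirely at the cost of invoking a stronger (though equally classical) result. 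Both routes rest on the same unstated hypothesis, which you rightly isolate: the lemma as phrased omits that $B$ is centered with almost surely bounded paths on $[0,T]$, which is needed for either Borell--TIS or Fernique to apply; the paper makes the same implicit assumption, justified since the processes it applies the lemma to (such as $\overline{B}$ in Lemma~\ref{lemma:oneoverM}) are continuous Gaussian processes. Your separate treatment of $p\le0$ is a minor tidiness the paper skips, as its bound works uniformly once one notes the integrand is at most $1$ there.
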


\begin{proof}
The Borell-TIS inequality asserts that~$\EE[\norm{B}_T]<\infty$ and~$\PP(\norm{B}_T-\EE\norm{B}_T>x)\le \exp\left\{-\frac{x^2}{2\sigma^2_T}\right\}$, where~$\sigma^2_T:=\sup_{t\le T} \EE[B_t^2]$, see~\cite[Theorem 2.1.1]{AT07}.  We then follow the proof of~\cite[Theorem 2.1.2]{AT07}.
	\begin{align*}
	\EE \left[\E^{p \norm{B}_T}\right] = \int_0^\infty \PP\left( \E^{p\norm{B}_T} > x\right) \dx
	\le \E^p + \EE[\norm{B}_T] + \int_{\E^p \vee \EE[\norm{B}_T]}^\infty \PP \left(\norm{B}_T>\frac{\log(x)}{p} \right) \dx.
	\end{align*}
	The Borell-TIS inequality in particular reads
	\[
	\PP \left(\norm{B}_T>\log(x^{1/p}) \right) \le 
	\exp\left\{-\frac{\left(\log(x^{1/p})-\EE[\norm{B}_T]\right)^2}{2\sigma_T^2}\right\}, 
	\qquad \text{for all  } u>\EE[\norm{B}_T].
	\]
	After a change of variable this yields
	\[
	\int_{\E^p \vee \EE[\norm{B}_T]}^\infty \PP \left(\norm{B}_T>\frac{\log(x)}{p} \right) \dx
	\le \int_{\frac{\log\left(\E^p \vee \EE[\norm{B}_T]\right)}{p}}^\infty 
	\exp\left\{-\frac{\left(x-\EE[\norm{B}_T]\right)^2}{2\sigma_T^2}\right\} p \E^{px} \dx <\infty,
	\]
	as desired.
\end{proof}
By the above lemma $\norm{v}_T\in L^p$, so that we can compute its Malliavin derivatives
\begin{align}\label{eq:Dv}
\Df^1_y v_r =  v_0 (r-y)^{\Hm} \Big(\chi \nu \Ee^1_r + \chib \eta \rho \Ee_r^2\Big)
\qquad\text{and}\qquad
\Df^2_y v_r = v_0 \chib \eta \rrho (r-y)^{\Hm} \Ee_r^2.
\end{align}
Without computing explicitly further derivatives, one notices that {\Ffour} holds 
and that there exist $C>0$ and a random variable~$X=C\norm{\Ee_r^1+\Ee_r^2}_T\in L^p$ for all~$p>1$ such that
$\Df^i_y v_r \le X (r-y)^{\Hm}$, $\Df^j_s \Df^i_y v_r \le X (r-s)^{\Hm} (r-y)^{\Hm}$
and
$\Df^k_t \Df^j_s \Df^i_y v_r \le X (r-t)^{\Hm} (r-s)^{\Hm} (r-y)^{\Hm}$, 
implying~\CTwo.
The following lemma grants \COne.
\begin{lemma}\label{lemma:oneoverM} 
	In the two-factor rough Bergomi model~\eqref{eq:expomodel} with~$0\leq T_1< T_2$,
	$$
	\EE\left[ \sup_{y\le T_1}\left(\EE_{y}\left[\frac{1}{T_2-T_1}\int_{T_1}^{T_2} v_{r} \dr\right]\right)^{-p}   \right]
	$$
	is finite for all $p>1$.
	In particular, $1/M$ is dominated in $L^p$.
\end{lemma}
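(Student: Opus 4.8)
The plan is to bound the conditional expectation from below, pathwise, by a constant times the exponential of a continuous Gaussian martingale, and then to invoke Lemma~\ref{lemma:supexpB} (Borell-TIS).

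First I would reduce to a single Wick exponential. Since $\chi+\chib=1$, at least one weight is positive; assume without loss of generality $\chi>0$ (if instead $\chi=0$ the argument is identical after replacing $W^1$ by the standard Brownian motion $\rho W^1+\rrho W^2$ and $\nu$ by $\eta$). Then $v_r\ge v_0\chi\,\Ee(\nu W^{1,H}_r)$ almost surely, with $\Ee(\nu W^{1,H}_r)=\exp\{\nu W^{1,H}_r-\tfrac{\nu^2}{2}\EE[(W^{1,H}_r)^2]\}$ and $\EE[(W^{1,H}_r)^2]=\tfrac{r^{2H}}{2H}$. For $y\le T_1\le r$ we have $\EE_y[W^{1,H}_r]=\int_0^y(r-s)^{\Hm}\,\D W^1_s=:W^{1,H}_{r,y}$, so, applying Jensen's inequality first to the probability measure $\D r/(T_2-T_1)$ on $[T_1,T_2]$ (convexity of $\exp$) and then to $\EE_y$,
\begin{align*}
\EE_y\left[\frac{1}{T_2-T_1}\int_{T_1}^{T_2}v_r\,\dr\right]
&\ge v_0\chi\exp\left(\frac{1}{T_2-T_1}\int_{T_1}^{T_2}\EE_y\big[\log\Ee(\nu W^{1,H}_r)\big]\,\dr\right)\\
&= v_0\chi\,\E^{-C_1}\exp\left(\frac{\nu}{T_2-T_1}\,Z_y\right),
\end{align*}
where $C_1:=\tfrac{\nu^2}{4H(T_2-T_1)}\int_{T_1}^{T_2}r^{2H}\,\dr<\infty$ and $Z_y:=\int_{T_1}^{T_2}W^{1,H}_{r,y}\,\dr$. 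I would then use the stochastic Fubini theorem (legitimate because $\int_{T_1}^{T_2}(\int_0^y(r-s)^{2\Hm}\,\ds)^{1/2}\dr<\infty$, the inner integral being at most $T_2^{2H}/(2H)$) to rewrite $Z_y=\int_0^y g(s)\,\D W^1_s$ with $g(s):=\int_{T_1}^{T_2}(r-s)^{\Hm}\,\dr=\Hp^{-1}\big((T_2-s)^{\Hp}-(T_1-s)^{\Hp}\big)$, which is bounded and continuous on $[0,T_1]$. Hence $(Z_y)_{y\le T_1}$ is a continuous Gaussian martingale with $\sup_{y\le T_1}\EE[Z_y^2]=\int_0^{T_1}g(s)^2\,\ds<\infty$.

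Raising the displayed bound to the power $-p$, using $-Z_y\le|Z_y|\le\norm{Z}_{T_1}$ uniformly in $y\le T_1$, and taking expectations then gives
\[
\EE\left[\sup_{y\le T_1}\left(\EE_y\left[\frac{1}{T_2-T_1}\int_{T_1}^{T_2}v_r\,\dr\right]\right)^{-p}\right]
\le (v_0\chi)^{-p}\,\E^{pC_1}\,\EE\left[\exp\left(\frac{p\nu}{T_2-T_1}\,\norm{Z}_{T_1}\right)\right],
\]
which is finite by Lemma~\ref{lemma:supexpB}; this is the displayed claim. For the statement about $1/M$, one extra layer is needed to pass through the square root. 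Splitting $W^{1,H}_r=W^{1,H}_{r,T}+\int_T^r(r-s)^{\Hm}\,\D W^1_s$ with the last term independent of $\Ff_T$ and centred Gaussian of variance $(r-T)^{2H}/(2H)$, I would get $\EE_T[\Ee(\nu W^{1,H}_r)]\ge\E^{-C_2}\exp\{\nu W^{1,H}_{r,T}\}$ for $C_2:=\nu^2(T+\Delta)^{2H}/(4H)$ and $r\le T+\Delta$; two further applications of Jensen then yield
\[
M_y=\EE_y[\VIX_T]=\EE_y\left[\sqrt{\frac1\Delta\int_T^{T+\Delta}\EE_T[v_r]\,\dr}\right]\ge\sqrt{v_0\chi}\,\E^{-C_2/2}\exp\left(\frac{\nu}{2\Delta}\,\widetilde Z_y\right),
\]
with $\widetilde Z_y:=\int_0^y\widetilde g(s)\,\D W^1_s$ and $\widetilde g(s):=\int_T^{T+\Delta}(r-s)^{\Hm}\,\dr$, again a continuous Gaussian martingale whose second moment is bounded uniformly in $T$ over any bounded range. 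Lemma~\ref{lemma:supexpB} then gives $\EE[\sup_{y\le T}M_y^{-p}]<\infty$, so that $1/M$ is dominated in $L^p$ by $\sup_{y\le T}M_y^{-1}$.

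The hard part is getting this chain of lower bounds right: a direct application of Jensen's inequality goes the wrong way through the concave square root, so one must first bound $\EE_T[v_r]$ below pathwise (up to a constant) by $\exp\{\nu W^{1,H}_{r,T}\}$ before taking the square root and then conditioning on $\Ff_y$; and one must check carefully that $g$ and $\widetilde g$ are genuinely bounded, so that $Z$ and $\widetilde Z$ really are continuous Gaussian processes to which Lemma~\ref{lemma:supexpB} applies.
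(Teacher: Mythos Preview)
Your proof is correct and follows essentially the same strategy as the paper: Jensen's inequality to pull the exponential out, stochastic Fubini to identify a Gaussian martingale $Z_y=\int_0^y g(s)\,\D W^1_s$, and Lemma~\ref{lemma:supexpB} to finish. Two minor but worthwhile differences: you discard one factor via $v_r\ge v_0\chi\,\Ee(\nu W^{1,H}_r)$ at the outset, whereas the paper keeps both factors through the concavity inequality $-\log\EE_y[v_r]\le-\tfrac12\{\log(2\chi v_0\EE_y[\Ee^1_r])+\log(2\chib v_0\EE_y[\Ee^2_r])\}$; your route is shorter and avoids that splitting (at the cost of a harmless case distinction on $\chi$). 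You also spell out the ``In particular'' for $1/M$, correctly noting that the square root in $\VIX_T$ prevents a direct appeal to the displayed claim (Jensen for the concave square root goes the wrong way), and instead bounding $\EE_T[v_r]$ below pathwise before taking the root and conditioning on $\Ff_y$; the paper leaves this step implicit.
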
 

\begin{proof} 
	We first apply an $\exp-\log$ identity, and from the concave property of the logarithm function we may use Jensen's inequality, to obtain
$$
\EE_{y}\left[\frac{1}{T_2-T_1}\int_{T_1}^{T_2} v_{r} \dr\right]^{-p}
= \exp\left\{-p \log \left(\frac{1}{T_2-T_1}\int_{T_1}^{T_2} v_{r}\dr \right)\right\}
\le \exp\left\{-\frac{p}{T_2-T_1}\int_{T_1}^{T_2} \log(\EE_y [v_r]) \dr\right\}.
$$
We further bound $\log\EE_y [v_r]$ by the concavity of the logarithm and~\eqref{eq:expomodel}:
\begin{equation}\label{eq:lnvr}
-\log\EE_y [v_r] \le -\half \Big\{\log\left(2\chi v_0 \EE_y[\Ee^1_r]\right) + \log\left(2\chib v_0 \EE_y[\Ee^2_r]\right) \Big\},
\end{equation}
which we now compute as
	\begin{align} \label{eq:condexpexpo}
	\EE_y[\Ee^1_r]
& =  \exp\left\{-\frac{\nu^2 r^{2H}}{4H} + \nu \int_0^y (r-s)^{\Hm}\D W^1_s\right\} \EE_y\left[\exp\left\{\nu \int_y^r (r-s)^{\Hm}\D W^1_s\right\}\right] \nonumber\\
& = \exp\left\{\frac{\nu^2}{4H} \left[(r-y)^{2H}-r^{2H}\right] + \nu \int_{0}^{y} (r-s)^{\Hm} \D W^1_s\right\}; \nonumber \\
	\EE_y[\Ee^2_r] &=\exp\left\{\frac{\eta^2}{4H}\left[(r-y)^{2H}-r^{2H}\right] + \eta \int_0^{y} (r-s)^{\Hm} \D (\rho W^1_s+\rrho W^2_s)\right\}.
\end{align}
	Let us deal with the first term of~\eqref{eq:lnvr}, as the second one is analogous. We have
	\[
	\int_{T_1}^{T_2} \left[(r-y)^{2H}-r^{2H}\right] \dr = \frac{(T_2-u)^{2\Hp}-(T_1-y)^{2\Hp}-T_2^{2\Hp}+T_1^{2\Hp}}{2\Hp} \vspace{-.2cm}
	\]
	is clearly bounded below for all~$0\le u\le T_1$. Moreover, by Fubini's theorem,
$$
\int_{T_1}^{T_2}  \int_0^y (r-t)^{\Hm} \D W^1_t \dr 
=\int_0^y \int_{T_1}^{T_2}   (r-t)^{\Hm} \dr\D W^1_t 
= \int_0^y \frac{(T_2-t)^{\Hp}-(T_1-t)^{\Hp}}{\Hp} \D W^1_t 
=: \overline{B}_t
$$
is a Gaussian process. Since $\exp\{\cdot\}$ is increasing, $\sup_{t\in[0,T]}  \exp\{\overline{B}_t\} = \exp\{\sup_{t\in[0,T]} \overline{B}_t\}$, thus
	\[
	\EE\left[\sup_{y\le T_1} \exp\left(-\frac{p}{T_2-T_1}\int_{T_1}^{T_2}  \int_0^y (r-s)^{\Hm} \D W^1_s\dr\right)\right] \le \EE\left[\exp\left(\frac{p}{T_2-T_1} \norm{\overline{B}}_T\right)\right] <\infty,
	\]
	by Lemma~\ref{lemma:supexpB}, which concludes the proof.
\end{proof}



Combining~\eqref{eq:Dv}, \eqref{eq:condexpexpo} we obtain
$\EE_y[\Df^i_y v_r]$, $i=1,2$.
The following lemma proves that {\CThree} is satisfied.
\begin{lemma}
	For any~$p>1$,
	$ \EE[u_s^{-p}]$ is uniformly bounded in~$s$ and~$T$, with~$s\le T$.
\end{lemma}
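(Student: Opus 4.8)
The plan is to control $u_s^{-2}=(T-s)\big/\!\int_s^T\norm{\phib_r}^2\,\dr$ from above. First I would establish a uniform lower bound $\norm{\phib_r}^2\ge c_0^2\,W_r^{-2}$, valid for all $0\le r\le T\le T_0$ (with $T_0$ a fixed horizon), where $c_0>0$ is a constant and $W_r$ is a positive $\Ff_r$-measurable random variable with $\sup_{r\le T_0}\EE[W_r^p]<\infty$ for every $p\ge1$; two applications of Jensen's inequality then give the claim.

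The key preliminary step is to record uniform pathwise bounds. From the Wick-exponential form of $v$, the conditional-expectation formulas~\eqref{eq:condexpexpo}, and the Borell--TIS argument in the proof of Lemma~\ref{lemma:supexpB} applied to the Gaussian processes $W^{1,H}$ and $\rho W^{1,H}+\rrho W^{2,H}$ and to the centred Gaussian fields $(T,u)\mapsto\int_0^T(u-s)^{\Hm}\,\D W^1_s$, $(T,u)\mapsto\int_0^T(u-s)^{\Hm}\,\D(\rho W^1_s+\rrho W^2_s)$ over the compact set $\{0\le T\le T_0,\ T\le u\le T+\Delta\}$ (on which these fields have uniformly bounded variance, since $\int_0^T(u-s)^{2\Hm}\ds=\frac{u^{2H}-(u-T)^{2H}}{2H}\le\frac{(T_0+\Delta)^{2H}}{2H}$), I obtain --- after enlarging by a deterministic constant --- a single random variable $\mathcal R\ge1$, not depending on $T$, with $\EE[\E^{p\mathcal R}]<\infty$ for all $p\ge1$ and such that, almost surely, $\E^{-\mathcal R}\le\Ee^1_u,\ \Ee^2_u,\ \EE_T[\Ee^1_u],\ \EE_T[\Ee^2_u],\ \VIX_T\le\E^{\mathcal R}$ for all $T\le T_0$ and $u\in[T,T+\Delta]$.

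With this in hand I would bound $\norm{\phib_r}$ from below. Since $\rho+\rrho>0$, summing the two lines of~\eqref{eq:Dv} gives $(\Df^1_r+\Df^2_r)v_u=v_0(u-r)^{\Hm}\big(\chi\nu\,\Ee^1_u+\chib\eta(\rho+\rrho)\,\Ee^2_u\big)\ge v_0\big(\chi\nu+\chib\eta(\rho+\rrho)\big)(u-r)^{\Hm}\E^{-\mathcal R}$. Combining this with $\VIX_T^{-1}\ge\E^{-\mathcal R}$, with $\int_T^{T+\Delta}(u-r)^{\Hm}\,\du\ge c_1:=\Hp^{-1}\inf_{0\le r\le T\le T_0}\big[(T+\Delta-r)^{\Hp}-(T-r)^{\Hp}\big]>0$, and with $m^i_r=\EE_r\big[\tfrac{1}{2\Delta\VIX_T}\int_T^{T+\Delta}\Df^i_r v_u\,\du\big]$ from~\eqref{eq:phiVIX}, I get $m^1_r+m^2_r\ge c_2\,\EE_r[\E^{-2\mathcal R}]$ with $c_2:=\tfrac{v_0(\chi\nu+\chib\eta(\rho+\rrho))c_1}{2\Delta}>0$, while $M_r=\EE_r[\VIX_T]\le\EE_r[\E^{\mathcal R}]$. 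Using $(m^1_r)^2+(m^2_r)^2\ge\half(m^1_r+m^2_r)^2$ and $\norm{\phib_r}^2=\norm{\mb_r}^2/M_r^2$ yields $\norm{\phib_r}^2\ge c_0^2\,W_r^{-2}$ with $c_0:=c_2/\sqrt2$ and $W_r:=\EE_r[\E^{\mathcal R}]/\EE_r[\E^{-2\mathcal R}]$; and since Cauchy--Schwarz gives $\EE_r[\E^{-2\mathcal R}]\ge\EE_r[\E^{2\mathcal R}]^{-1}$ and $\E^{2\mathcal R}\ge1$ gives $\EE_r[\E^{\mathcal R}]\le\EE_r[\E^{2\mathcal R}]$, one has $W_r\le\EE_r[\E^{2\mathcal R}]^2$, hence $\EE[W_r^p]\le\EE[\E^{4p\mathcal R}]<\infty$ uniformly in $r\le T\le T_0$.

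To finish, convexity of $x\mapsto x^{-1}$ gives $u_s^{-2}\le\tfrac{T-s}{c_0^2\int_s^T W_r^{-2}\dr}\le\tfrac{1}{c_0^2(T-s)}\int_s^T W_r^2\,\dr$; then for $p\ge2$, convexity of $x\mapsto x^{p/2}$ and Tonelli give $\EE[u_s^{-p}]\le c_0^{-p}(T-s)^{-1}\int_s^T\EE[W_r^p]\,\dr\le c_0^{-p}\sup_{r\le T_0}\EE[W_r^p]<\infty$, uniformly in $0\le s\le T\le T_0$, and the range $1<p<2$ follows from $\EE[u_s^{-p}]\le\EE[u_s^{-2}]^{p/2}$. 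The hard part is the lower bound on $\norm{\phib_r}$ in the previous paragraph --- specifically, controlling the ratio of conditional expectations $m^i_r/M_r$ uniformly in $r$ and $T$ --- which is exactly what the pathwise sandwiching by $\E^{\pm\mathcal R}$, together with the exponential integrability of the underlying Gaussian suprema, is designed to deliver.
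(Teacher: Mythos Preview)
Your proof is correct and takes a genuinely different route from the paper's. Both arguments hinge on the same ingredients --- positivity of $\Df^1_r v_u+\Df^2_r v_u$ (from $\rho+\rrho>0$), the domination of $1/M$ and $1/\VIX$, and exponential integrability of the Gaussian suprema --- but they organise them differently. The paper bounds $u_s^{-2}$ as in your first display, then applies the $\exp$-$\log$ trick and Jensen on the logarithm to arrive at an expression of the form
\[
\EE\left[\exp\left\{-\frac{2p}{\Delta(T-s)}\int_s^T\int_T^{T+\Delta}\EE_y\big[\log(\Df^1_y v_r+\Df^2_y v_r)\big]\,\dr\,\dy\right\}\right],
\]
and then shows this is uniformly bounded by explicitly computing each of the three terms coming from~\eqref{eq:lnw} (the deterministic $\log((r-y)^{\Hm})$ piece, the $r^{2H}$ piece, and the stochastic integral piece, the latter handled via stochastic Fubini and a Gaussian moment). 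Your approach instead packages all the pathwise control into a single random variable $\mathcal R$ with exponential moments, obtains the pointwise lower bound $\norm{\phib_r}^2\ge c_0^2 W_r^{-2}$, and finishes with two clean applications of Jensen. Your route is more conceptual and avoids the line-by-line integral computations; the paper's route is more explicit and stays closer to the specific structure of the model. Both are valid.
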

\begin{proof}
	Since $\nu,\eta,\rho+\rrho>0$, then $\Df^1_y v_r+\Df^2_y v_r>0$ almost surely for all~$y\le r$. Moreover, $\VIX$ and~$1/\VIX$ are dominated by some~$X \in L^p$ for all~$p>1$, then almost surely and independently of the sign of the numerator, we obtain
	\[
	m^i_y = \EE_y\left[\frac{\int_T^{T+\Delta} \Df^i_y v_r \dr}{2\Delta \VIX_T}\right] \ge \EE_y\left[\frac{\int_T^{T+\Delta} \Df^i_y v_r \dr}{2\Delta X}\right],
	\]
	and therefore, using that~$1/M$ is dominated by~$X$ and Jensen's inequality we get
	\begin{align}\label{eq:invucarre}
	\frac{1}{u_s^2}
	= \frac{T-s}{\int_s^T \sum_{i=1}^N (\phi^i_y)^2\dy}
	& \le  \frac{X^2(T-s)}{\int_s^T \sum_{i=1}^N (m^i_y)^2\dy} 
	\le \frac{X^2 N(T-s)}{\int_s^T \left(\sum_{i=1}^N m^i_y \right)^2\dy} 
	\le X^2 N \left( \frac{T-s}{\int_s^T \sum_{i=1}^N m^i_y\dy}\right)^2 \nonumber\\
	&\le 4 X^2 N \left(\frac{\int_s^T\int_T^{T+\Delta}\sum_{i=1}^N \EE_y\left[\Df^i_y v_r /X\right] \dr\dy}{\Delta(T-s)}\right)^{-2}.
	\end{align}
	Hence we turn our attention to
	\begin{align}
	&\EE\left[ \left(\frac{1}{\Delta(T-s)} \int_s^T \int_T^{T+\Delta} \EE_y\left[\frac{\Df^1_y v_r+\Df^2_y v_r}{X}\right] \dr \dy\right)^{-p}\right] \nonumber\\
	&= \EE\left[ \exp\left\{-p\log \left(\frac{1}{\Delta(T-s)} \int_s^T \int_T^{T+\Delta} \EE_y\left[\frac{\Df^1_y v_r+\Df^2_y v_r}{X}\right]  \dr \dy\right)\right\}\right] \nonumber  \\
	&\le \EE\left[ \exp\left\{-\frac{p}{\Delta(T-s)} \int_s^T \int_T^{T+\Delta} \EE_y\left[\log \left(\Df^1_y v_r+\Df^2_y v_r\right) -\log(X)\right] \dr \dy \right\}\right] \nonumber\\
	&\le \left(\EE\left[ \exp\left\{-\frac{2p}{\Delta(T-s)} \int_s^T \int_T^{T+\Delta} \EE_y\left[\log \left(\Df^1_y v_r+\Df^2_y v_r \right)\right] \dr \dy \right\}\right]\right)^\half \sqrt{\EE\left[X^{2p}\right]},
	\label{eq:expmomentw}
	\end{align}
	using Jensen's and Cauchy-Schwarz inequalities and~$\E^{p\EE_y[\log(X)]}\le \EE_y[X^p]$. 
\notthis{
\begin{remark}
	Let $\mf_y := \sum_{i=1}^N m^i_y$.
	Since~$1/M$ is dominated by~$X$, Jensen's inequality yields
	\begin{align}\label{eq:invucarre}
	\frac{1}{u_s^2}
	= \frac{T-s}{\int_s^T \sum_{i=1}^N (\phi^i_y)^2\dy}
	& \le  \frac{X^2(T-s)}{\int_s^T \sum_{i=1}^N (m^i_y)^2\dy} 
	\le \frac{X^2 N(T-s)}{\int_s^T \mf_y^2\dy} 
	\end{align}
	Now, $\mf_y^2 = \mf_y^2\ind_{\{|\mf_y|\geq 1\}} + \mf_y^2 \ind_{\{|\mf_y|<1\}}
	\geq \mf_y^2\ind_{\{|\mf_y|\geq 1\}} \geq |\mf_y|\ind_{\{|\mf_y|\geq 1\}}
	\geq \mf_y\ind_{\{\mf_y\geq 1\}}$, so that
	\begin{align*}
	\frac{1}{\int_s^T \mf_y^2\dy}
	 \leq \frac{1}{\int_s^T |\mf_y|\ind_{\{|\mf_y|\geq 1\}}\dy}
	 &  = \exp\left\{-\log\left(\int_s^T \mf_y\ind_{\{\mf_y\geq 1\}}\dy \right)\right\}
	  \leq \exp\left\{-\int_s^T\log\left( \mf_y\ind_{\{\mf_y\geq 1\}} \right)\dy\right\}.
	\end{align*}
	Replacing $\mf_y$ by its definition, we obtain, again with Jensen's inequality,
\begin{align}
\EE\left[\frac{1}{\int_s^T \mf_y^2\dy}\right]
 & \leq \EE\left[\exp\left\{-\int_s^T\left[\log\left( \EE_y\left[\frac{\sum_{i=1}^{N}\int_T^{T+\Delta} \Df^i_y v_r \dr}{2\Delta X}\right]\ind_{\{\mf_y\geq 1\}}\right)\right]\dy\right\}\right]\nonumber\\
	&\le \EE\left[ \exp\left\{-\frac{1}{\Delta(T-s)} \int_s^T \int_T^{T+\Delta} 
	\EE_y\left[\log \left(\Df^1_y v_r+\Df^2_y v_r\right)\right]\ind_{\{\mf_y\geq 1\}}
	 -\EE_y\left[\log(X)\ind_{\{\mf_y\geq 1\}}\right] \dr \dy \right\}\right] \nonumber\\
	&\le \left(\EE\left[ \exp\left\{-\frac{2}{\Delta(T-s)} \int_s^T \int_T^{T+\Delta} \EE_y\left[\log \left(\Df^1_y v_r+\Df^2_y v_r \right)\right]\ind_{\{\mf_y\geq 1\}} \dr \dy \right\}\right]\right)^\half \sqrt{\EE\left[X^{2}\ind_{\{\mf_y\geq 1\}}\right]},
	\label{eq:expmomentw}
\end{align}
	using Jensen's and Cauchy-Schwarz inequalities and~$\E^{\EE_y[\log(X)]}\le \EE_y[X]$. 
\end{remark}
}	
Convexity and~\eqref{eq:Dv} imply
	\begin{align*}
	-\log \left(\Df^1_y v_r+\Df^2_y v_r\right) 
	\le &-\frac12 \Big\{\log \left(2 v_0\chi \nu(r-y)^{\Hm} \Ee^1_r \right)
	+ \log\left(2 v_0 \chib \eta(\rho+\rrho)(r-y)^{\Hm} \Ee_r^2\right) \Big\}.
	\end{align*}
	We focus on the first term and the other can be treated identically. From~\eqref{eq:condexpexpo} we have
	\begin{align}\label{eq:lnw}
	\EE_y\left[\log \left(2 v_0 \chi \nu(r-y)^{\Hm} \Ee^1_r \right)\right]
	= \log\left(2 v_0 \chi \nu(r-y)^{\Hm}\right) - \frac{\nu^2 r^{2H}}{4H} + \nu \int_0^y (r-t)^{\Hm} \D W^1_t .
	\end{align}
	Let us start with
	\begin{align*}
	&\int_s^T \int_T^{T+\Delta} \log\left(2 v_0 \chi \nu(r-y)^{\Hm}\right) \dr \dy \\
	&= 2(T-s)\Delta v_0 \chi \nu + \Hm \int_s^T \int_T^{T+\Delta} \log(r-y) \dr \dy  \\
	&= 2(T-s)\Delta v_0 \chi \nu + \Hm \int_s^T \Big[(T+\Delta-y)\log(T+\Delta-y) - (T+\Delta-y)
- (T-y)\log(T-y) + (T-y) \Big]\dy \\
	& = 2(T-s)\Delta v_0 \chi \nu + \Hm \left\{-\Delta (T-s) - \int_\Delta^{T+\Delta-s} x\log(x)\dx +\int_0^{T-s} x\log(x)\dx \right\} \\
	& = 2(T-s)\Delta v_0 \chi \nu + \Hm \Bigg\{-\Delta (T-s)+ \left(\frac{(T-s)^2 \log(T-s)}{2} - \frac{(T-s)^2}{4}\right)  \\
	& \qquad\qquad\quad - \left(\frac{(T+\Delta-s)^2 \log(T+\Delta-s)}{2} - \frac{(T+\Delta-s)^2}{4}\frac{\Delta^2\log(\Delta)}{2} +\frac{\Delta^2}{4}\right) \Bigg\}\\
	&= 2(T-s)\Delta v_0 \chi \nu +(T-s) \Hm \Bigg\{ -\Delta + \left(\frac{(T-s) \log(T-s)}{2} - \frac{(T-s)}{4}\right) + \frac{2\Delta +(T-s)}{4}\\
	& \quad \qquad\qquad\qquad - \Delta\log(T+\Delta-s) - \frac{(T-s)\log(T+\Delta-s)}{2} \Bigg\} -\frac{\Delta^2}{2} \big(\log(T+\Delta-s)-\log(\Delta) \big).
	\end{align*}
	By Taylor's theorem $\log(T+\Delta-s)-\log(\Delta) = \frac{T-s}{\Delta} + \ep(T-s)$, where~$\ep:\RR_+\to\RR_+$ is such that~$\ep(x)/x$ tends to zero at the origin. We conclude that
	\[
	-\frac{p}{2\Delta(T-s)} \int_s^T \int_T^{T+\Delta} \log\left(2 v_0 \chi \nu(r-y)^{\Hm}\right) \dr \dy
	\]
	is uniformly bounded.
	Now we study the second term of~\eqref{eq:lnw}
	\[
	- \int_s^T \int_T^{T+\Delta} r^{2H} \dr\dy = (T-s) \frac{T^{2\Hp}-(T+\Delta)^{2\Hp}}{2\Hp}.
	\]
	Therefore the following is uniformly bounded:
	\[
	\frac{p}{2\Delta(T-s)} \int_s^T \int_T^{T+\Delta} \frac{\nu^2 r^{2H}}{4H}  \dr \dy.
	\]
	And for the last term we get by stochastic Fubini's theorem~\cite[Theorem 65]{Protter05}
	\begin{align*}
	\int_s^T \int_T^{T+\Delta} \int_0^y (r-t)^{\Hm} \D W^1_t \dr\dy
	& = \int_s^T \int_0^y \int_T^{T+\Delta} (r-t)^{\Hm} \dr \D W^1_t \dy\\
	& =\int_0^T \int_{s\vee t}^T \frac{(T+\Delta-t)^{\Hp} - (T-t)^{\Hp}}{\Hp} \dy \D W^1_t.
	\end{align*}
	Standard Gaussian computations then yield
	\begin{align}
	&\EE\left[\exp\left\{-\frac{p}{4\Delta(T-s)} \int_s^T \int_T^{T+\Delta} \nu \int_0^y (r-t)^{\Hm} \D W^1_t \dr\dy \right\}\right] \nonumber \\
	&= \exp\left\{\half \left(\frac{p\nu}{4\Delta(T-s)}\right)^2 \int_0^T \left( \int_{s\vee t}^T \frac{(T+\Delta-t)^{\Hp} - (T-t)^{\Hp}}{\Hp} \dy\right)^2\dt \right\}.
	\label{eq:TripleInt}
	\end{align}
	The incremental function~$x\mapsto (x+\Delta)^{\Hp}-x^{\Hp}$ is decreasing by concavity, hence~$(T+\Delta-t)^{\Hp} - (T-t)^{\Hp} \le \Delta^{\Hp}$ and we obtain
	\[
	\int_0^T (T-s\vee t)^2 \dt = \int_0^s (T-s)^2\dt + \int_s^T (T-t)^2\dt = s(T-s)^2 + \frac{(T-s)^3}{3},
	\]
	which entails that~\eqref{eq:TripleInt} is uniformly bounded. We thus showed that~\eqref{eq:expmomentw} is uniformly bounded in~$s,T$. 
	
	Coming back to~\eqref{eq:invucarre} we have by Cauchy-Schwarz inequality
	\[
	\EE[u_s^{-p}]^2 \le 2^{p} \EE[X^{p}] \, \EE\left[ \left(\frac{T-s}{\int_s^T (m^1_y + m^2_y) \dy} \right)^{2p} \right],
	\]
	which is uniformly bounded for all~$s\le T$, and this concludes the proof.
\end{proof}
\subsubsection{Proof of Proposition~\ref{prop:expomodel}}\label{sec:proofexpomodel}

\textbf{Level.} We start with the derivatives
\begin{align*}
\Df^1_s v_t =  v_0\Big[\chi \nu(t-s)^{\Hm} \Ee^1_t + \chib \eta \rho (t-s)^{\Hm} \Ee_t^2\Big]
\qquad\text{and}\qquad
\Df^2_s v_t = v_0 \chib \eta \rrho (t-s)^{\Hm} \Ee_t^2,
\end{align*}
and recall the definitions~\eqref{eq:defJG}
$$
J_1 = \int_0^\Delta v_0  \EE\left[\chi \nu r^{\Hm} \Ee^1_r +\chib\eta \rho r^{\Hm} \Ee^2_r \right] \dr = v_0(\chi\nu+\chib\eta\rho)\frac{\Delta^{\Hp}}{\Hp}
\qquad\text{and}\qquad
J_2 = v_0\chib \eta \rrho \frac{\Delta^{\Hp}}{\Hp}.
$$
We also note that~$\EE[\Ee^i_t]=1$. This yields the norm
\[
\norm{\Jb} := \left(J_1^2 + J_2^2 \right)^{\half}
= \frac{v_0 \Delta^{\Hp}}{\Hp} 
\sqrt{ (\chi\nu+\chib\eta\rho)^2 + \chib^2\eta^2\rrho^2}
= \frac{v_0 \Delta^{\Hp}}{\Hp} \psi(\rho,\nu,\eta,\chi),
\]
with the function~$\psi$ defined in the proposition,
which grants us the first limit by Proposition~\ref{prop:genmodel}.
To simplify the notations below, we introduce~$\wf := \chi\nu + \chib\eta\rho$.

\textbf{Skew.} We compute the further derivatives:
\begin{align*}
\Df^1_0 \Df^1_0 v_t &=  v_0\left(\chi\nu^2 t^{2H-1} \Ee_t^1 + \chib\eta^2 \rho^2 t^{2H-1} \Ee_t^2 \right),\\
\Df^1_0 \Df^2_0 v_t & =  v_0\chib\eta^2 \rho \rrho t^{2H-1} \Ee^2_t,\\
\Df^2_0 \Df^2_0 v_t & =  v_0\chib \eta^2 \rrho^2 t^{2H-1} \Ee^2_t.
\end{align*}
Similarly to~$J$, we recall that~$G_{ij}= \int_0^{\Delta} \EE\big[\Df^j_0 \Df^i_0 v_r \big]\dr$, such that
\begin{align*}
G_{11}=\frac{\Delta^{2H}}{2H} v_0 (\chi\nu^2 + \chib\eta^2 \rho^2), \qquad
G_{12} = \frac{\Delta^{2H}}{2H} v_0\chib \eta^2\rho \rrho, \qquad
G_{22}= \frac{\Delta^{2H}}{2H} v_0\chib \eta^2 \rrho^2.
\end{align*}
Notice that~$\VIX^2_0=v_0$, thus we have
\begin{align*}
J_1^2 \left(G_{11} -\frac{ J_1^2}{\Delta\VIX_0^2} \right)
&= \frac{v_0^3\Delta^{4H+1}}{2H\Hp^2} 
\wf^2 (\chi\nu^2+\chib\eta^2\rho^2)
- \frac{v_0^3 \Delta^{4H+1}}{\Hp^4}  \wf^4\\
&= v_0^3 \frac{\Delta^{4H+1}}{\Hp^2}  \wf^2
\left[ \frac{\chi\nu^2+\chib\eta^2\rho^2}{2H}-\frac{\wf^2}{\Hp^2}\right]; \\
J_1 J_2 \left( G_{12} -\frac{J_1 J_2}{\Delta\VIX_0^2} \right)
&= \frac{v_0^3\Delta^{4H+1}}{2H\Hp^2}\wf\chib^2\eta^3\rho\rrho^2
-  \frac{v_0^3\Delta^{4H+1}}{\Hp^4}  \wf^2\chib^2\eta^2\rrho^2 \\
&= v_0^3 \frac{\Delta^{4H+1}}{\Hp^2} \wf\chib^2\eta^2\rrho^2 \left[ \frac{\eta\rho}{2H} - \frac{\wf}{\Hp^2}\right];\\
J_2^2 \left( G_{22} -\frac{J_2^2}{\Delta\VIX_0^2} \right)
&= v_0^3 \frac{\Delta^{4H+1}}{2H\Hp^2} \chib^3 \eta^4\rrho^4 - v_0^3 \chib^4 \frac{\Delta^{4H+1}}{\Hp^4} \eta^4\rrho^4 \\
&= v_0^3 \frac{\Delta^{4H+1}}{\Hp^2}\chib^3 \eta^4\rrho^4 \left(\frac{1}{2H}-\frac{\chib}{\Hp^2}\right). \end{align*}
Finally by Proposition~\ref{prop:genmodel} we obtain
\begin{align*}
\lim_{T\downarrow0} \Ss_T =
& \frac{\Hp\Delta^{\Hm}}{2\psi(\rho,\nu,\eta,\chi)^{3}} \Bigg\{\wf^2 \left[ \frac{\chi\nu^2+\chib\eta^2\rho^2}{2H}-\left(\frac{\wf}{\Hp}\right)^2\right]\\
& + 2 \wf\chib^2\eta^2\rrho^2 \left[ \frac{\eta\rho}{2H} - \frac{\nu+\eta\rho}{\Hp^2}\right] 
+  \chib^3 \eta^4\rrho^4 \left(\frac{1}{2H}-\frac{1}{\Hp^2}\right) \Bigg\}.
\end{align*}

\textbf{Curvature.} For the last step we go one step further:
\begin{alignat*}{2}
\Df^1_0\Df^1_0 \Df^1_0 v_t &=  v_0 \left(\chi\nu^3 \Ee_t^1 + \chib\eta^3 \rho^3 \Ee_t^2 \right)t^{3\Hm}; \qquad\qquad
&&\Df^2_0 \Df^1_0 \Df^1_0 v_t =  v_0\chib \eta^3 \rho^2 \rrho t^{3\Hm} \Ee_t^2; \\
\Df^2_0 \Df^2_0 \Df^1_0 v_t &=  v_0\chib \eta^3 \rho \rrho^2 t^{3\Hm} \Ee_t^2;  \qquad\qquad
&&\Df^2_0 \Df^2_0 \Df^2_0 v_t =  v_0\chib \eta^3  \rrho^3 t^{3\Hm} \Ee_t^2.
\end{alignat*}
We notice that
\[
\lim_{T\downarrow0} \frac{\int_T^{T+\Delta} r^{3\Hm} \dr}{T^{3H-\half}}
 = \lim_{T\downarrow0} \frac{(T+\Delta)^{3H-\half} - T^{3H-\half}}{T^{3H-\half} (3H-\half)}
 = \frac{2}{1-6H}.
\]
By the curvature limit in Proposition~\ref{prop:genmodel}, we have
\begin{alignat*}{2}
&\lim_{T\downarrow0} \frac{\Cc_T}{T^{3H-\half}}
&&= \frac{2\Delta v_0}{3\left(\frac{v_0 \Delta^{\Hp}}{2\Hp}\right)^5 \psi(\rho,\nu,\eta,\chi)^{5}} \Bigg\{ 
v_0^3\frac{\Delta^{3\Hp}}{\Hp^3} \wf^3 v_0 \frac{\chi\nu^3+\chib\eta^3\rho^3}{\half-3H} 
\\
& + 3v_0^3\frac{\Delta^{3\Hp}}{\Hp^3}
&& \wf^2 \chib\eta\rrho v_0 \frac{\chib\eta^3\rho^2\rrho}{\half-3H}
+ 3v_0^3\frac{\Delta^{3\Hp}}{\Hp^3} \wf \chib^2\eta^2\rrho^2 v_0\frac{\chib\eta^3\rho\rrho^2}{\half-3H} 
+ v_0\frac{\Delta^{3\Hp}}{\Hp^3} \chib^3 \eta^3\rrho^3 v_0 \frac{\chib\eta^3\rrho^3}{\half-3H} \Bigg\} \\
& &&= \frac{128 \Delta^{-2H} \Hp^2}{3\psi(\rho,\nu,\eta,\chi)^{5}(1-6H)}
\Big\{ \wf^3 (\chi\nu^3+\chib\eta^3\rho^3) 
+ 3 \wf^2 \chib^2 \eta^4\rrho^2 \rho^2
+ 3  \wf\chib^3  \eta^5\rrho^4 \rho 
+  \chib^4\eta^6\rrho^6 \Big\},
\end{alignat*}
which yields the claim. \qedhere

\subsection{Proofs in the stock price case}\label{app:SPXlimit}

\subsubsection{Proof of Proposition~\ref{prop:SPXlimit}}
Since~$\phi$ and $u^{-p}$ are dominated by conditions (i) and (iii) respectively, with the same notations as in the proof of Proposition~\ref{prop:condsaresufficient}, we obtain by (ii), as~$T$ goes to zero,
\begin{align*}
\Df \phi_s  \lesssim (T-s)^{\Hm}, \qquad
\Thb_s  \lesssim (T-s)^{\Hp}, \qquad
\Df \Thb_s  \lesssim (T-s)^{2H}, \qquad
\Df \Df \Thb_s  \lesssim (T-s)^{3H-\half}.
\end{align*} 
Under our three assumptions it is straightforward to see that {\HAll} are satisfied. 
Moreover, the terms in {\HSix} behave as~$T^{2H-\lambda}$ and the one {\HSeven} as~$T^{\Hm-\lambda}$, which means that by setting~$\lambda=\Hm$ the former vanishes and the second yields a non-trivial behaviour.

Let us have a look at the short-time implied volatility. 
By Lemma~\ref{lemma:MVT1} and the continuity of $v$ we have~$\lim_{T\downarrow0} u_0 = \sqrt{\sum_{i=1}^N v_0 \rho_i^2}=\sqrt{v_0}$  almost surely, hence by Theorem~\ref{thm:level} and dominated convergence
\[
\lim_{T\downarrow0}\widehat \Ii_{T} = \lim_{T\downarrow0} \EE[u_0] = \sqrt{v_0}.
\]
We then turn our attention to the short-time skew. 
With $\lambda=\Hm$,
Theorem~\ref{thm:skew} and DCT imply
\begin{align*}
\lim_{T\downarrow0} \frac{\widehat\Ss_T}{T^{\Hm}}
=\sum_{i,j=1}^N\lim_{T\downarrow0}  \half \EE\left[\frac{\int_0^T \phi_s^j \int_s^T \Df_s^j (\phi_y^i)^2 \dy \ds}{u_0^3 T^{\frac32+H}}\right]
&= \sum_{j=1}^N \frac{\rho_j}{2 v_0^{3/2}} \EE\left[ \lim_{T\downarrow0} \frac{\int_0^T \int_s^T \sqrt{v_s}\,  \Df^j_{s} v_y\dy\ds}{T^{\frac32+H}} \right],
\end{align*}
where we used~$\sum_{i=1}^N \rho_i^2=1$. 
For any~$j\in\llbracket 1,N\rrbracket$, Cauchy-Schwarz inequality yields
\[
\EE\left[ \left( \sqrt{\frac{v_s}{v_0}}-1\right) \Df^j_s v_y \right] \le \EE\left[ \left( \sqrt{\frac{v_s}{v_0}}-1\right)^2\right]^\half \, \EE\big[ (\Df^j_s v_y)^2 \big]^\half,
\]
where $\EE\big[ (\Df^j_s v_y)^2 \big]^\half\le C (y-s)^{\Hm}$ for some finite constant~$C$ by (ii). Therefore, 
\begin{align*}
\lim_{T\downarrow0} \bigg(&\frac{\int_0^T \int_s^T \EE[ \sqrt{v_s}\,  \Df^j_{s} v_y]\dy\ds}{\sqrt{v_0}\,T^{\frac32+H}} - \frac{\int_0^T \int_s^T \EE[  \Df^j_{s} v_y]\dy\ds}{T^{\frac32+H}} \bigg) \\
&\qquad\qquad\le C \lim_{T\dto0}\left( \sup_{t\le T} \EE\left[ \left( \sqrt{\frac{v_t}{v_0}}-1\right)^2\right]^\half \frac{ \int_0^T \int_s^T (y-s)^{\Hm}\dy\ds}{T^{\frac32+H}}\right).
\end{align*}
Since the fraction is equal to $((H+\frac32)\Hp)^{-1}$ and $\limsup_{T\dto0} \EE\big[(\sqrt{v_t/v_0}-1)^2 \big]$ is null by~(iv), we obtain
\[
\lim_{T\downarrow0} \frac{\widehat\Ss_T}{T^{\Hm}}
= \sum_{j=1}^N \frac{\rho_j}{2 v_0} \EE\left[ \lim_{T\downarrow0} \frac{\int_0^T \int_s^T   \Df^j_{s} v_y\dy\ds}{T^{\frac32+H}} \right]. 
\]

\subsubsection{Proof of Corollary~\ref{coro:SPXlimit}}
Since $\EE[u_s^{-p}] = \EE\left[\left(\frac{1}{T-s}\int_s^T v_r\dr \right)^{-\frac{p}{2}}\right]$, 
Lemmas~\ref{lemma:supexpB} and~\ref{lemma:oneoverM} show that assumptions~(i)-(iii) of Proposition~\ref{prop:SPXlimit} hold. 
Moreover~$v$ has almost sure continuous paths, hence $\sqrt{\frac{v_t}{v_0}}$ tends to one almost surely 
and~(iv) holds by reverse Fatou's lemma.
For $0\le s\le y$, \eqref{eq:Dv} implies
$$
\EE[\Df^1_s v_y] = v_0 (y-s)^{\Hm} \big(\chi \nu +\chib\eta\rho\big)
\qquad\text{and}\qquad
\EE[\Df^2_s v_y] = v_0 (y-s)^{\Hm}\chib\eta\rrho,
$$
and clearly $\EE[\Df^3_s v_y] =0$.
Therefore, Proposition \ref{prop:SPXlimit} entails
\[
\lim_{T\downarrow0} \frac{\widehat\Ss_T}{T^{\Hm}} = \frac{\rho_1}{2v_0}\frac{v_0(\chi\nu+\chib\eta\rho)}{\Hp(H+\frac32)} + \frac{\rho_2}{ 2v_0}\frac{v_0\chib\eta\rrho}{\Hp(H+\frac{3}{2})} 
= \frac{\rho_1 \chi\nu + \eta \chib (\rho_1 \rho+\rho_2\rrho)}{(2\Hp)(H+\frac32)}. 
\]


\subsection{Partial derivatives of the Black-Scholes function} \label{app:partiald}
Recall the Black-Scholes formula from~\eqref{eq:BSFormula} and assume that $\varsigma:=\sigma\sqrt{T-t}>0$ is fixed.
Then
$$
\pdx \BS(t,x,k,\sigma) = \E^x \Nn(d_+(x,k,\sigma))
\quad\text{and}\quad
\pdx^2 \BS(t,x,k,\sigma) = \E^x \left\{\Nn(d_+(x,k,\sigma)) + \frac{\Nn'(d_+(x,k,\sigma))}{\varsigma}\right\},
$$
such that (we drop the dependence on $t$ and $\sigma$ in the $G(\cdot)$ notation)
$$
G(x,k) := (\pdx^2 - \pdx) \BS(t,x,k,\sigma) = \frac{\E^{x -\half d_+(x,k,\sigma)^2}}{\varsigma\sqrt{2\pi}} 
= \frac{\E^{k-\half d_-(x,k,\sigma)^2}}{\varsigma\sqrt{2\pi}}.
$$
Define now
$$
f(x,k) := x- \frac{d_+(x,k,\sigma)^2}{2} = k - \frac{d_-(x,k,\sigma)^2}{2} = \frac{x+k}{2} - \frac{(x-k)^2}{2\varsigma^2} -\frac{\varsigma^2}{8}.
$$
We then have
\begin{align*}
\pdx f(x,k) & = \half - \frac{x-k}{\varsigma^2},\qquad
\pdk f(x,k) = \half+ \frac{x-k}{\varsigma^2},\\
\pdx^2 f(x,k) & = \pdk^2 f(x,k) = -\pd^2_{xk} f(x,k) = - \frac{1}{\varsigma^2}.
\end{align*}
For the partial derivatives, noting that~$\pdx G = \frac{1}{\varsigma\sqrt{2\pi}} \pdx f \E^{f}$ implies the ATM formula
\begin{align*}
\pdx G(x,x) =  \frac{1}{2\varsigma\sqrt{2\pi}}\exp\left\{x-\frac{\varsigma^2}{8}\right\},
\end{align*}
and furthermore, 
$$
\pdxk G = \frac{\E^f}{\varsigma\sqrt{2\pi}}  \Big( \pdxk f + \pdx f \pdk f \Big)
\qquad\text{and}\qquad
\pdxk G(x,x) = \frac{\E^{f(x,x)}}{\varsigma\sqrt{2\pi}} \left( \frac{1}{\varsigma^2} + \frac14 \right).
$$
We further define the partial derivatives appearing in the proof of Theorem~\ref{thm:skew}, after~\eqref{eq:SkewExpressionII},
\begin{align*}
L(x,k) &:= \left(\frac14 \partial_x + \half \partial_{xk}\right)G(x,k)
 = \frac{1}{\varsigma\sqrt{2\pi}} \E^{f(x,k)} \left(\frac14 + \frac14 \partial_k f(x,k) - \half (\partial_k f(x,k))^2 - \half \partial_{kk} f(x,k) \right),\\
L(x,x)&= \frac{\E^{f(x,x)}}{\varsigma\sqrt{2\pi}} \left( \frac14 + \frac{1}{2\varsigma^2} \right).&
\end{align*}
Using~$\partial_k f = 1-\partial_x f$ and~$\partial_{xk} f = - \partial_{xx} f = -\partial_{kk}f$ we compute
\begin{align*}
\partial_k L &=  \frac{\E^f}{\varsigma\sqrt{2\pi}} \left[ \frac34 \partial_x f - \frac54 (\partial_x f)^2+\half (\partial_x f)^3 - \frac54 \partial_{xx} f + \frac32 \partial_x f \partial_{xx}f \right],\\
\partial_k L(x,x) &= \frac{\E^{f(x,x)}}{\varsigma\sqrt{2\pi}}  \left( \frac18 + \frac{1}{2\varsigma^2}\right).
\end{align*}

Finally, we need the derivatives featuring in the proof of Theorem~\ref{thm:curvature}. We start with
\begin{align*}
\widetilde H = \partial_{xk} L &=  \frac{\E^f}{\varsigma\sqrt{2\pi}} \left[ \frac34 (\dxf)^2 -\frac54 (\dxf)^3 + \half (\dxf)^4 + \frac34 \dxxf - \frac{15}{4} \dxf \dxxf + 3 (\dxf)^2 \dxxf + \frac32 (\dxxf)^2 \right]\\
\partial_{xk} L(x,x) &= \frac{\E^{f(x,x)}}{\varsigma\sqrt{2\pi}} \left(\frac{1}{16} + \frac{3}{8\varsigma^2} + \frac{3}{2\varsigma^4}\right).
\end{align*}
The next partial derivative yields
\begin{align*}
\partial_{xxk} L & =  \frac{\E^f}{\varsigma\sqrt{2\pi}} \left[ \frac34 (\dxf)^3 - \frac54 (\dxf)^4 + \half (\dxf)^5 + \frac94 \dxxf \dxf - \frac{15}{2} \dxxf (\dxf)^2 \right.\\
& \qquad\qquad \left.- \frac{15}{4} (\dxxf)^2 + 5\dxxf (\dxf)^3 + \frac{15}{2}(\dxxf)^2 \dxf
\right],\\
\partial_{xxk} L(x,x) &= \frac{\E^{f(x,x)}}{\varsigma\sqrt{2\pi}} \left(\frac{1}{32} + \frac{1}{8\varsigma^2} \right),
\end{align*}
and one last time to reach
\begin{align*}
\partial_{xxxk} L & =  \frac{\E^f}{\varsigma\sqrt{2\pi}} \bigg[\frac34 (\dxf)^4 - \frac54 (\dxf)^5 + \half (\dxf)^6 + \frac92 \dxxf (\dxf)^2 - \frac{25}{2} \dxxf (\dxf)^3 \\
&\qquad - \frac{75}{4} (\dxxf)^2 \dxf + \frac{15}{2} \dxxf (\dxf)^4 + \frac{45}{2} (\dxxf)^2(\dxf)^2 + \frac94 (\dxxf)^2 + \frac{15}{2} (\dxxf)^3
\bigg],\\
\partial_{xxxk} L(x,x) & =  \frac{\E^{f(x,x)}}{\varsigma\sqrt{2\pi}} \bigg(\frac{1}{64} - \frac{1}{32\varsigma^2} - \frac{3}{2\varsigma^4} - \frac{15}{2\varsigma^6} \bigg).
\end{align*}
We can conclude that:
\begin{align*}
H(x,x) = (\partial_{xxxk} - \partial_{xxk})L(x,x)= \frac{\E^{f(x,x)}}{\varsigma\sqrt{2\pi}} \left(-\frac{1}{64} - \frac{5}{32\varsigma^2} - \frac{3}{2\varsigma^4}- \frac{15}{2\varsigma^6} \right).
\end{align*}


\bibliographystyle{siam}
\bibliography{Bibliography}

\end{document}